\documentclass[11pt,onecolumn]{IEEEtran}
\usepackage[utf8]{inputenc}
\usepackage[T1]{fontenc}
\usepackage{amsmath,amssymb,amsthm,bm}
\usepackage[noadjust]{cite}
\usepackage{booktabs}
\usepackage{float}
\usepackage{needspace}
\usepackage[mathscr]{eucal}
\usepackage[colorlinks=true,allcolors=blue]{hyperref}
\usepackage[capitalise,noabbrev,nameinlink]{cleveref}

\newtheorem{theorem}{Theorem}
\newtheorem{lemma}{Lemma}
\newtheorem{definition}{Definition}
\newtheorem{remark}{Remark}
\newtheorem{proposition}{Proposition}
\newtheorem{corollary}{Corollary}
\newtheorem{conjecture}{Conjecture}

\newcommand{\F}{\mathbb{F}}
\newcommand{\Tr}{\operatorname{Tr}}
\newcommand{\Span}{\operatorname{Span}}
\newcommand{\lcmop}{\operatorname{lcm}}
\newcommand{\RS}{\operatorname{RS}}
\newcommand{\crf}{\mathscr R}
\newcommand{\Pone}{\mathbb P^1}
\newcommand{\set}[1]{\left\{#1\right\}}
\newcommand{\mcff}{m_{\mathrm{cff}}}
\newcommand{\ellmin}{\ell_{\min}}
\newcommand{\primorial}[1]{p_{#1}^{\#}}
\begin{document}

\title{The Minimum Subpacketization of Reed-Solomon Codes with Linear Repair Achieving the Cut-Set Bound}

\author{%
Weijun Fang, Jing Qiu, Shu-Tao Xia, and Fang-Wei Fu%
\thanks{This work was supported in part by the National Key Research and Development Program of China under Grants 2022YFA1004900 and 2022YFA1005000; by the National Natural Science Foundation of China under Grants 62571298, 62571301, and 62371259;
by the Fundamental Research Funds for the Central Universities of China (Nankai University); by the Nankai Zhide Foundation; and by the China Postdoctoral Science Foundation under Grant 2025M773088. (Corresponding author: Jing Qiu.)}
\thanks{Weijun Fang is with the State Key Laboratory of Cryptography and Digital Economy Security, the Key Laboratory of Cryptologic Technology and Information Security, Ministry of Education, and the School of Cyber Science and Technology, Shandong University, Qingdao 266237, China (e-mail: fwj@sdu.edu.cn).}
\thanks{Jing Qiu and Shu-Tao Xia are with the Tsinghua Shenzhen International Graduate School, Tsinghua University, Shenzhen, Guangdong 518055, China 
(e-mails: \{jing\_qiu, xiast\}@sz.tsinghua.edu.cn).}
\thanks{Fang-Wei Fu is with the Chern Institute of Mathematics and LPMC, Nankai University, Tianjin 300071, China (e-mail: fwfu@nankai.edu.cn).}%
}
\maketitle

\begin{abstract}
In this paper, we study the minimum subpacketization of $(n,k)$
Reed-Solomon (RS) codes over $\mathbb{F}_{q^\ell}$
admitting cut-set-optimal exact $\mathbb{F}_q$-linear
repair of every coordinate from any $d$ helpers.
For $3\le k<d\le n-1$, write $s=d+1-k$ and
denote this minimum by $\ellmin(n,k,d;q)$.
Using divisibility and relative-degree constraints
on cross-ratio fields of evaluation sets, we
establish $\ellmin(n,k,d;q)\ge sG_s(d+1)$, where
$G_s(m)$ is the minimum product of $m$ pairwise
coprime integers, each at least $s$.
Let $\Delta_s(n,d)$ be the minimum of
$\lcmop(a_1,\ldots,a_n)$ over positive integers
$a_1,\ldots,a_n$ satisfying
$\lcmop(a_j:j\in J\cup\{i\})\ge
s\lcmop(a_j:j\in J)$ for every $d$-subset
$J\subseteq[n]$ and every $i\in[n]\setminus J$.
We construct RS codes with subpacketization
$s\Delta_s(n,d)$ and further prove that
$\ellmin(n,k,d;q)=s\Delta_s(n,d)=sG_s(n)$ whenever
$d(d+2)\geq 4(n+1)$, for every prime power $q$.
In particular, this determines the minimum subpacketization in the full-helper case $d=n-1$, resolving the corresponding exact linear-repair case of Open Problem 20 posed by Ramkumar et al. in their survey
on codes for distributed storage. 
We further give a finite formula for $G_s(n)$
and determine the exact threshold for its
closed-form expression in terms of the first
$n$ primes.
For fixed $k$ and $d$, we prove
$\log\ellmin(n,k,d;q)=
\Theta_{k,d}(\log n\log\log n)$ as
$n\to\infty$, using divisor-function estimates
and cover-free families.
An affine-plane construction shows that
$\Delta_s(n,d)<G_s(n)$ for some parameters,
so $sG_s(n)$ is not a universal formula.
We conjecture that
$\ellmin(n,k,d;q)=s\Delta_s(n,d)$
for every prime power $q$ and all
$3\le k<d\le n-1$.
\end{abstract}

\begin{IEEEkeywords}
Reed-Solomon codes, linear repair, cut-set bound,
subpacketization, cross-ratios.
\end{IEEEkeywords}

\section{Introduction}

Scalar maximum-distance-separable (MDS) codes are a basic model for
storage-optimal erasure protection.  An $(n,k)$ scalar MDS code over a
field $K$ stores one $K$-symbol in each of $n$ nodes and allows the
original $k$ information symbols to be recovered from any $k$ nodes.
Equivalently, it tolerates any $n-k$ node erasures with the minimum
possible redundancy.  This global recovery property, however, does not
ensure communication-efficient repair of an individual failed node.
A direct repair procedure downloads one complete $K$-symbol from each
of $k$ surviving nodes and reconstructs the entire codeword, even
though only one symbol is missing.

This inefficiency motivates the array-code viewpoint.  In an MDS array
code over a base field $F$, each node stores a vector of $\ell$
subsymbols, and a helper may transmit functions of its stored vector
rather than its entire contents.  The regenerating-code framework of
Dimakis \emph{et al.} \cite{Dimakis2010} formalizes the resulting
tradeoff between storage and repair communication.  If a failed node
is repaired from $d$ helpers and $\gamma$ denotes the total number of
downloaded $F$-symbols, then every MDS array code satisfies the cut-set
bound
\begin{equation}
	\gamma\ge \frac{d\ell}{d+1-k}.
	\label{eq:cutset}
\end{equation}
An MDS array code attaining equality in
\eqref{eq:cutset} is a minimum-storage regenerating (MSR) code.  Much of the literature on
exact-repair MSR codes has focused on MDS array codes; representative
constructions include the product-matrix construction
\cite{RashmiShahKumar2011} and high-rate constructions with optimal
repair bandwidth \cite{YeBarg2017}.  The node size $\ell$, called the
\emph{subpacketization}, determines the granularity available to a
repair scheme and is a principal measure of its algebraic and
implementation complexity.  Fundamental lower bounds on the
subpacketization of MSR array codes were developed in
\cite{GoparajuTamoCalderbank2014,AlrabiahGuruswami2021,
	BalajiVajhaKumar2022}.  On the construction side, Liu, Li, and Tang
constructed high-rate optimal-access MSR codes over small finite fields
\cite{LiuLiTang2018}, while Li, Tang, and Tian developed a generic
transformation that, when applied iteratively, enables optimal repair
of all nodes with controlled subpacketization \cite{LiTangTian2018}.  Liu, Li, and Tang later extended
this line to MDS array codes supporting multiple repair degrees, with
smaller subpacketization than earlier multi-degree constructions
\cite{LiuLiTang2023}.  More recently, Wang, Li, Hu, and Ye constructed
MSR codes with $d=k+1$ and subpacketization $2^{\lceil n/3\rceil}$ while maintaining a
field size linear in $n$ \cite{WangLiHuYe2023}.  With $s=d+1-k$, Li, Wang, Hu, and Ye
subsequently gave, for every admissible repair degree, optimal-access
MSR codes over fields of size $O_s(n)$ with subpacketization
$\ell=s^{\lceil n/s\rceil}$
and also non-optimal-access MSR codes with the smaller subpacketization
$s^{\lceil n/(s+1)\rceil}$ \cite{LiWangHuYe2024}.
These developments highlight the central role of subpacketization in
general MSR array-code design and motivate the corresponding minimum
subpacketization problem for the much more rigid class of scalar
Reed-Solomon codes.

Reed-Solomon (RS) codes \cite{ReedSolomon1960} are one of the most important families of scalar MDS codes and are widely used in communication and storage systems. Let $F=\F_q$ and $K=\F_{q^\ell}$. Although an RS code is a scalar code over $K$, after fixing an $F$-basis of $K$, each symbol in $K$ can be represented as a vector in $F^\ell$. Thus, the same code can be viewed as an MDS array code over $F$ with subpacketization $\ell$. This scalar-to-array viewpoint has also been used more generally in the repair of scalar MDS codes \cite{ShanmugamEtAl2014}.

This viewpoint allows the repair of RS codes to be studied within the regenerating-code framework. However, unlike unrestricted MDS array codes, RS codes are subject to additional algebraic constraints arising from their scalar extension field and polynomial-evaluation structure. Consequently, the repair bandwidth and subpacketization of RS codes have attracted substantial attention. In this paper, we require exact $F$-linear repair of every coordinate from every set of $d$ helper nodes with bandwidth meeting the cut-set bound in \eqref{eq:cutset}, and seek the minimum possible extension degree $\ell=[K:F]$.

\subsection{Linear Repair of Reed-Solomon Codes}
The systematic study of bandwidth-efficient linear repair of scalar RS
codes was initiated by Guruswami and Wootters \cite{GW2017}. They
established a dual-code characterization of exact $F$-linear repair,
reducing the design of a repair scheme to the construction of suitable
collections of dual codewords, and used trace maps to obtain
nontrivial repair schemes for full-length RS codes. Building on this
framework, Dau and Milenkovic \cite{DauMilenkovic2017}, and later Dau
\emph{et al.} \cite{DauEtAl2021}, developed optimal repair schemes for
additional families of full-length RS codes, including constructions
based on subspace polynomials. Here optimality refers to their
full-length families and specified redundancy ranges under their
linear-repair lower bounds; it need not mean equality in
\eqref{eq:cutset}. Li, Wang, and Jafarkhani
\cite{LiWangJafarkhani2019} further investigated explicit tradeoffs
between repair bandwidth and subpacketization.

A major breakthrough was achieved by Tamo, Ye, and Barg
\cite{TYB2017,TYB2019}, who constructed RS codes attaining the cut-set
bound for the repair of any single failed node from any prescribed set
of $d$ helper nodes. With $s=d+1-k$, the subpacketization of their construction is
\[
\ell=s\prod_{i=1}^{n}\tilde p_i,
\]
where $\tilde p_1,\ldots,\tilde p_n$ are distinct primes satisfying
$\tilde p_i\equiv1\pmod s$. Their framework also extends to multiple
erasures. Subsequent work developed optimal-access and
error-resilient repair schemes \cite{ChenYeBarg2020}, while
near-optimal repair bandwidth with substantially smaller
subpacketization was studied in \cite{GuruswamiJiang2019}.

In our prior work \cite{Qiu2026}, we developed a
basis-transformation method for constructing the required repair
subspaces for every integer relative extension degree $p\ge s$,
without the congruence restriction $p\equiv1\pmod s$.
Applying this method to the Tamo-Ye-Barg framework, we obtained
RS codes attaining the cut-set bound using arbitrary distinct prime
degrees $p_i\ge s$.

These constructions establish the existence of cut-set-optimal RS
repair schemes, but do not determine the least possible subpacketization:
assigning distinct prime degrees is only one way to meet the repair
requirements.  The present paper studies the minimum over all RS
evaluation sets, allowing more general degree assignments and deriving
lower bounds that apply to every such code.

\subsection{Main Results}
For a prime power $q$ and integers $3\le k<d\le n-1$, let
$\ellmin(n,k,d;q)$ denote the smallest integer $\ell$ for which
an $(n,k)$ RS code over $\mathbb F_{q^\ell}$ admits cut-set-optimal
exact $\F_q$-linear repair of every coordinate from every
$d$-helper set.  Write
$s=d+1-k\ge2$.

Our results are expressed in terms of two arithmetic quantities.
For $m\ge1$, define
\[
G_s(m)=\min\left\{\prod_{i=1}^m g_i:
  g_i\in\mathbb Z,\ g_i\ge s,\
  \gcd(g_i,g_j)=1\text{ for }i\ne j\right\},
\]
with $G_s(0)=1$.  For $1\le t<N$, let $\Delta_s(N,t)$ be the
minimum of $\lcmop(a_1,\ldots,a_N)$ over positive integers
$a_1,\ldots,a_N$ satisfying
\begin{equation}
\lcmop(a_j:j\in J\cup\{i\})
\ge s\,\lcmop(a_j:j\in J)
\label{eq:intro-Delta-feasible}
\end{equation}
for every $i\in[N]$ and every $t$-subset
$J\subseteq[N]\setminus\{i\}$.
The first quantity minimizes a product of pairwise coprime degrees;
the second permits shared factors, subject to the relative-degree
requirement imposed by every failed-coordinate/helper-set pair.
In particular, $\Delta_s(n,d)\le G_s(n)$.

Our principal result determines the minimum subpacketization exactly.
For every prime power $q$ and all $3\le k<d\le n-1$ satisfying
$d(d+2)\geq 4(n+1)$, or equivalently
$d\ge\sqrt{4n+5}-1$, we prove
\begin{equation}
\ellmin(n,k,d;q)=s\Delta_s(n,d)=sG_s(n).
\label{eq:intro-main-exact}
\end{equation}
Thus the exact formula applies even when the number of helpers is
of order $\sqrt n$.
For full-helper repair ($d=n-1$), this yields
\[
\ellmin(n,k,n-1;q)=(n-k)G_{n-k}(n),
\qquad 3\le k<n-1.
\]
This settles the exact $\F_q$-linear repair case, in the stated
dimension range, of Open Problem~20 posed by Ramkumar \emph{et al.}
\cite[Open Problem~20]{RamkumarEtAl2022}.

The main ingredients of the proof and our further contributions are as follows.
\begin{enumerate}
\item Our lower bounds come from the divisibility and relative-degree
constraints imposed by cut-set-optimal repair on the fields generated
by projective cross-ratios of the evaluation points.  Applying these
constraints to a punctured $(d+1,k)$ RS code in which every
coordinate admits cut-set-optimal full-helper repair yields
\[
\ellmin(n,k,d;q)\ge sG_s(d+1),
\]
which strengthens the lower bound of Tamo, Ye, and Barg
\cite[Theorem~5]{TYB2017} for scalar MDS codes when specialized to RS codes.
By using the constraints from all failed-coordinate/helper-set pairs,
we first establish
$\ellmin(n,k,d;q)\ge s\Delta_s(n,d)$ for $2d\geq n+3$.
To go beyond this range, we combine compositum identities for
cross-ratio fields with a finer analysis of the private and shared
prime factors of their degrees.  Consequently, we establish
$\ellmin(n,k,d;q)\ge sG_s(n)$ whenever
$d(d+2)\geq 4(n+1)$.

\item For the matching upper bound, we extend the distinct-prime
constructions to general feasible degree assignments.
Our prior work \cite[Theorem~3]{Qiu2026} provides the necessary
repair-subspace decomposition for every integer relative degree
$p\ge s$.  We show that every tuple satisfying
\eqref{eq:intro-Delta-feasible} with $(N,t)=(n,d)$ yields an RS code
with subpacketization $s\lcmop(a_1,\ldots,a_n)$ that achieves the
cut-set bound for every repair configuration.  Unlike the earlier
constructions with pairwise distinct prime degrees, these tuples
may contain composite degrees with shared prime factors.  Minimizing over all feasible tuples, we obtain
\[
\ellmin(n,k,d;q)\le s\Delta_s(n,d)
\]
for every admissible parameter set and every prime power $q$.
Together with the lower bounds, this proves \eqref{eq:intro-main-exact}.

\Needspace{8\baselineskip}
\item We also study the arithmetic quantity $G_s(m)$ appearing in the
exact formula. In addition to a finite formula valid for all $m$,
we determine the exact threshold for the closed-form expression given below. Let $p_j$ denote the $j$-th prime, and write
$\primorial{m}=\prod_{j=1}^{m}p_j$ for $m\ge1$, with
$\primorial{0}=1$. Also put $A_s=\lcmop(1,\ldots,s-1)$.  For $s\ge3$, let $u_s$
and $v_s$ be the largest prime and the largest prime power,
respectively, strictly below $s$, and set $u_2=v_2=1$.  Then
\[
G_s(m)=A_sp_m^{\#}
\quad\Longleftrightarrow\quad
m\ge\pi(u_sv_s),
\]
where $\pi(x)$ counts the primes at most $x$.  In particular,
$G_2(m)=p_m^{\#}$ for every $m$, giving an explicit expression for
\eqref{eq:intro-main-exact} when $s=2$.

\item For fixed $k$ and $d$ (and any fixed prime power $q$), we further prove
\[
\log\ellmin(n,k,d;q)
=\Theta_{k,d}(\log n\log\log n)
\qquad(n\to\infty).
\]
The lower bound has leading coefficient $(d-3)/\log2$, and the
matching-order upper bound follows from cover-free families and
our degree construction. By comparison, for fixed $s$, the earlier constructions using the smallest admissible distinct prime degrees satisfy $\log\ell=\Theta(n\log n)$.

\end{enumerate}

Finally, the equality $\ellmin(n,k,d;q)=sG_s(n)$ does not hold
for all admissible parameters.  An affine-plane construction in
Appendix~\ref{app:affine-obstruction} exhibits parameters for which
$\Delta_s(n,d)<G_s(n)$ and hence
$\ellmin(n,k,d;q)<sG_s(n)$.  This distinction suggests that
$\Delta_s(n,d)$, rather than $G_s(n)$, governs the general problem.
We conjecture that
\[
\ellmin(n,k,d;q)=s\Delta_s(n,d)
\]
for every prime power $q$ and all $3\le k<d\le n-1$.

\subsection{Organization}
Section~\ref{sec:prelim} reviews linear repair and the
basis-transformation result used in our construction.
Section~\ref{sec:lower-bounds} develops the cross-ratio constraints
and proves the general and parameter-dependent lower bounds.
Section~\ref{sec:construction} constructs RS codes from feasible
degree tuples, establishes \eqref{eq:intro-main-exact}, and derives
exact formulas for $G_s(m)$.  Section~\ref{sec:fixed-parameters} determines the
fixed-parameter asymptotic order of the minimum subpacketization.
The appendices give auxiliary prime-product estimates and an
affine-plane construction exhibiting $\Delta_s(n,d)<G_s(n)$.

\section{Preliminaries}
\label{sec:prelim}

We write $\mathbb Z$ for the integers and $[m]=\{1,\ldots,m\}$
for any positive integer $m$. All logarithms without an
explicitly indicated base are natural. Table~\ref{tab:notation} summarizes
the principal notation; symbols specific to individual arguments
are defined where needed.

\begin{table}[H]
\centering
\caption{Principal notation.}
\label{tab:notation}
\begingroup
\small
\renewcommand{\arraystretch}{1.15}
\setlength{\tabcolsep}{8pt}
\newsavebox{\RSnotationbox}
\sbox{\RSnotationbox}{%
\begin{tabular}{@{}ll@{}}
\toprule
\textbf{Notation} & \textbf{Description} \\
\midrule
$n,k,d$ & Code length, dimension, and number of helpers. \\
$s$ & Repair parameter $d+1-k$. \\
$F$ & Base field $\mathbb F_q$ for linear repair. \\
$K$ & Code-symbol field $\mathbb F_{q^\ell}$. \\
$\ell$ & Subpacketization $[K:F]$. \\
$C$ & $(n,k)$ RS code over $K$. \\
$\Omega$ & Set of $n$ distinct evaluation points in $K$. \\
$\ellmin(n,k,d;q)$ & Minimum $\ell$ for cut-set-optimal $F$-linear repair of every coordinate from every $d$-helper set. \\
$p_j$, $\primorial{m}$ & $j$-th prime; $\primorial{m}=\prod_{j=1}^m p_j$ ($\primorial{0}=1$). \\
$\crf_F(X)$ & Field generated over $F$ by the cross-ratios of $X$. \\
$G_s(m)$ & Minimum product of $m$ pairwise coprime integers $\ge s$. \\
$\Delta_s(N,t)$ & Minimum LCM over feasible $(N,t,s)$-tuples. \\
\bottomrule
\end{tabular}%
}
\ifdim\wd\RSnotationbox>\linewidth
  \PackageError{rsnotation}{Notation table width \the\wd\RSnotationbox\space exceeds \the\linewidth}{Shorten the cell text; do not permit an overflowing table.}%
\fi
\usebox{\RSnotationbox}
\endgroup
\end{table}

\subsection{Reed-Solomon Codes and Linear Repair}
Throughout, let $F=\F_q\subseteq K=\F_{q^\ell}$ and
let $\alpha_1,\ldots,\alpha_n$ be pairwise distinct elements of $K$, and
put $\Omega=\{\alpha_1,\ldots,\alpha_n\}$. The Reed-Solomon code with evaluation set
$\Omega$ is defined as
\[ C=\RS_K(n,k,\Omega)
 =\set{(f(\alpha_1),\ldots,f(\alpha_n)):
       f\in K[x],\ \deg f<k}.\]
Its Euclidean dual, with respect to the inner product
$\sum_{j=1}^n x_jy_j$, is the generalized RS code
\[ C^\perp=
 \set{(w_1g(\alpha_1),\ldots,w_ng(\alpha_n)):
       g\in K[x],\ \deg g<n-k},\]
where $w_j=\prod_{t\ne j}(\alpha_j-\alpha_t)^{-1}$.

Viewing $K$ as an $\ell$-dimensional vector space over $F$ turns each
code coordinate into a node containing $\ell$ $F$-symbols. An
$F$-linear repair scheme for a failed coordinate $i$ downloads
$F$-linear functionals of the symbols stored in a helper set
$H\subseteq[n]\setminus\{i\}$ and reconstructs the missing element of
$K$. For $|H|=d$, put
\[
s=d+1-k.
\]
The cut-set bound requires any exact repair scheme to download at
least $d\ell/s$ $F$-symbols in total. We call a repair scheme
cut-set optimal if equality holds. Throughout, we require
cut-set-optimal $F$-linear repair of every coordinate from every
$d$-helper set.

We use the following dual-code characterization of linear repair
from \cite[Theorem~4]{GW2017}.

\begin{theorem}[{\cite[Theorem~4]{GW2017}}]
\label{thm:GW}
Let $C\subseteq K^n$ be a scalar linear MDS code with $[K:F]=\ell$,
and fix $i\in[n]$. Then coordinate $i$ admits exact $F$-linear repair
from the other $n-1$ coordinates with bandwidth at most $b$ if and
only if there exist $\ell$ codewords
$\bm p^{(1)},\ldots,\bm p^{(\ell)}\in C^\perp$ such that
\[
\dim_F\Span_F\{p_i^{(t)}:1\le t\le\ell\}
=
\ell
\]
and
\[
\sum_{j\ne i}
\dim_F\Span_F\{p_j^{(t)}:1\le t\le\ell\}
\le b.
\]
\end{theorem}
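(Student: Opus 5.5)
The plan is to prove both directions through the trace map $\Tr=\Tr_{K/F}$. Two standard facts do most of the work. First, every $F$-linear functional $K\to F$ has the form $x\mapsto\Tr(\beta x)$ for a unique $\beta\in K$. Second, the trace bilinear form $(x,y)\mapsto\Tr(xy)$ is nondegenerate, so if $\gamma_1,\ldots,\gamma_\ell$ is an $F$-basis of $K$, then $c\in K$ is recovered from the values $\Tr(\gamma_t c)$ by means of the dual basis. The repair of coordinate $i$ will be encoded as a family of $\ell$ parity checks, and bandwidth will correspond to the $F$-dimensions of the spans of their coordinates.

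\emph{Sufficiency.} Given $\bm p^{(1)},\ldots,\bm p^{(\ell)}\in C^\perp$, each codeword $c\in C$ satisfies $p_i^{(t)}c_i=-\sum_{j\ne i}p_j^{(t)}c_j$. Applying $\Tr$ gives
\[
\Tr(p_i^{(t)}c_i)=-\sum_{j\ne i}\Tr(p_j^{(t)}c_j),\qquad 1\le t\le\ell .
\]
For each helper $j$, I choose an $F$-basis $\beta_{j,1},\ldots,\beta_{j,b_j}$ of $\Span_F\{p_j^{(t)}:1\le t\le\ell\}$, where $b_j$ is its dimension. Helper $j$ sends the $b_j$ symbols $\Tr(\beta_{j,r}c_j)\in F$. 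Since $\Tr$ is $F$-linear, every $\Tr(p_j^{(t)}c_j)$ is an $F$-combination of these symbols. Hence the values $\Tr(p_i^{(t)}c_i)$ are known for all $t$. The $p_i^{(t)}$ span $K$ over $F$, so they contain a basis of $K$, and $c_i$ is recovered by nondegeneracy. The total download is $\sum_{j\ne i}b_j\le b$.

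\emph{Necessity.} Suppose an exact $F$-linear repair scheme downloads $b_j$ functionals from each helper $j$, with $\sum_{j\ne i} b_j\le b$. By the first fact, helper $j$ sends $\Tr(\beta_{j,r}c_j)$ for some $\beta_{j,r}\in K$, $1\le r\le b_j$. Fix an $F$-basis $\gamma_1,\ldots,\gamma_\ell$ of $K$. The decoder is $F$-linear and outputs $c_i$, so each $\Tr(\gamma_tc_i)$ is an $F$-linear combination of the downloaded symbols, with coefficients independent of $c$. Absorbing the $F$-coefficients into the $\beta$'s, I obtain for each $t$ elements $u_j^{(t)}\in\Span_F\{\beta_{j,r}\}$ ($j\ne i$) and put $u_i^{(t)}=-\gamma_t$. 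With this choice, $\Tr\bigl(\sum_j u_j^{(t)}c_j\bigr)=0$ for every $c\in C$.

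The key step, and the only place where scalar $K$-linearity of $C$ is used, is to upgrade this to $\bm u^{(t)}\in C^\perp$. Replacing $c$ by $\lambda c$ for arbitrary $\lambda\in K$ gives $\Tr\bigl(\lambda\sum_j u_j^{(t)}c_j\bigr)=0$ for all $\lambda$. Nondegeneracy then forces $\sum_ju_j^{(t)}c_j=0$, so $\bm u^{(t)}\in C^\perp$. Set $\bm p^{(t)}=\bm u^{(t)}$. At coordinate $i$ the entries $-\gamma_t$ span $K$, which has dimension $\ell$. At each $j\ne i$ the span lies in $\Span_F\{\beta_{j,r}\}$, which has dimension at most $b_j$. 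Summing over $j\ne i$ gives the bandwidth bound.

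I expect no serious obstacle. The main points needing care are the translation of an arbitrary $F$-linear repair procedure into trace functionals, namely that the decoder's coefficients do not depend on $c$, and the scalar-multiplication argument that turns trace-orthogonality into genuine membership in $C^\perp$. The MDS hypothesis is not needed for this equivalence. It matters only later, when the characterization is combined with the cut-set bound \eqref{eq:cutset}.
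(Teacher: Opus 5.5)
Your argument is correct. Sufficiency works by having each helper send trace functionals of a basis of its span. Necessity works by writing each downloaded symbol as $\Tr(\beta c_j)$ and upgrading trace-orthogonality to membership in $C^\perp$: replace $c$ by $\lambda c$ for every $\lambda\in K$ and use nondegeneracy of the trace form.

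The paper gives no proof of its own here and simply imports the result from \cite[Theorem~4]{GW2017}. Your route is essentially the standard Guruswami--Wootters trace-duality proof, and your remark that the MDS hypothesis plays no role in the equivalence itself is also correct.
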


In particular, in the full-helper case $d=n-1$, if the repair is
cut-set optimal and $s=n-k$, then the codewords in
\cref{thm:GW} satisfy
\[
\Span_F\{p_i^{(t)}:1\le t\le\ell\}=K
\]
and
\[
\sum_{j\ne i}
\dim_F\Span_F\{p_j^{(t)}:1\le t\le\ell\}
=
\frac{(n-1)\ell}{s}.
\]

\begin{remark}A simple but useful consequence is the following. If coordinate $i$
admits cut-set-optimal repair from a $d$-helper set $H$,
then puncturing all coordinates outside $H\cup\{i\}$ yields a length-$(d+1)$ MDS
code with cut-set-optimal full-helper repair of the same coordinate.
The parameter $s=d+1-k$ remains unchanged.  This reduction allows us
to focus on full-helper repair in many of the subsequent lower-bound
arguments.
\end{remark}

As in the Introduction, $\ellmin(n,k,d;q)$ denotes the
minimum subpacketization for cut-set-optimal $F$-linear repair
of every coordinate from every $d$-helper set. Coordinate permutations
and nonzero coordinate scalings preserve $F$-linear repair bandwidth.
Consequently, a generalized RS code and the ordinary RS code obtained
by removing its coordinate multipliers are equivalent for the present
purpose.

\subsection{Finite-Field Degrees and Local Repair Subspaces}

Let $\overline F$ be an algebraic closure of $F$.
For $x_1,\ldots,x_t\in\overline F$ and every nonempty
$I\subseteq[t]$, the finite-field compositum formula
\cite[Chapter~2]{LidlNiederreiter1997} gives
\[
[F(x_i:i\in I):F]
=\lcmop([F(x_i):F]:i\in I).
\]
In particular,
\[
\F_{q^a}\F_{q^b}=\F_{q^{\lcmop(a,b)}}.
\]

For a finite extension $K/L$ and $x\in K$, write
\[
\Tr_{K/L}(x)=\sum_{j=0}^{[K:L]-1}x^{|L|^j}.
\]
Finite-field extensions are separable, so the pairing
$K\times K\longrightarrow L$, $(x,y)\mapsto\Tr_{K/L}(xy)$, is nondegenerate
\cite[Chapter~2]{LidlNiederreiter1997}.

In the construction of Tamo, Ye, and Barg \cite{TYB2019}, the
evaluation points are assigned distinct prime degrees $\tilde p_i$ satisfying
\[
 \tilde p_i\equiv1\pmod{s}.
\]
This congruence condition is used in constructing the local
repair-enabling subspaces and restricts the available degree choices.

Our prior work \cite[Theorem~3]{Qiu2026} removed this restriction by a
basis-transformation argument and established the following subspace
decomposition for every integer relative degree $r\ge s$.
\begin{theorem}[{\cite[Theorem~3]{Qiu2026}}]
\label{thm:Qiu-local}
Let $F_0\subset E_0=F_0(\alpha)\subset K_0$ be finite fields
with
\[
[E_0:F_0]=r\ge s=[K_0:E_0].
\]
Then there exists an explicitly constructible $F_0$-subspace
$S_0\subseteq K_0$ satisfying
\[
\dim_{F_0}S_0=r,
\qquad
K_0=\bigoplus_{u=0}^{s-1}\alpha^uS_0.
\]
\end{theorem}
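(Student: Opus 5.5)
The plan is an explicit construction in coordinates. Fix an $E_0$-basis $\beta_1,\ldots,\beta_s$ of $K_0$. Since $E_0=F_0(\alpha)$ has degree $r$, every window of $r$ consecutive powers $\alpha^{a},\alpha^{a+1},\ldots,\alpha^{a+r-1}$ is an $F_0$-basis of $E_0$. Hence the $rs$ elements $\alpha^e\beta_j$ ($0\le e<r$, $1\le j\le s$) form an $F_0$-basis of $K_0$.

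Arrange these elements in a single linear order by the position $p=(j-1)r+e\in\{0,\ldots,rs-1\}$. Cut the positions into $r$ consecutive windows $\{ts,\ldots,ts+s-1\}$ of length $s$, for $t=0,\ldots,r-1$. We want a vector $w_t$ such that $\alpha^u w_t$ is ``at position $ts+u$'' up to lower-order corrections, and we take $S_0=\Span_{F_0}\{w_0,\ldots,w_{r-1}\}$.

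\emph{Windows inside one block.} If window $t$ lies inside a single block $j$, i.e. it is $\{(j-1)r+e,\ldots,(j-1)r+e+s-1\}$ with $e+s-1<r$, put $w_t=\alpha^e\beta_j$. Then $\alpha^u w_t=\alpha^{e+u}\beta_j$ is exactly the basis vector at position $ts+u$.

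\emph{Straddling windows.} Because $r\ge s$, a window meets at most two consecutive blocks $j$ and $j+1$. This is the only place the hypothesis $r\ge s$ enters, and it is essential. If the window starts at exponent $e$ in block $j$ with $e+s-1\ge r$, put
\[
w_t=\alpha^{e}\beta_j+\alpha^{e-r}\beta_{j+1}.
\]
For $u<r-e$, the vector $\alpha^uw_t$ has leading term $\alpha^{e+u}\beta_j$, which lies in range, plus a term in block $j+1$. For $u\ge r-e$, its component in block $j+1$ is $\alpha^{e+u-r}\beta_{j+1}$, whose exponent lies in $[0,s-1]$. Its block-$j$ component $\alpha^{e+u}\beta_j$ then has out-of-range exponent and must be rewritten via the minimal polynomial of $\alpha$.

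The remaining step is to show that the $rs$ vectors $\alpha^uw_t$ are $F_0$-independent, which gives $\dim S_0=r$ and $K_0=\bigoplus_{u}\alpha^uS_0$ at once. I would write the transition matrix to the basis $\{\alpha^e\beta_j\}$ block by block. Within each straddling window, split the $s$ vectors into those ``owned'' by block $j$ ($u<r-e$) and those owned by block $j+1$ ($u\ge r-e$). For the first group the leading term is the in-range block-$j$ term. For the second group, project onto block $j+1$: there the exponents $e+u-r$ run over the $s-(r-e)$ smallest positions, so that projection is diagonal. The aim is to order rows and columns so that the matrix is block unitriangular with identity diagonal blocks.

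I expect this to be the main obstacle. The block-$j$ tails of the second group reduce via the minimal polynomial and can land on positions already claimed by other windows of block $j$. To handle this I would first try to process blocks from $j=s$ down to $j=1$, eliminating using the block-$(j+1)$ coordinates, which are already controlled. If that ordering fails, I would fall back on a nonsingularity argument for the generalized Vandermonde/Hankel matrix in powers of $\alpha$ that arises, valid because each block contributes exactly $r$ distinct consecutive exponents.
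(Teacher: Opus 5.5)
\textbf{There is a genuine gap, exactly at the step you flagged as the main obstacle.} With the fixed couplings $w_t=\alpha^e\beta_j+\alpha^{e-r}\beta_{j+1}$, the $rs$ vectors $\alpha^u w_t$ are in general \emph{not} $F_0$-independent. So neither a unitriangular ordering nor a Vandermonde/Hankel nonsingularity argument can succeed.

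Here is a concrete counterexample. Let $F_0=\F_2$, $r=5$, $s=2$, let $\alpha$ be a root of the irreducible polynomial $m(x)=x^5+x^4+x^2+x+1$, and take any $E_0$-basis $\beta_1,\beta_2$ of $K_0$. Your recipe gives $w_0=\beta_1$, $w_1=\alpha^2\beta_1$, $w_2=\alpha^4\beta_1+\alpha^{-1}\beta_2$, $w_3=\alpha\beta_2$, $w_4=\alpha^3\beta_2$. Then
\[
(w_0+w_1+w_2+w_3+w_4)+\alpha(w_0+w_2+w_4)=m(\alpha)\beta_1+\alpha^{-1}m(\alpha)\beta_2=0 .
\]
Moreover $w_0+w_2+w_4\neq0$, since its $\beta_2$-coefficient is $\alpha^{-1}(1+\alpha^4)\neq0$. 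Hence $S_0\cap\alpha S_0\neq0$ and the sum is not direct. The triangular structure you describe is also absent for a second reason: for $u<r-e$, the block-$(j+1)$ term $\alpha^{e+u-r}\beta_{j+1}$ has a negative exponent and must itself be reduced modulo $m$.

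\textbf{Why this happens.} Take a relation $\sum_{t,u}c_{ts+u}\,\alpha^u w_t=0$ and multiply its $\beta_j$-component by $\alpha^{(j-1)r}$. The condition becomes $m(x)\mid\sum_{p\in I_j}c_p x^p$, where $I_j$ is the union of the windows meeting block $j$. This reformulation does not depend on the choice of the $\beta_j$. Adjacent intervals $I_j,I_{j+1}$ overlap in the straddling window, and compatibility on that overlap is an algebraic condition on the coefficients of $m$ that irreducibility does not exclude.

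For $s=2$ and odd $r$, write $m=\sum_i m_i x^i$ monic. A nonzero relation exists if and only if $m_0=m_1m_{r-1}$. For $r=3$ this forces $m=(x+m_2)(x^2+m_1)$, so small checks pass; for $r=5$ it does not.

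\textbf{What is missing.} The $\beta_{j+1}$-part of each straddling generator must be adapted to $m$. For example, when $s=2$, replacing $\alpha^{-1}$ by $(\alpha+m_{r-1})^{-1}$ makes the system nonsingular. You would also need a proof that suitable couplings exist for every $r\ge s$. The paper does not reprove this statement; it cites the basis-transformation construction of \cite{Qiu2026} for it.
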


In Section~\ref{sec:construction}, this result will be applied
with the relative extension degree of a failed evaluation point over
the field generated by the chosen helpers.

\section{Lower Bounds on Subpacketization}
\label{sec:lower-bounds}
In this section, we establish lower bounds on the subpacketization. We first translate cut-set-optimal repair into degree constraints
on cross-ratio fields. Applying these constraints to a punctured
$(d+1,k)$ RS code yields a lower bound valid for all admissible
parameters.  We then combine the constraints from all helper sets.  For
$2d\geq n+3$, we obtain the lower bound $s\Delta_s(n,d)$.  Finally, a
more detailed analysis of the same degree system yields the stronger
bound $sG_s(n)$ whenever $d(d+2)\geq 4(n+1)$.

\subsection{Cross-Ratio Constraints}

We begin with the full-helper case, which will later be applied to a
failed coordinate together with a chosen helper set.  Throughout this
subsection, assume
\[
d=n-1,\qquad k\ge3,\qquad s=n-k\ge2,
\]
and write
\[
C=\RS_K(n,k,\Omega),
\qquad
\Omega=\{\alpha_1,\ldots,\alpha_n\}.
\]
Fix $i\in[n]$ and consider cut-set-optimal
$F$-linear repair of coordinate $i$.
We write $K[x]_{<s}$ for the space of
polynomials over $K$ of degree less than $s$, including the zero polynomial.
The following lemma records the dimension properties of the evaluation
spaces that follow from equality in the cut-set bound.  The underlying
argument is essentially the one used by Tamo, Ye, and Barg
\cite[proof of Theorem~5]{TYB2017}; we state it in the polynomial form
needed below.  See also \cite{TYB2019}.

\begin{lemma}\label{lem:evaluation-rigidity}
Suppose that the failed coordinate $i\in[n]$, with evaluation point
$\alpha_i$, admits cut-set-optimal $F$-linear repair from the other
$n-1$ coordinates. Then $s\mid\ell$, and there exists an
$\ell$-dimensional $F$-subspace $\mathcal P_i\subseteq K[x]_{<s}$
whose evaluation spaces, defined for
$y\in K$ by
\[
U_y:=\{f(y):f\in\mathcal P_i\},
\]
satisfy $U_{\alpha_i}=K$ and
\[
\dim_F U_y=\frac{\ell}{s}
\quad\text{for every }y\in\Omega\setminus\{\alpha_i\}.
\]
Moreover, for every $s$-subset
$B=\{b_1,\ldots,b_s\}\subseteq\Omega\setminus\{\alpha_i\}$
and every choice of $a_u\in U_{b_u}$,
$1\le u\le s$, there is an $f\in\mathcal P_i$
such that $f(b_u)=a_u$
for all $u$.
\end{lemma}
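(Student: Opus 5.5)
Start from \cref{thm:GW} in the full-helper case. The dual code $C^\perp$ consists of vectors $(w_jg(\alpha_j))_j$ with $\deg g<n-k=s$. Since nonzero coordinate scalings by $w_j$ do not change $F$-dimensions of spans, a cut-set-optimal scheme for coordinate $i$ gives polynomials $g_1,\ldots,g_\ell\in K[x]_{<s}$ with the following properties. First, $\dim_F\Span_F\{g_t(\alpha_i)\}=\ell$. Second,
\[
\sum_{j\ne i}\dim_F\Span_F\{g_t(\alpha_j)\}=(n-1)\ell/s.
\]
Let $\mathcal P_i=\Span_F\{g_1,\ldots,g_\ell\}$. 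Evaluation at $\alpha_i$ maps $\mathcal P_i$ onto $U_{\alpha_i}$, which has dimension $\ell$. Hence the $g_t$ are $F$-independent, so $\dim_F\mathcal P_i=\ell$ and $U_{\alpha_i}=K$. Moreover, $U_y=\Span_F\{g_t(y)\}$, so $\sum_{y\ne\alpha_i}\dim_F U_y=(n-1)\ell/s$.

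The key step is a lower bound $\dim_F U_{b_1}+\cdots+\dim_F U_{b_s}\ge\ell$ for every $s$-subset $B\subseteq\Omega\setminus\{\alpha_i\}$. Consider the $F$-linear map
\[
\phi_B:\mathcal P_i\to U_{b_1}\times\cdots\times U_{b_s},\qquad f\mapsto(f(b_1),\ldots,f(b_s)).
\]
A polynomial of degree $<s$ vanishing at $s$ distinct points is zero, so $\phi_B$ is injective. This gives $\ell\le\sum_u\dim_F U_{b_u}$.

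Averaging then forces equality everywhere. The set $\Omega\setminus\{\alpha_i\}$ has $n-1\ge s+1$ points; indeed $n-1=k+s-1\ge s+2$. Summing the inequality over all $s$-subsets, each point is counted $\binom{n-2}{s-1}$ times. This gives
\[
\binom{n-2}{s-1}\frac{(n-1)\ell}{s}\ge\binom{n-1}{s}\ell.
\]
The two sides are equal, so every $s$-subset inequality is tight.

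Now let $x_y=\dim_F U_y$. Any two points $y,y'$ can be completed by a common set of $s-1$ further points, which exists because $n-1\ge s+1$. Comparing the two tight equalities gives $x_y=x_{y'}$. Hence all $x_y$ are equal to $\ell/s$, which in particular yields $s\mid\ell$. Finally, for each $B$ the map $\phi_B$ is an injective map between $F$-spaces of the same dimension $\ell$, so it is surjective. This is exactly the interpolation statement: every choice of $a_u\in U_{b_u}$ is realized by some $f\in\mathcal P_i$.

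The argument is short. The only delicate points are the following.
\begin{itemize}
\item The reduction from dual codewords to polynomials must correctly absorb the multipliers $w_j$, so that $\mathcal P_i$ lives in $K[x]_{<s}$.
\item We must make sure $|\Omega\setminus\{\alpha_i\}|>s$, so that the equalization argument, which swaps one point of an $s$-subset, has room to work.
\end{itemize}
Both follow from $k\ge3$. The averaging/tightness step is the main idea.
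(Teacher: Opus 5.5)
Your proposal is correct and follows essentially the same route as the paper. Both arguments take the Guruswami--Wootters dual polynomials, use injectivity of evaluation on each $s$-subset, average over all $s$-subsets to force tightness, swap one point to equalize the dimensions, and conclude interpolation from ``injective between equal dimensions implies surjective.'' The one step you state too quickly is the equality $\sum_{j\ne i}\dim_F U_{\alpha_j}=(n-1)\ell/s$. \cref{thm:GW} only gives ``at most,'' and equality comes from applying the cut-set bound to the repair scheme these polynomials define, which is how the paper argues it.
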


\begin{proof}
By \cref{thm:GW} and the dual form of RS codes, choose
polynomials $f_1,\ldots,f_\ell\in K[x]_{<s}$ whose values at
$\alpha_i$ form an $F$-basis of $K$ and whose helper
evaluation spans have total dimension at most $(n-1)\ell/s$.
Nonzero dual multipliers do not change these dimensions. Set
\[
\mathcal P_i=\Span_F\{f_1,\ldots,f_\ell\}.
\]
Then $\dim_F\mathcal P_i=\ell$ and $U_{\alpha_i}=K$.

For each helper point $y\in\Omega\setminus\{\alpha_i\}$, put
\[
t_y=\dim_F U_{y}.
\]
The selected polynomials yield a repair scheme with bandwidth
$\sum_y t_y$. The preceding upper bound and the cut-set
bound therefore force
\[
\sum_{y\in\Omega\setminus\{\alpha_i\}}t_y=\frac{(n-1)\ell}{s}.
\]

Let $B$ be any $s$-subset of the helper points. A polynomial of degree
less than $s$ is determined by its values on $B$, so evaluation on
$B$ is injective on $\mathcal P_i$. Its image has dimension
$\ell$ and is contained in the space of all tuples with values in
$U_y$ at each $y\in B$. Hence
\begin{equation}
\ell\le\sum_{y\in B}t_y.
\label{eq:s-subset-local}
\end{equation}
Summing the nonnegative differences over all such $B$ gives
\[\sum_{\substack{B\subseteq\Omega\setminus\{\alpha_i\}\\ |B|=s}}
\left(\sum_{y\in B}t_y-\ell\right)=\binom{n-2}{s-1}\frac{(n-1)\ell}{s}
-\binom{n-1}{s}\ell
=0.\]
Thus $\sum_{y\in B}t_y=\ell$ for every $B$.

Given distinct helper points $y,y'$, choose an
$(s-1)$-subset $B'$ of the remaining helper points; this is possible
since $n-3\ge s-1$. Comparing
$B'\cup\{y\}$ and $B'\cup\{y'\}$ shows that $t_y=t_{y'}$.
The total sum then yields $t_y=\ell/s$ for every helper point.

Finally, fix $B=\{b_1,\ldots,b_s\}$. The space of all tuples with
entries in the corresponding $U_{b_u}$ has dimension $\ell$.
By injectivity, the evaluation tuples from $\mathcal P_i$ form a
subspace of the same dimension. The two spaces therefore coincide,
so every prescribed tuple is attained.
\end{proof}

We next derive relations among the helper evaluation spaces.
For a field $L$, the projective line $\Pone(L)$ consists of nonzero
vectors in $L^2$ modulo nonzero scalar multiples. For column vectors
$u=(u_0,u_1)^{\mathsf T}\in L^2$ and $v=(v_0,v_1)^{\mathsf T}\in L^2$, write
\[
\det(u,v):=\det[\,u\ v\,]=u_0v_1-u_1v_0.
\]
Here $\mathrm{GL}_2(L)$ is the group of invertible $2\times2$ matrices
over $L$, and $\mathrm{PGL}_2(L)$ is its quotient by nonzero scalar matrices.
For four
distinct points $x,z,a,b\in\Pone(\overline F)$, choose nonzero
column-vector representatives, denoted by the same symbols, and define
\[
\chi(x,z;a,b)
=
\frac{\det(x,b)\det(z,a)}
     {\det(x,a)\det(z,b)}.
\]
This is independent of the chosen representatives.  If
$T\in\mathrm{PGL}_2(\overline F)$ is induced by
$A\in\mathrm{GL}_2(\overline F)$, then
$\det(Au,Av)=\det(A)\det(u,v)$, and hence
\[
\chi(x,z;a,b)
=
\chi\bigl(T(x),T(z);T(a),T(b)\bigr).
\]
Under the affine chart $t\leftrightarrow[t:1]$, with
$\infty=[1:0]$, this becomes
\[
\chi(x,z;a,b)
=
\frac{(x-b)(z-a)}{(x-a)(z-b)}
\]
when all four points are affine.

Fix an $s$-subset
$B=\{b_1,\ldots,b_s\}\subseteq\Omega\setminus\{\alpha_i\}$.
For $u\in[s]$, let $h_u(x)\in K[x]_{<s}$ be the Lagrange basis polynomial
\[
h_u(x)
=
\prod_{\substack{1\le t\le s\\ t\ne u}}
\frac{x-b_t}{b_u-b_t}.
\]
Fix $y\in K$. For $f\in\mathcal P_i$,
$f(y)=\sum_{u=1}^sh_u(y)f(b_u)$. By
\cref{lem:evaluation-rigidity}, every tuple of values from the
spaces $U_{b_u}$ is realized by a polynomial in
$\mathcal P_i$. Therefore
\begin{equation}
U_y=\sum_{u=1}^sh_u(y)U_{b_u}.
\label{eq:space-interpolation}
\end{equation}
The sum on the right is a sum of $F$-subspaces of $K$.
Here $h_u(y)$ is a scalar and
\[
h_u(y)U_{b_u}=\{h_u(y)a:a\in U_{b_u}\}
\]
is its scalar multiple of the $F$-subspace $U_{b_u}$.

At a helper point outside $B$, equality of dimensions forces
the summands in \eqref{eq:space-interpolation} to coincide.
This is the helper-space alignment used in
\cite[proof of Theorem~5]{TYB2017}. Comparing the identities at
two helper points yields the following relation.

\begin{lemma}
\label{lem:helper-alignment}
Let $B=\{b_1,\ldots,b_s\}\subseteq\Omega\setminus\{\alpha_i\}$ and let
$z\in\Omega\setminus(B\cup\{\alpha_i\})$. Then
\[
h_1(z)U_{b_1}=\cdots=h_s(z)U_{b_s}.
\]
Denote this common subspace by $A=U_z$. For every
$x\in\Omega\setminus(B\cup\{\alpha_i,z\})$ and every distinct
$u,v\in[s]$,
\begin{equation}
\chi(x,z;b_u,b_v)A=A.
\label{eq:alignment-cr}
\end{equation}
\end{lemma}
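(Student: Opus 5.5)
First I would prove the alignment at $z$. By \eqref{eq:space-interpolation}, $U_z=\sum_{u=1}^s h_u(z)U_{b_u}$. Since $z\notin B$, every $h_u(z)$ is nonzero, so each summand $h_u(z)U_{b_u}$ is an $F$-subspace of dimension $\ell/s$. By \cref{lem:evaluation-rigidity}, $\dim_F U_z=\ell/s$ as well. Each summand lies in $U_z$ and has the same dimension, so each summand equals $U_z$. This gives $h_1(z)U_{b_1}=\cdots=h_s(z)U_{b_s}=A$.

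Next, take $x\in\Omega\setminus(B\cup\{\alpha_i,z\})$. The same argument at $x$ gives $h_u(x)U_{b_u}=U_x$ for every $u$. From the alignment at $z$, $U_{b_u}=h_u(z)^{-1}A$. Substituting, $U_x=\bigl(h_u(x)/h_u(z)\bigr)A$ for every $u\in[s]$. Fix distinct $u,v$ and equate the two expressions for $U_x$:
\[
\frac{h_u(x)}{h_u(z)}A=\frac{h_v(x)}{h_v(z)}A,
\]
so
\[
\frac{h_u(x)h_v(z)}{h_u(z)h_v(x)}A=A.
\]

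It remains to compute this scalar. In the ratio $h_u(x)/h_u(z)$, the denominators $\prod(b_u-b_t)$ cancel, leaving
\[
\frac{h_u(x)}{h_u(z)}=\prod_{t\ne u}\frac{x-b_t}{z-b_t}.
\]
Dividing this by the analogous expression for $v$, every factor with $t\notin\{u,v\}$ cancels. What remains is
\[
\frac{(x-b_v)(z-b_u)}{(z-b_v)(x-b_u)},
\]
which is exactly $\chi(x,z;b_u,b_v)$ in the affine chart. This proves \eqref{eq:alignment-cr}.

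The only point needing care is that all factors are nonzero, so the divisions are legitimate. This holds because $x$, $z$, and the $b_t$ are pairwise distinct elements of $\Omega$. The other ingredient is the dimension equality coming from \cref{lem:evaluation-rigidity}. I expect no real obstacle: the main step is the dimension-forcing argument, and it also applies at $x$ because $x$ is a helper point outside $B$.
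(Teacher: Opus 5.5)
Your proposal is correct and follows essentially the same route as the paper. You apply the dimension-forcing argument to the interpolation identity $U_y=\sum_u h_u(y)U_{b_u}$ first at $z$ and then at $x$. Then you use the Lagrange-basis computation to identify the scalar $\frac{h_u(x)h_v(z)}{h_u(z)h_v(x)}$ with $\chi(x,z;b_u,b_v)$. Your explicit check that every $h_u(z)$, $h_u(x)$ and every linear factor is nonzero is a detail the paper leaves implicit.
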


\begin{proof}
Taking $y=z$ in \eqref{eq:space-interpolation}, each nonzero scalar
multiple $h_u(z)U_{b_u}$ has dimension $\ell/s$, while their sum
$U_{z}$ also has dimension $\ell/s$.  Hence
\[
h_1(z)U_{b_1}=\cdots=h_s(z)U_{b_s}=U_{z}=A.
\]
The same argument at $x$ yields
\[
h_u(x)U_{b_u}=h_v(x)U_{b_v}.
\]
Using $h_u(z)U_{b_u}=h_v(z)U_{b_v}=A$, we obtain
\[
\frac{h_u(x)h_v(z)}{h_v(x)h_u(z)}A=A.
\]
A direct calculation using the Lagrange basis shows that
\[
\frac{h_u(x)h_v(z)}{h_v(x)h_u(z)}
=
\frac{(x-b_v)(z-b_u)}{(x-b_u)(z-b_v)}
=
\chi(x,z;b_u,b_v),
\]
which proves \eqref{eq:alignment-cr}.
\end{proof}

For a finite set $X\subseteq\Pone(\overline F)$, define
\[
\crf_F(X)
=
F\bigl(
\chi(x,z;a,b):x,z,a,b\in X\text{ distinct}
\bigr),
\]
with the convention $\crf_F(X)=F$ when $|X|\le3$.  We call $\crf_F(X)$ the
\emph{cross-ratio field} of $X$ over $F$.

Fix distinct $a,b,c\in\Pone(\overline F)$. For $y\in\Pone(\overline F)$, define
\[
\phi_{a,b,c}(y)
=
\bigl[
\det(y,b)\det(c,a):
\det(y,a)\det(c,b)
\bigr].
\]
The two homogeneous coordinate functions are linearly independent linear forms in $y$, so
$\phi_{a,b,c}$ is a projective transformation.  It satisfies
\[
\phi_{a,b,c}(a)=\infty,
\qquad
\phi_{a,b,c}(b)=0,
\qquad
\phi_{a,b,c}(c)=1.
\]
For $y\notin\{a,b,c\}$, its affine coordinate is
\begin{equation}
\phi_{a,b,c}(y)
=
\frac{\det(y,b)\det(c,a)}
     {\det(y,a)\det(c,b)}
=
\chi(y,c;a,b).
\label{eq:phi-cross-ratio}
\end{equation}
For $y\ne a$, we henceforth also use $\phi_{a,b,c}(y)$ to denote
its affine coordinate.

Although the cross-ratio field is defined using all quadruples, it is
generated after fixing any three points.

\begin{lemma}
\label{lem:three-anchor}
Let $X\subseteq\Pone(\overline F)$ contain distinct points $a,b,c$ and
write $\phi=\phi_{a,b,c}$. Then
\[
\crf_F(X)
=
F\bigl(\phi(y):y\in X\setminus\{a,b,c\}\bigr)
=
F\bigl(\chi(y,c;a,b):y\in X\setminus\{a,b,c\}\bigr).
\]
\end{lemma}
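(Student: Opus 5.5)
The plan is to prove the two inclusions separately, using the invariance of $\chi$ under $\mathrm{PGL}_2(\overline F)$ and the fact that $\phi=\phi_{a,b,c}$ is such a transformation. Put $L=F(\phi(y):y\in X\setminus\{a,b,c\})$. By \eqref{eq:phi-cross-ratio}, $\phi(y)=\chi(y,c;a,b)$, and this value is an affine point: $y\ne a$ gives $\phi(y)\ne\infty$, and $\chi$ of four distinct points is a nonzero, non-infinite element of $\overline F$. So the second equality is immediate. Also each $\phi(y)$ is a cross-ratio of four distinct points of $X$, which gives $L\subseteq\crf_F(X)$. If $|X|=3$, both sides equal $F$ and there is nothing further to prove.

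For the reverse inclusion, I would take arbitrary distinct $x,z,u,v\in X$. Invariance gives
\[
\chi(x,z;u,v)=\chi\bigl(\phi(x),\phi(z);\phi(u),\phi(v)\bigr).
\]
Every point of $\phi(X)$ lies in $\{\infty,0,1\}\cup\{\phi(y):y\in X\setminus\{a,b,c\}\}$, so every point of $\phi(X)$ has a representative vector in $L^2$: namely $(1,0)^{\mathsf T}$, $(0,1)^{\mathsf T}$, $(1,1)^{\mathsf T}$, or $(\phi(y),1)^{\mathsf T}$. Substituting these representatives, the expression
\[
\frac{\det(x,b)\det(z,a)}{\det(x,a)\det(z,b)}
\]
is a quotient of elements of $L$. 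The determinants are nonzero because the points are distinct, so the quotient lies in $L$. This holds for every quadruple, so $\crf_F(X)\subseteq L$.

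There is no real obstacle here. The only points needing care are that the $\mathrm{PGL}_2$-invariance of $\chi$ (already recorded in the paper) covers the point $\infty$, and that the representative-independence of $\chi$ lets us pick representatives with coordinates in $L$. Both follow from the homogeneous $\det$ definition, so no case split on whether one of $x,z,u,v$ is $a$, $b$, or $c$ is needed.
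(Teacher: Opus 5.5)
Your proposal is correct and follows the paper's proof essentially verbatim. You get $L\subseteq\crf_F(X)$ because each $\phi(y)$ is a cross-ratio of four distinct points of $X$, and you get the reverse inclusion from $\phi(X)\subseteq\Pone(L)$ together with projective invariance; your explicit representatives in $L^2$ only spell out what the paper leaves implicit. One slip: in your displayed quotient you should write $\det(\phi(x),\phi(v))\det(\phi(z),\phi(u))$ over $\det(\phi(x),\phi(u))\det(\phi(z),\phi(v))$, rather than reusing the anchor letters $a,b$.
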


\begin{proof}
Let
\[
L=F\bigl(\phi(y):y\in X\setminus\{a,b,c\}\bigr).
\]
By \eqref{eq:phi-cross-ratio}, every generator of $L$ is a cross-ratio
of points of $X$, so $L\subseteq\crf_F(X)$.  Conversely,
$\phi(X)\subseteq\Pone(L)$ because
$\phi(a)=\infty$, $\phi(b)=0$, $\phi(c)=1$, and all remaining images
belong to $L$.  For any four distinct points $x_1,x_2,x_3,x_4\in X$,
projective invariance implies
\[
\chi(x_1,x_2;x_3,x_4)
=
\chi\bigl(\phi(x_1),\phi(x_2);\phi(x_3),\phi(x_4)\bigr)\in L.
\]
Thus every generator of $\crf_F(X)$ lies in $L$, proving the result.
\end{proof}

We now apply this field to the helper configuration
$\Omega\setminus\{\alpha_i\}$.  The use of the stabilizer of an aligned
repair space to obtain a field-divisibility condition already appears
in the lower-bound argument of Tamo, Ye, and Barg
\cite[proof of Theorem~5]{TYB2017}.  Here the relevant field is identified
with the cross-ratio field of the helper points.

\begin{proposition}
\label{prop:helper-field-divides}
For every choice of $B$ and $z$ as in \cref{lem:helper-alignment}, the
corresponding space $A$ is a
$\crf_F(\Omega\setminus\{\alpha_i\})$-linear subspace of $K$.
Consequently,
\[
\bigl[\crf_F(\Omega\setminus\{\alpha_i\}):F\bigr]
\mid\frac{\ell}{s}.
\]
\end{proposition}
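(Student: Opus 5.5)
The plan is to show that $A$ is a vector space over a field that contains all cross-ratios of the helper points. That field is the multiplicative stabilizer of $A$ in $K$. Set
\[
L_A=\{\lambda\in K:\lambda A\subseteq A\}.
\]
Since $A$ is a nonzero $F$-subspace of $K$, the set $L_A$ contains $F$ and is closed under addition and multiplication, so it is an $F$-subalgebra of $K$. A finite subring of a field that contains $1$ is a field, so $L_A$ is an intermediate field $F\subseteq L_A\subseteq K$, and $A$ is an $L_A$-vector space by construction. It then suffices to show $\crf_F(\Omega\setminus\{\alpha_i\})\subseteq L_A$. Once that holds, $A$ is a $\crf_F(\Omega\setminus\{\alpha_i\})$-linear subspace. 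Taking $F$-dimensions gives $[\crf_F(\Omega\setminus\{\alpha_i\}):F]\mid\dim_F A=\ell/s$, using \cref{lem:evaluation-rigidity} for $\dim_F A=\dim_F U_z=\ell/s$.

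The key observation is that $A=U_z$ by \cref{lem:helper-alignment}. So $A$ depends only on $z$, not on the $s$-subset $B$. Fix $z$, and let $x,a,b$ be any three distinct helper points, all different from $z$. I will choose an $s$-subset $B\subseteq\Omega\setminus\{\alpha_i,z,x\}$ with $a,b\in B$. This needs $|\Omega\setminus\{\alpha_i,z,x\}|=n-3\ge s$, which is equivalent to $k\ge3$ because $n-1=k+s-1$. Applying \cref{lem:helper-alignment} to this $B$ and this $z$, with $b_u=a$ and $b_v=b$, gives $\chi(x,z;a,b)A=A$. Hence $\chi(x,z;a,b)\in L_A$ for all distinct helper points $x,a,b\neq z$.

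To conclude, I apply \cref{lem:three-anchor} with $X=\Omega\setminus\{\alpha_i\}$ and anchors $(a,b,c)=(a,b,z)$ for any two fixed helper points $a,b\ne z$. Then $\crf_F(X)$ is generated over $F$ by the elements $\chi(y,z;a,b)$ with $y\in X\setminus\{a,b,z\}$. All of these lie in $L_A$ by the previous paragraph, so $\crf_F(X)\subseteq L_A$. If $|X|\le3$, then $\crf_F(X)=F$ and the divisibility is trivial.

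The only step needing care is the second: a single fixed $B$ gives no information when $x$ lies in $B$. The independence of $A=U_z$ from $B$, together with the room guaranteed by $k\ge3$, removes this restriction.
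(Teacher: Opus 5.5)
Your proof is correct, but it reaches the full cross-ratio field by a different route from the paper.

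\textbf{The paper's route.} The paper fixes the single $s$-subset $B$ and lets $L$ be the field generated by the $\chi(x,z;b_u,b_v)$ with $x\in\Omega\setminus(B\cup\{\alpha_i,z\})$. This field lies in the stabilizer of $A$. The paper then proves $L=\crf_F(\Omega\setminus\{\alpha_i\})$ using the normalization $\phi=\phi_{b_1,b_2,z}$ of \cref{lem:three-anchor}. For $y\notin B$, the values $\phi(y)$ are generators of $L$ directly. For $y=b_u$ with $u\ge3$, it takes an auxiliary point $w\in\Omega\setminus(B\cup\{\alpha_i,z\})$; this is where $k\ge3$ is used. It then notes that $c_u=\chi(w,z;b_1,b_u)=(\phi(w)-\phi(b_u))/(1-\phi(b_u))$ lies in $L$, checks $c_u\ne1$, and solves for $\phi(b_u)$.

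\textbf{Your route.} You avoid this cross-ratio algebra by observing that $A=U_z$ does not depend on $B$. You then re-apply \cref{lem:helper-alignment} with a new $s$-subset for each generator $\chi(x,z;a,b)$, putting $a,b$ inside $B$ and $x$ outside it. Here $k\ge3$ enters as $n-3\ge s$.

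\textbf{Comparison.} Your argument is shorter and more transparent, but it uses the alignment relations for many different $B$. The paper's argument shows the slightly more local fact that the relations from one fixed $B$ already generate the entire helper cross-ratio field. Since \cref{lem:evaluation-rigidity} supplies interpolation for every $s$-subset, that extra locality is not needed for the proposition as stated.

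\textbf{Minor points.} Your reason that the stabilizer is a field (a finite subring of a field containing $1$) is equivalent to the paper's dimension argument. The case $|X|\le3$ you treat cannot occur, since $|X|=n-1=k+s-1\ge4$.
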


\begin{proof}
Let $L=F(\chi(x,z;b_u,b_v):
x\in\Omega\setminus(B\cup\{\alpha_i,z\}),
1 \leq u\ne v \leq s).
$
By \cref{lem:helper-alignment}, multiplication by each of these scalars
maps $A$ onto itself.  Consider
\[
S_A=\{\lambda\in K:\lambda A\subseteq A\}.
\]
Since $\dim_F A=\ell/s>0$, the space $A$ is nonzero.
The set $S_A$ contains $F$ and is closed under addition and
multiplication.  If $0\ne\lambda\in S_A$, then
$\dim_F(\lambda A)=\dim_F A$, so $\lambda A=A$ and
$\lambda^{-1}A=A$.  Hence $S_A$ is a subfield of $K$, and
$L\subseteq S_A$.

We claim that
\[
L=\crf_F(\Omega\setminus\{\alpha_i\}).
\]
The inclusion $L\subseteq\crf_F(\Omega\setminus\{\alpha_i\})$ is
immediate.  For the reverse inclusion, let
$\phi=\phi_{b_1,b_2,z}$.  By \cref{lem:three-anchor}, it is enough to
show that $\phi(y)\in L$ for every
$y\in\Omega\setminus\{\alpha_i,b_1,b_2,z\}$.  If $y\notin B$, then
\[
\phi(y)=\chi(y,z;b_1,b_2)\in L.
\]
Now let $y=b_u$ with $u\ge3$.  Since
\[
\bigl|\Omega\setminus(B\cup\{\alpha_i,z\})\bigr|=k-2\ge1,
\]
choose $w$ in this set.  Then
\[
\phi(w)=\chi(w,z;b_1,b_2)\in L,
\qquad
c_u:=\chi(w,z;b_1,b_u)\in L.
\]
Using
$\phi(b_1)=\infty$, $\phi(b_2)=0$, and $\phi(z)=1$, projective
invariance gives
\[
c_u
=
\frac{\phi(w)-\phi(b_u)}{1-\phi(b_u)}.
\]
If $c_u=1$, then $\phi(w)=1=\phi(z)$, contrary to the
injectivity of $\phi$ and $w\ne z$. Hence $c_u\ne1$, and
\[
\phi(b_u)=\frac{\phi(w)-c_u}{1-c_u}\in L.
\]
This proves the claim.  Therefore $A$ is
$\crf_F(\Omega\setminus\{\alpha_i\})$-linear.  Finally,
$\dim_F A=\ell/s$, so
\[
\frac{\ell}{s}
=
\bigl[\crf_F(\Omega\setminus\{\alpha_i\}):F\bigr]\dim_{\crf_F(\Omega\setminus\{\alpha_i\})}A,
\]
which proves the divisibility.
\end{proof}

The failed coordinate imposes one further constraint. Fix $B,z,A$ as in
\cref{lem:helper-alignment}. Put $\rho_1=1$
and, for $2\le u\le s$, define
\begin{equation}
\rho_u
=
\frac{h_u(\alpha_i)h_1(z)}{h_1(\alpha_i)h_u(z)}
=
\chi(\alpha_i,z;b_u,b_1).
\label{eq:rho-def}
\end{equation}

\begin{proposition}
\label{prop:failure-field-growth}
The elements $\rho_1,\ldots,\rho_s$ are linearly independent over
$\crf_F(\Omega\setminus\{\alpha_i\})$. Consequently,
\[
\bigl[
\crf_F(\Omega):
\crf_F(\Omega\setminus\{\alpha_i\})
\bigr]\ge s.
\]
\end{proposition}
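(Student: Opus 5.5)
The plan is to read off the linear independence of the $\rho_u$ from the directness of a decomposition of $K=U_{\alpha_i}$ into $s$ translates of the aligned space $A$. Throughout, write $L=\crf_F(\Omega\setminus\{\alpha_i\})$. By \cref{prop:helper-field-divides}, $A$ is an $L$-linear subspace of $K$.

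\textbf{Step 1: $K$ is a direct sum of scalar multiples of $A$.} Apply \eqref{eq:space-interpolation} at $y=\alpha_i$ and use $U_{\alpha_i}=K$ from \cref{lem:evaluation-rigidity}. This gives
\[
K=\sum_{u=1}^s h_u(\alpha_i)U_{b_u}.
\]
Since $\alpha_i\notin B$ and $z\notin B$, all values $h_u(\alpha_i)$ and $h_u(z)$ are nonzero. By \cref{lem:helper-alignment}, $U_{b_u}=h_u(z)^{-1}A$. Hence
\[
h_u(\alpha_i)U_{b_u}=\frac{h_u(\alpha_i)}{h_u(z)}A=c\,\rho_u A,
\qquad c:=\frac{h_1(\alpha_i)}{h_1(z)}\neq0,
\]
using the definition \eqref{eq:rho-def}. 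Multiplying through by $c^{-1}$ yields $K=\sum_{u=1}^s\rho_uA$. Each summand has $F$-dimension $\ell/s$, and there are $s$ of them, while $\dim_FK=\ell$. A dimension count therefore forces the sum to be direct:
\[
K=\bigoplus_{u=1}^s\rho_uA.
\]

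\textbf{Step 2: independence over $L$.} Suppose $\sum_u\lambda_u\rho_u=0$ with $\lambda_u\in L$. Pick any nonzero $a\in A$, which exists since $\dim_FA=\ell/s>0$. Because $A$ is $L$-linear, each $\lambda_ua$ lies in $A$. So $\sum_u\rho_u(\lambda_ua)=0$ is a relation among components of the direct sum in Step 1. Directness gives $\rho_u\lambda_ua=0$ for every $u$. Since $\rho_u\neq0$ (a cross-ratio of distinct points) and $a\neq0$, this forces $\lambda_u=0$ for every $u$.

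\textbf{Step 3: degree bound.} For $u\ge2$, $\rho_u=\chi(\alpha_i,z;b_u,b_1)$ is a cross-ratio of four distinct points of $\Omega$, so $\rho_u\in\crf_F(\Omega)$; also $\rho_1=1\in\crf_F(\Omega)$. Moreover $L\subseteq\crf_F(\Omega)$. Thus $\crf_F(\Omega)$ contains $s$ elements that are linearly independent over $L$, which gives
\[
\bigl[\crf_F(\Omega):L\bigr]\ge s.
\]

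The only delicate point is Step 1: one must keep the nonzero normalizing scalar $c$ and check that every Lagrange value involved is nonzero, so that the rescaling to $\sum_u\rho_uA$ is legitimate. After that, the argument is a dimension count combined with the $L$-linearity of $A$ supplied by \cref{prop:helper-field-divides}.
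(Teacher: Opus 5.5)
Your proof is correct and is essentially the paper's argument. The paper likewise writes $K=\bigoplus_u h_u(\alpha_i)U_{b_u}$ by a dimension count, identifies each summand with $\rho_u$ times a rescaling of $A$, and gets independence over $\crf_F(\Omega\setminus\{\alpha_i\})$ from directness and the linearity of $A$. The only cosmetic difference is that the paper rescales the aligned space to $A'=\frac{h_1(\alpha_i)}{h_1(z)}A$ instead of multiplying $K$ by $c^{-1}$, which is equivalent.
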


\begin{proof}
Setting $y=\alpha_i$ in \eqref{eq:space-interpolation} yields a
surjective map
\[
\bigoplus_{u=1}^sU_{b_u}\longrightarrow K,
\qquad
(v_1,\ldots,v_s)\longmapsto
\sum_{u=1}^sh_u(\alpha_i)v_u.
\]
Both spaces have $F$-dimension $\ell$, so this map is an isomorphism.
Equivalently,
\begin{equation}
K=\bigoplus_{u=1}^sh_u(\alpha_i)U_{b_u}.
\label{eq:failure-sum-cr}
\end{equation}
Set
\[
A'=\frac{h_1(\alpha_i)}{h_1(z)}A.
\]
By \cref{prop:helper-field-divides}, $A'$ is
$\crf_F(\Omega\setminus\{\alpha_i\})$-linear.  Since
$A=h_u(z)U_{b_u}$, \eqref{eq:rho-def} gives
\[
\rho_uA'=h_u(\alpha_i)U_{b_u},
\qquad 1\le u\le s.
\]
Thus \eqref{eq:failure-sum-cr} becomes
\[
K=\bigoplus_{u=1}^s\rho_uA'.
\]
If
\[
\sum_{u=1}^sc_u\rho_u=0,
\qquad
c_u\in\crf_F(\Omega\setminus\{\alpha_i\}),
\]
choose $0\ne a\in A'$. Multiplying the relation by $a$ and using
$c_ua\in A'$, the direct sum forces $\rho_uc_ua=0$ for every $u$.
Since $\rho_ua\ne0$, we obtain $c_u=0$.  Therefore the $\rho_u$ are
linearly independent over the helper cross-ratio field.  Since
$\rho_u\in\crf_F(\Omega)$, the relative degree is at least $s$.
\end{proof}

We have proved that, whenever a coordinate has cut-set-optimal
full-helper repair,
\[
\bigl[\crf_F(\Omega\setminus\{\alpha_i\}):F\bigr]
\mid\frac{\ell}{s},
\qquad
\bigl[
\crf_F(\Omega):\crf_F(\Omega\setminus\{\alpha_i\})
\bigr]\ge s.
\]
These two degree constraints will be used repeatedly below.

\subsection{A General Lower Bound}
To combine the degree constraints for different deleted
coordinates, we first record a simple identity for cross-ratio fields.

\begin{lemma}
\label{lem:two-deletion}
Let $A,B\subseteq\Pone(\overline F)$ be finite sets with
$|A\cap B|\ge3$. Then
\[
\crf_F(A\cup B)=\crf_F(A)\crf_F(B),
\]
where the right-hand side is the compositum inside $\overline F$.
In particular, if
\[
X=\{\alpha_1,\ldots,\alpha_m\}\subseteq\Pone(\overline F),
\qquad m\ge5,
\]
then, for distinct $i,j\in[m]$,
\[
\crf_F(X)
=
\crf_F(X\setminus\{\alpha_i\})
\crf_F(X\setminus\{\alpha_j\}).
\]
\end{lemma}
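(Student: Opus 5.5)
The proof rests on the three-anchor description of cross-ratio fields in \cref{lem:three-anchor}. Choose three distinct points $a,b,c\in A\cap B$ and put $\phi=\phi_{a,b,c}$. Since $A$, $B$, and $A\cup B$ all contain $a,b,c$, \cref{lem:three-anchor} applied to each of them gives
\[
\crf_F(A)=F\bigl(\phi(y):y\in A\setminus\{a,b,c\}\bigr),\qquad
\crf_F(B)=F\bigl(\phi(y):y\in B\setminus\{a,b,c\}\bigr),
\]
and
\[
\crf_F(A\cup B)=F\bigl(\phi(y):y\in (A\cup B)\setminus\{a,b,c\}\bigr).
\]
The generating set of the last field is the union of the generating sets of the first two. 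A field generated over $F$ by a union of two sets is the compositum of the fields generated by each set, so $\crf_F(A\cup B)=\crf_F(A)\crf_F(B)$ inside $\overline F$.

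The only point needing care is that the same anchoring map $\phi$ can be used for all three sets. This is exactly why the hypothesis requires at least three common points, and \cref{lem:three-anchor} allows any three points of the set as anchors. The degenerate convention $\crf_F(X)=F$ for $|X|\le3$ causes no trouble. If, say, $A=\{a,b,c\}$, then its generating set is empty, $\crf_F(A)=F$, and the identity still holds.

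For the special case, set $A=X\setminus\{\alpha_i\}$ and $B=X\setminus\{\alpha_j\}$. Since $i\ne j$, we have $A\cup B=X$. Also $A\cap B=X\setminus\{\alpha_i,\alpha_j\}$ has $m-2\ge3$ elements because $m\ge5$. The first part then gives the claimed identity directly. No step here is a real obstacle; the whole argument is the observation that \cref{lem:three-anchor} turns cross-ratio fields into fields generated by coordinates under one common chart.
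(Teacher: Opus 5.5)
Your proof is correct and is essentially the paper's argument. Like the paper, you anchor $\phi_{a,b,c}$ at three common points of $A\cap B$, apply \cref{lem:three-anchor} to $A$, $B$, and $A\cup B$, and observe that the generators of the last field are the union of the other two generating sets. The special case is then the same substitution $A=X\setminus\{\alpha_i\}$, $B=X\setminus\{\alpha_j\}$, with $|A\cap B|=m-2\ge3$.
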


\begin{proof}
The inclusion
$\crf_F(A)\crf_F(B)\subseteq\crf_F(A\cup B)$ follows from the
definition.  Choose distinct $a,b,c\in A\cap B$ and let
$\phi=\phi_{a,b,c}$.  By \cref{lem:three-anchor},
\begin{align*}
\crf_F(A)
&=F\bigl(\phi(y):y\in A\setminus\{a,b,c\}\bigr),\\
\crf_F(B)
&=F\bigl(\phi(y):y\in B\setminus\{a,b,c\}\bigr),\\
\crf_F(A\cup B)
&=F\bigl(\phi(y):y\in(A\cup B)\setminus\{a,b,c\}\bigr).
\end{align*}
The last field is generated by the first two, proving equality.  The
special case follows by taking
$A=X\setminus\{\alpha_i\}$ and
$B=X\setminus\{\alpha_j\}$.
\end{proof}

For a configuration admitting cut-set-optimal full-helper repair of every coordinate, the degree lost when one evaluation point is deleted is at least $s$. This leads to the following
arithmetic quantity.

\begin{definition}
\label{def:Gs}
For integers $s\ge2$, define $G_s(0)=1$.
For an integer $m\ge1$, minimize over integer tuples $(g_1,\ldots,g_m)$:
\[
G_s(m)
=
\min\left\{
\prod_{j=1}^m g_j:
 g_j\ge s,\ \gcd(g_i,g_j)=1\text{ for }i\ne j
\right\}.
\]
\end{definition}

Let $p_j$ be the $j$-th prime and write
\[
\primorial{m}=\prod_{j=1}^{m}p_j,
\qquad
\primorial{0}=1.
\]
In particular, $G_2(m)=\primorial{m}$, and
$G_s(m)\ge\primorial{m}$ for every $s\ge2$.

\begin{theorem}
\label{thm:local-G-lower}
Let $F=\F_q$ and $K=\F_{q^\ell}$.  Let $k\ge3$,
$s=n-k \ge2$,  and let $\Omega\subseteq K$ be an evaluation
set of size $n$.  Suppose that $\RS_K(n,k,\Omega)$ admits
cut-set-optimal $F$-linear repair of every coordinate from all other
$n-1$ coordinates. Then
\[
\ell\ge sG_s(n).
\]
More precisely, there exist pairwise coprime integers
$g_1,\ldots,g_n\ge s$ such that
\[
\prod_{i=1}^n g_i
\mid[\crf_F(\Omega):F]
\mid\frac{\ell}{s}.
\]
\end{theorem}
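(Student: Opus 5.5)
The plan is to take $g_i$ to be the relative degree lost when $\alpha_i$ is deleted. For each $i\in[n]$ put
\[
D=[\crf_F(\Omega):F],\qquad D_i=[\crf_F(\Omega\setminus\{\alpha_i\}):F],\qquad g_i=D/D_i .
\]
All these fields are finite subfields of $\overline F$ with $\crf_F(\Omega\setminus\{\alpha_i\})\subseteq\crf_F(\Omega)$. Hence $D_i\mid D$, and $g_i=[\crf_F(\Omega):\crf_F(\Omega\setminus\{\alpha_i\})]$ is a positive integer.

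Since every coordinate admits cut-set-optimal full-helper repair, \cref{prop:failure-field-growth} applied to coordinate $i$ gives $g_i\ge s$. Likewise \cref{prop:helper-field-divides} gives $D_i\mid \ell/s$ for every $i$.

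Next I would use the compositum identity. Since $k\ge3$ and $s\ge2$, we have $n=k+s\ge5$, so \cref{lem:two-deletion} applies to every pair $i\ne j$:
\[
\crf_F(\Omega)=\crf_F(\Omega\setminus\{\alpha_i\})\,\crf_F(\Omega\setminus\{\alpha_j\}).
\]
By the finite-field compositum formula this gives $D=\lcmop(D_i,D_j)$. Taking $i=1$, $j=2$ and using $D_1,D_2\mid \ell/s$ yields $D\mid \ell/s$.

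The key step is pairwise coprimality of the $g_i$, which I would prove one prime at a time. Fix a prime $p$ and write $v_p$ for the $p$-adic valuation. The identity $D=\lcmop(D_i,D_j)$ says
\[
v_p(D)=\max\bigl(v_p(D_i),v_p(D_j)\bigr)
\]
for every pair $i\ne j$. So at most one index $i$ can have $v_p(D_i)<v_p(D)$, that is, $p\mid g_i$. Thus each prime divides at most one $g_i$, and the $g_i$ are pairwise coprime.

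To finish, note that each $g_i$ divides $D$. Pairwise coprime divisors of $D$ have product dividing $D$. Therefore
\[
\prod_i g_i\mid D\mid \ell/s,
\]
and hence $\ell\ge s\prod_i g_i\ge sG_s(n)$ by \cref{def:Gs}.

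I expect no serious obstacle. The only points needing care are checking the hypothesis $n\ge5$ of \cref{lem:two-deletion}, and noting that relative degrees in a tower of finite fields are quotients of absolute degrees, so that $g_i=D/D_i$.
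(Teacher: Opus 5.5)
Your proposal is correct and follows the paper's proof essentially step for step: it uses the same $g_i=[\crf_F(\Omega):F]/[\crf_F(\Omega\setminus\{\alpha_i\}):F]$, invokes \cref{prop:helper-field-divides,prop:failure-field-growth} in the same way, and applies \cref{lem:two-deletion} (valid since $n=k+s\ge5$) to get $D=\lcmop(D_i,D_j)$. The only cosmetic difference is in the coprimality step: you argue prime by prime with valuations, whereas the paper writes $g_i=D_j/\gcd(D_i,D_j)$ and $g_j=D_i/\gcd(D_i,D_j)$ directly, and both follow at once from the lcm identity.
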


\begin{proof}
Put
\[
m=[\crf_F(\Omega):F],
\qquad
m_i=[\crf_F(\Omega\setminus\{\alpha_i\}):F],
\qquad
g_i=\frac{m}{m_i}.
\]
By \cref{prop:helper-field-divides,prop:failure-field-growth},
\[
g_i\ge s,
\qquad
m_i\mid\frac{\ell}{s}.
\]
For distinct $i,j$, \cref{lem:two-deletion} gives
\[
\crf_F(\Omega)
=
\crf_F(\Omega\setminus\{\alpha_i\})
\crf_F(\Omega\setminus\{\alpha_j\}).
\]
Finite subfields of $\overline F$ are determined by their degrees over
$F$, and the degree of their compositum is the least common multiple.
Hence
\[
m=\lcmop(m_i,m_j).
\]
Since $m_i,m_j\mid\ell/s$, this also shows $m\mid\ell/s$.  Moreover,
\[
g_i=\frac{m_j}{\gcd(m_i,m_j)},
\qquad
g_j=\frac{m_i}{\gcd(m_i,m_j)},
\]
so $\gcd(g_i,g_j)=1$.  Thus the $g_i$ are pairwise coprime.  Each
$g_i$ divides $m$, so their product divides $m$:
\[
\prod_{i=1}^ng_i\mid m\mid\frac{\ell}{s}.
\]
By the definition of $G_s(n)$,
$\prod_i g_i\ge G_s(n)$, and therefore $\ell\ge sG_s(n)$.
\end{proof}

A bound for arbitrary $d$ follows from the full-helper result by
puncturing to one failed coordinate and its helpers.

\begin{corollary}
\label{cor:one-puncturing}
Let $k\ge3$ and $s=d+1-k\ge2$. If an $(n,k)$ RS code admits
cut-set-optimal $F$-linear repair of every coordinate from every
$d$-helper set, then
\[
\ell\ge sG_s(d+1).
\]
\end{corollary}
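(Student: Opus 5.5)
The plan is to reduce to \cref{thm:local-G-lower} by puncturing, using the remark after \cref{thm:GW}. Fix any $(d+1)$-subset $T\subseteq[n]$; one exists since $d\le n-1$. Let $\Omega_T=\{\alpha_j:j\in T\}$, and consider the punctured code obtained by restricting codewords of $\RS_K(n,k,\Omega)$ to the coordinates in $T$. Restriction of evaluation vectors is again evaluation of polynomials of degree less than $k$, so this punctured code is exactly $\RS_K(d+1,k,\Omega_T)$ over the same field $K=\F_{q^\ell}$. In particular the subpacketization $\ell$ is unchanged.

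Next I will check that every coordinate $i\in T$ admits cut-set-optimal full-helper repair in the punctured code. The helper set $H=T\setminus\{i\}$ has size exactly $d$, so by hypothesis the original code repairs coordinate $i$ from $H$ with bandwidth $d\ell/s$. A linear repair scheme from $H$ uses only the symbols $c_j$ with $j\in H$. Any codeword of the punctured code lifts to a codeword of the original code with the same entries on $T$. Hence the same downloaded functionals and the same recovery map repair coordinate $i$ of the punctured code from its other $d=(d+1)-1$ coordinates, with bandwidth $d\ell/s$.

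For the punctured code the length is $n'=d+1$ and the parameter is $n'-k=d+1-k=s$, so this bandwidth equals the cut-set value $(n'-1)\ell/(n'-k)$ for full-helper repair. The hypotheses of \cref{thm:local-G-lower} then hold with $n'=d+1$, $k\ge3$ and $s\ge2$, giving $\ell\ge sG_s(d+1)$.

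The only point needing care is the lifting claim, namely that puncturing preserves the repair scheme. This follows directly because the repair procedure reads only the helper coordinates. Everything else is a direct substitution into \cref{thm:local-G-lower}.
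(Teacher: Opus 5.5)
Your proposal is correct and is essentially the paper's own proof. The paper likewise punctures to any $d+1$ coordinates, notes that every coordinate of the punctured $(d+1,k)$ RS code is repaired by its other $d$ coordinates, which form an allowed helper set in the original code, and then applies \cref{thm:local-G-lower} with $n=d+1$.
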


\begin{proof}
Puncture the code to any $d+1$ coordinates.  For every coordinate in
the punctured code, the other $d$ coordinates form an allowed helper
set in the original code.  Thus the punctured $(d+1,k)$ RS code admits
cut-set-optimal full-helper repair of every coordinate.  Applying
\cref{thm:local-G-lower} with $n=d+1$ proves the claim.
\end{proof}

\begin{remark}
Tamo, Ye, and Barg \cite[Theorem~5]{TYB2017} proved that every scalar
linear MDS code admitting cut-set-optimal $F$-linear repair of every
coordinate from every $d$-helper set satisfies
\[
\ell\ge\prod_{r=1}^{k-1}p_r.
\]
Any $d+1$ pairwise coprime integers at least $s\ge2$ collectively
contain at least $d+1$ distinct prime divisors, and hence
\[
G_s(d+1)\ge\prod_{r=1}^{d+1}p_r > \prod_{r=1}^{k-1}p_r.
\]
Thus, for RS codes, the general lower bound in \cref{cor:one-puncturing}
is stronger than that of \cite[Theorem~5]{TYB2017}.
\end{remark}

\subsection{An Improved Lower Bound for \texorpdfstring{$2d\geq n+3$}{2d >= n+3}}
\label{subsec:two-d-lower}

The preceding bound uses only one set of $d+1$ coordinates.  We now use
the constraints from all failed-coordinate/helper-set pairs.  When
$2d\geq n+3$, the resulting cross-ratio degrees admit an LCM
representation, leading to the lower bound $\ell\ge s\Delta_s(n,d)$.
We first record the degree constraints obtained from every punctured
repair configuration.

\begin{proposition}
\label{prop:global-crossratio}
Let $k\ge3$ and $s=d+1-k\ge2$. Suppose that
$\RS_K(n,k,\Omega)$ admits cut-set-optimal $F$-linear repair of every
coordinate from every $d$-helper set. Then, for every $x\in\Omega$ and
every $H\subseteq\Omega\setminus\{x\}$ with $|H|=d$,
\[
[\crf_F(H):F]\mid\frac{\ell}{s},
\qquad
[\crf_F(H\cup\{x\}):\crf_F(H)]\ge s.
\]
Moreover,
\[
[\crf_F(\Omega):F]\mid\frac{\ell}{s}.
\]
\end{proposition}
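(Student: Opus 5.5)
The first two claims come from puncturing and then applying the full-helper results. Fix $x\in\Omega$ and a $d$-subset $H\subseteq\Omega\setminus\{x\}$, and put $X=H\cup\{x\}$. Puncture $\RS_K(n,k,\Omega)$ to the $d+1$ coordinates indexed by $X$. This gives $\RS_K(d+1,k,X)$, and by the Remark after \cref{thm:GW} coordinate $x$ admits cut-set-optimal full-helper repair in it with the same $s=(d+1)-k$. The standing hypotheses of the cross-ratio subsection then hold with $n$ replaced by $d+1$: we have $k\ge3$, $s\ge2$, and $k<d$ gives $d+1\ge k+2$. So \cref{lem:evaluation-rigidity,lem:helper-alignment} apply to $X$ with failed point $x$. \cref{prop:helper-field-divides} gives $[\crf_F(X\setminus\{x\}):F]=[\crf_F(H):F]\mid \ell/s$, and \cref{prop:failure-field-growth} gives $[\crf_F(H\cup\{x\}):\crf_F(H)]\ge s$. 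One point to check is that the proof of \cref{prop:helper-field-divides} needs a helper $w$ outside $B\cup\{x,z\}$. Their number is $(d+1)-s-2=k-2\ge1$, so this is fine.

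For the global statement $[\crf_F(\Omega):F]\mid\ell/s$, I would cover $\Omega$ by $d$-subsets that overlap in at least three points, and then use \cref{lem:two-deletion} together with the compositum-degree formula. Concretely, enumerate $\Omega=\{\alpha_1,\ldots,\alpha_n\}$ and take $H_0=\{\alpha_1,\ldots,\alpha_d\}$. For each $j>d$, take $H_j=\{\alpha_1,\ldots,\alpha_{d-1},\alpha_j\}$, which is a $d$-subset. Every $H_j$ contains $\alpha_1,\ldots,\alpha_{d-1}$, and $d-1\ge k\ge3$. So repeated application of \cref{lem:two-deletion} gives
\[
\crf_F(\Omega)=\crf_F(H_0)\prod_{j>d}\crf_F(H_j).
\]
Here the intersection condition holds at each step because each partial union already contains $\{\alpha_1,\ldots,\alpha_{d-1}\}$. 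Each $H_j$ is a $d$-subset whose complement is nonempty, since $d\le n-1$. The first part therefore gives $[\crf_F(H_j):F]\mid \ell/s$ for every $j$. The degree of a compositum of finite fields is the lcm of the degrees, so $[\crf_F(\Omega):F]=\lcmop_j[\crf_F(H_j):F]$, and this lcm divides $\ell/s$.

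I do not expect a real obstacle. The only delicate step is confirming that the punctured configuration meets every size hypothesis used in the earlier lemmas, namely $k\ge3$, at least $s+2$ helpers, and at least one spare helper $w$. Each of these follows from $3\le k<d$.
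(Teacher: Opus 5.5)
Your proposal is correct and follows the paper's proof. For the two local claims you puncture to $H\cup\{x\}$ and invoke \cref{prop:helper-field-divides,prop:failure-field-growth}, exactly as the paper does, and your check that a spare helper $w$ exists ($k-2\ge1$) is the right one. The only difference is in the global divisibility: the paper notes that every four-point subset of $\Omega$ lies in some $d$-subset (since $d\ge4$), so $\crf_F(\Omega)$ is the compositum of all the $\crf_F(H)$, whereas you assemble it from a star of $d$-subsets sharing a common $(d-1)$-core via \cref{lem:two-deletion}. Both give the same lcm and hence the same divisibility.
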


\begin{proof}
Puncture to $H\cup\{x\}$.  The resulting length-$(d+1)$ RS code admits
cut-set-optimal full-helper repair of each of its coordinates.  Applying
\cref{prop:helper-field-divides,prop:failure-field-growth} with target
$x$ yields
\[
[\crf_F(H):F]\mid\frac{\ell}{s},
\qquad
[\crf_F(H\cup\{x\}):\crf_F(H)]\ge s.
\]

It remains to prove the global divisibility. Since $d\ge4$,
every four-point subset of $\Omega$ lies in some $d$-subset $H$.
Thus $\crf_F(\Omega)$ is the compositum of the fields
$\crf_F(H)$ over all $d$-subsets. By the finite-field
compositum formula and the first part,
\[
[\crf_F(\Omega):F]
=\lcmop_{\substack{H\subseteq\Omega\\ |H|=d}}
[\crf_F(H):F]
\mid\frac{\ell}{s}.
\]
\end{proof}

We next define the arithmetic quantity that will also appear in the
construction.

\begin{definition}
\label{def:Delta}
Let $s,N,t$ be integers with $s\ge2$ and $1\le t<N$. A tuple of positive integers
$(a_1,\ldots,a_N)$ is \emph{feasible for $(N,t,s)$} if
\[
\frac{\lcmop(a_j:j\in J\cup\{i\})}
     {\lcmop(a_j:j\in J)}\ge s
\]
for every $i\in[N]$ and every $t$-subset
$J\subseteq[N]\setminus\{i\}$.  Define $\Delta_s(N,t)$ as the minimum
of $\lcmop(a_1,\ldots,a_N)$ over all feasible tuples.
\end{definition}

Thus $\Delta_s(N,t)$ is the minimum lcm compatible with all the above
relative-degree inequalities.  Any $N$ pairwise coprime integers at
least $s$ form a feasible tuple, so
\begin{equation}
\Delta_s(N,t)\le G_s(N).
\label{eq:Delta-G-basic}
\end{equation}

The following elementary lemma converts identities of the form
$m_{S\cap T}=\lcmop(m_S,m_T)$ into a representation by coordinate
labels.

\begin{lemma}
\label{lem:canonical-lcm-representation}
Let $0\le r<n$, and let $m_T$ be a positive integer for every
$T\subseteq[n]$ with $|T|\le r$.  The following are equivalent:
\begin{enumerate}
\item for every such $S,T$,
\[
m_{S\cap T}=\lcmop(m_S,m_T);
\]
\item there exist positive integers $b_1,\ldots,b_n$ such that
\[
m_T=\lcmop(b_i:i\notin T)
\]
for every $|T|\le r$.
\end{enumerate}
One possible choice is
\[
b_i=\gcd(m_R:|R|\le r,\ i\notin R).
\]
\end{lemma}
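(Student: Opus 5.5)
The direction (2)$\Rightarrow$(1) is a direct check.  If $m_T=\lcmop(b_i:i\notin T)$ for every $|T|\le r$, then $S\cap T$ is again a set of size at most $r$, so $m_{S\cap T}$ is defined.  Moreover, the complement of $S\cap T$ is the union of the complements of $S$ and $T$.  Hence
\[
m_{S\cap T}=\lcmop(b_i:i\notin S\cup\text{ or }\notin T)
=\lcmop\bigl(\lcmop(b_i:i\notin S),\lcmop(b_i:i\notin T)\bigr)
=\lcmop(m_S,m_T).
\]
Here the first expression means the lcm over $i\in([n]\setminus S)\cup([n]\setminus T)$, and the middle step is the associativity of the lcm.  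The complement of $T$ is nonempty because $|T|\le r<n$, so all these lcms are over nonempty index sets.

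For (1)$\Rightarrow$(2), we verify the proposed choice $b_i=\gcd(m_R:|R|\le r,\ i\notin R)$.  This gcd is over a nonempty family, since $R=\emptyset$ qualifies.  First, (1) implies monotonicity: if $S\subseteq T$, then $m_S=m_{S\cap T}=\lcmop(m_S,m_T)$, so $m_T\mid m_S$.

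\emph{Easy inclusion.}  Fix $T$ with $|T|\le r$.  For every $i\notin T$, the set $R=T$ appears in the gcd defining $b_i$, so $b_i\mid m_T$.  Hence $\lcmop(b_i:i\notin T)\mid m_T$.

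\emph{Reverse inclusion.}  We must show $m_T\mid\lcmop(b_i:i\notin T)$, and we do this one prime at a time.  Fix a prime $p$ and let $e=v_p(m_T)$.  We need some $i\notin T$ with $v_p(m_R)\ge e$ for every $R$ with $|R|\le r$ and $i\notin R$.  Suppose not.  Then for each $i\notin T$ there is a set $R_i$, with $|R_i|\le r$ and $i\notin R_i$, such that $v_p(m_{R_i})<e$.

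Now intersect.  Put $R'=T\cap\bigcap_{i\notin T}R_i$.  Since $i\notin R_i$ for each $i\notin T$, no element outside $T$ survives, so $R'\subseteq T$ and $|R'|\le r$.  Applying (1) repeatedly gives
\[
m_{R'}=\lcmop\bigl(m_T,\ m_{R_i}:i\notin T\bigr).
\]
Each intermediate intersection is a subset of $T$, so it has size at most $r$ and (1) applies at every step.

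We need $R'=T$ to finish.  To force this, replace each $R_i$ by $R_i\cap T$ before intersecting.  The new set still has size at most $r$ and still omits $i$.  By (1),
\[
v_p(m_{R_i\cap T})=\max\bigl(v_p(m_{R_i}),\,v_p(m_T)\bigr)=e,
\]
so this replacement does not immediately help; the argument must be organized differently.

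The cleaner route is induction on $|T|$ downward, or equivalently an exchange argument.  By monotonicity, for $i\notin T$ the sets $R\ni\!\!\!/\; i$ include $T$, so $v_p(b_i)=\min_{R\not\ni i}v_p(m_R)\le e$.  Suppose $v_p(b_i)<e$ for all $i\notin T$.  Pick witnesses $R_i$ with $i\notin R_i$ and $v_p(m_{R_i})<e$.  Then
\[
m_{R_i\cap T}=\lcmop(m_{R_i},m_T)
\]
has $p$-valuation exactly $e$, and this yields no contradiction.  What is actually needed is a witness $R$ with $i\notin R$ whose union with $T$ is relevant.

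\textbf{Main obstacle.}  This step is where I expect genuine difficulty.  The intersection identity alone does not control unions.  Indeed, it seems (1) must fail to imply (2) in general unless one uses the full family: take $T=\emptyset$.  Then $e=v_p(m_\emptyset)$ is the maximum valuation, and we need some $i$ with $v_p(m_R)=e$ for every $R\not\ni i$.  Equivalently, $R_0=\bigcup\{R:v_p(m_R)<e\}$ must not be all of $[n]$.

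My first attempt would be to show that the family $\{R:v_p(m_R)\ge e'\}$ is closed under intersection, which follows from (1).  That family is a down-closed, intersection-closed collection.  I would then try to represent each such threshold family as ``$R$ misses a fixed set $D_{e'}$''.  This requires also closure under unions of size at most $r$, which I would attempt to extract from (1) applied to subsets.  If that extraction works, the $p$-adic parts give $b_i$ prime by prime, and the canonical gcd formula follows.
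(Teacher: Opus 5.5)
Your proof of (2)$\Rightarrow$(1), the monotonicity $S\subseteq T\Rightarrow m_T\mid m_S$, and the easy divisibility $\lcmop(b_i:i\notin T)\mid m_T$ are fine. However, the reverse divisibility, which is the real content of the lemma, is never proved, and the obstruction you report comes from your own construction. After choosing witnesses $R_i$ ($i\notin T$) with $i\notin R_i$ and $v_p(m_{R_i})<e$, you intersected them with $T$. That puts $m_T$ into the lcm supplied by (1), so the maximum valuation becomes $e$ and the strict inequality is lost.

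The missing step is to intersect only the witnesses: set $R=\bigcap_{i\notin T}R_i$, a nonempty family since $|T|\le r<n$.
\begin{itemize}
\item As you yourself noted, $R\subseteq T$ already holds, because each $i\notin T$ is excluded by $R_i$.
\item Every partial intersection lies inside some $R_i$, so it has size at most $r$ and (1) applies at each step. This gives $v_p(m_R)=\max_{i\notin T}v_p(m_{R_i})<e$.
\item But $R\subseteq T$ and monotonicity give $m_T\mid m_R$, hence $v_p(m_R)\ge e$, a contradiction.
\end{itemize}
So some $i\notin T$ has $v_p(b_i)=e$, which yields $m_T\mid\lcmop(b_i:i\notin T)$ prime by prime. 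This is exactly the paper's argument. You never need $R=T$, only $R\subseteq T$.

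Consequently, your ``main obstacle'' paragraph misdiagnoses the difficulty, and your suspicion that (1) might not imply (2) is unfounded. For $T=\varnothing$, the condition you need is not that the \emph{union} of the sets $R$ with $v_p(m_R)<e$ misses a point. It is that their \emph{intersection} is nonempty: a point lying in every such $R$ is exactly an $i$ with $v_p(m_R)=e$ for all $R\not\ni i$. Nonemptiness is immediate. By (1), this low-valuation family is closed under intersection, so its intersection $R^*$ (when the family is nonempty) again has $v_p(m_{R^*})<e$. Since $v_p(m_\varnothing)=e$, this forces $R^*\neq\varnothing$.

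Your threshold-family plan also targets the wrong structure. Intersection-closure of $\{R:v_p(m_R)\ge e'\}$ is trivial from down-closure. Moreover, in the desired representation that family consists of the sets \emph{not containing} $\{i:v_p(b_i)\ge e'\}$, not the sets missing a fixed set. So no union-closure needs to be extracted.
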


\begin{proof}
Assume the first condition.  If $R\subseteq V$, then
\[
m_R=\lcmop(m_R,m_V),
\]
so $m_V\mid m_R$.  Fix a prime $p$.
For a positive integer $a$, $v_p(a)$ denotes the
exponent of $p$ in the prime factorization of $a$. Write
\[
e_T=v_p(m_T),
\qquad
\lambda_i=\min_{\substack{|R|\le r\\ i\notin R}}e_R.
\]
For $i\notin T$, we have $\lambda_i\le e_T$.  We claim that equality
holds for at least one such $i$.  Otherwise, for each $i\notin T$ choose
$R_i$ with $i\notin R_i$ and $e_{R_i}=\lambda_i<e_T$, and put
\[
R=\mathop{\bigcap}\limits_{i\notin T}R_i.
\]
For every $i\notin T$, the point $i$ is missing from $R_i$, hence from
$R$; therefore $R\subseteq T$.  Repeated application of the first condition yields
\[
e_R=\max_{i\notin T}e_{R_i}<e_T.
\]
But $R\subseteq T$ implies $m_T\mid m_R$, hence $e_T\le e_R$, a
contradiction.  Thus
\[
e_T=\max_{i\notin T}\lambda_i.
\]
Since $v_p(b_i)=\lambda_i$, this is exactly the $p$-adic valuation
of $\lcmop(b_i:i\notin T)$.  As this holds for every prime,
\[
m_T=\lcmop(b_i:i\notin T).
\]

Conversely, if such a representation holds, then
\[
[n]\setminus(S\cap T)
=([n]\setminus S)\cup([n]\setminus T),
\]
and the desired lcm identity follows immediately.
\end{proof}

Suppose now that $2d\geq n+3$.  If $S,T\subseteq[n]$ have
$|S|,|T|\le n-d$, then at least three evaluation points remain after
deleting $S\cup T$.  This allows the preceding lcm representation to
be applied simultaneously to all relevant cross-ratio fields.

\begin{theorem}
\label{thm:two-d-lower}
Let $3\le k<d\le n-1$, put $s=d+1-k$, and suppose that
$\RS_K(n,k,\Omega)$ admits cut-set-optimal $F$-linear repair of every
coordinate from every $d$-helper set.  If
\[
2d\geq n+3,
\]
then there exist positive integers $b_1,\ldots,b_n$ such that, for any
$T\subseteq[n]$ with $|T|\le n-d$,
\begin{equation}
[\crf_F(\{\alpha_j:j\notin T\}):F]
=\lcmop(b_j:j\notin T)
\label{eq:two-d-lcm-system}
\end{equation}
and
\begin{equation}
\lcmop(b_1,\ldots,b_n)
=[\crf_F(\Omega):F]
\mid\frac{\ell}{s}.
\label{eq:two-d-global-lcm}
\end{equation}
Moreover, $(b_1,\ldots,b_n)$ is feasible for $(n,d,s)$.
Consequently,
\[
\ell\ge s\Delta_s(n,d),
\]
and hence
\[
\ellmin(n,k,d;q)\ge s\Delta_s(n,d).
\]
\end{theorem}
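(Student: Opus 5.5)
We will apply \cref{lem:canonical-lcm-representation} with $r=n-d$ to the system of cross-ratio degrees
\[
m_T=[\crf_F(\{\alpha_j:j\notin T\}):F],\qquad T\subseteq[n],\ |T|\le n-d .
\]
The first step is to verify condition~1 of that lemma. For $|S|,|T|\le n-d$, put $A=\{\alpha_j:j\notin S\}$ and $B=\{\alpha_j:j\notin T\}$. Then $A\cup B=\{\alpha_j:j\notin S\cap T\}$ and
\[
A\cap B=\{\alpha_j:j\notin S\cup T\}.
\]
The hypothesis $2d\ge n+3$ gives $|A\cap B|\ge n-2(n-d)=2d-n\ge3$. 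Hence \cref{lem:two-deletion} yields $\crf_F(A\cup B)=\crf_F(A)\crf_F(B)$. By the finite-field compositum formula this means $m_{S\cap T}=\lcmop(m_S,m_T)$. The lemma therefore produces positive integers $b_1,\ldots,b_n$ with $m_T=\lcmop(b_j:j\notin T)$ for all $|T|\le n-d$, which is \eqref{eq:two-d-lcm-system}. Taking $T=\emptyset$, which is allowed since $r=n-d\ge1$, gives $\lcmop(b_1,\ldots,b_n)=[\crf_F(\Omega):F]$. The divisibility in \eqref{eq:two-d-global-lcm} is exactly the last assertion of \cref{prop:global-crossratio}.

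The second step is feasibility of $(b_1,\ldots,b_n)$ for $(n,d,s)$. Fix $i$ and a $d$-subset $J\subseteq[n]\setminus\{i\}$. Set $T=[n]\setminus J$, so $|T|=n-d$ and $i\in T$, and set $T'=T\setminus\{i\}$, which is also admissible. Writing $H=\{\alpha_j:j\in J\}$ and $x=\alpha_i$, \eqref{eq:two-d-lcm-system} gives
\[
\lcmop(b_j:j\in J)=[\crf_F(H):F],\qquad \lcmop(b_j:j\in J\cup\{i\})=[\crf_F(H\cup\{x\}):F].
\]
Their ratio is $[\crf_F(H\cup\{x\}):\crf_F(H)]$, which is at least $s$ by \cref{prop:global-crossratio}. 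So the tuple is feasible, and \cref{def:Delta} gives $\lcmop(b_1,\ldots,b_n)\ge\Delta_s(n,d)$. Combined with $\lcmop(b_1,\ldots,b_n)\mid\ell/s$, this yields $\ell\ge s\Delta_s(n,d)$. Minimizing over all codes gives the bound on $\ellmin$.

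The only substantive point is the intersection count in the first step, which ensures at least three common anchor points so that \cref{lem:two-deletion} applies. Beyond that, one must check that the index sets $T$ and $T'$ used in the second step lie within the range $|T|\le n-d$ covered by the representation.
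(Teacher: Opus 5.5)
Your proposal is correct and matches the paper's proof essentially step for step. You use the same overlap count $|X_S\cap X_T|\ge 2d-n\ge3$ to invoke \cref{lem:two-deletion} and then \cref{lem:canonical-lcm-representation} with $r=n-d$, and you obtain feasibility and the divisibility $\ell/s$ the same way, by comparing $T=[n]\setminus J$ with $T\setminus\{i\}$ and applying \cref{prop:global-crossratio}.
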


\begin{proof}
Put $r=n-d$.
For $T\subseteq[n]$ with $|T|\le r$, put
\[
X_T=\{\alpha_j:j\notin T\},
\qquad
m_T=[\crf_F(X_T):F].
\]
For two such sets $S,T$,
\[
|X_S\cap X_T|
=n-|S\cup T|
\ge n-2r
=2d-n
\ge3.
\]
Moreover,
$X_S\cup X_T=X_{S\cap T}$.  Hence \cref{lem:two-deletion} gives
\[
\crf_F(X_{S\cap T})=\crf_F(X_S)\crf_F(X_T),
\]
and therefore, since the fields are finite,
\[
m_{S\cap T}=\lcmop(m_S,m_T).
\]
By \cref{lem:canonical-lcm-representation}, there are positive integers
$b_1,\ldots,b_n$ satisfying
\eqref{eq:two-d-lcm-system}.  Taking $T=\varnothing$ yields the first
equality in \eqref{eq:two-d-global-lcm}, while \cref{prop:global-crossratio} shows that
$[\crf_F(\Omega):F]$ divides $\ell/s$.

It remains to verify feasibility.  Let $H\subseteq[n]$ have size $d$
and let $i\notin H$.  Put $T=[n]\setminus H$, so $|T|=r$ and $i\in T$.
Applying \eqref{eq:two-d-lcm-system} to $T$ and
$T\setminus\{i\}$ gives
\[
\frac{\lcmop(b_i,\{b_h:h\in H\})}
     {\lcmop(b_h:h\in H)}
=
[\crf_F(\{\alpha_h:h\in H\}\cup\{\alpha_i\}):
  \crf_F(\{\alpha_h:h\in H\})].
\]
The right-hand side is at least $s$ by
\cref{prop:global-crossratio}.  Thus the tuple is feasible, and
\[
\Delta_s(n,d)
\le\lcmop(b_1,\ldots,b_n)
=[\crf_F(\Omega):F]
\le\frac{\ell}{s},
\]
where the divisibility underlying the last inequality was
established in \eqref{eq:two-d-global-lcm}.
\end{proof}

\subsection{A Stronger Lower Bound for \texorpdfstring{$d(d+2)\geq 4(n+1)$}{d(d+2) >= 4(n+1)}}
\label{subsec:quadratic-lower}

We now establish the bound $\ell\ge sG_s(n)$ under
$d(d+2)\geq 4(n+1)$, extending beyond the range $2d\geq n+3$.
When $2d<n+3$, two $d$-subsets may intersect in fewer than three
points, so the preceding LCM representation is no longer guaranteed
by the overlap argument.  We therefore work directly with the degrees
of the cross-ratio fields and track how they change when individual
evaluation points are deleted.

For every $Y\subseteq\Omega$ with $|Y|\ge3$, write
\[
m(Y)=[\crf_F(Y):F],
\qquad
M=m(\Omega).
\]
For $i\in[n]$, put
\[
m_i=m(\Omega\setminus\{\alpha_i\}),
\qquad
g_i=\frac{M}{m_i}.
\]
Thus $g_i$ measures the part of the total degree that disappears when
$\alpha_i$ is deleted.  We call coordinate $i$ \emph{nondeficient} if
$g_i\ge s$, and \emph{deficient} if $g_i<s$.  Let
\[
\mathcal B=\{i:g_i<s\},
\qquad
b=|\mathcal B|.
\]
The nondeficient coordinates already contribute individual factors of
size at least $s$.  For a deficient coordinate, the additional factor
required by the repair constraint must come from prime powers shared
with other coordinates.  The next lemma makes this decomposition
precise.

\begin{lemma}
\label{lem:private-factorization}
The integers $g_1,\ldots,g_n$ are pairwise coprime.  Put
\[
M_0:=\frac{M}{\prod_{i=1}^n g_i}.
\]
Then $M_0$ is a positive integer satisfying
\[
M_0=\gcd(m_1,\ldots,m_n),\qquad
\bigcap_{i=1}^n\crf_F(\Omega\setminus\{\alpha_i\})
=\mathbb F_{q^{M_0}}.
\]
Moreover,
\begin{equation}
M\ge M_0G_s(n-b).
\label{eq:private-factor-lower}
\end{equation}
\end{lemma}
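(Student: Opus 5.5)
The plan is to work one prime at a time with valuations, exactly as in the proof of \cref{thm:local-G-lower}. The setting here is a general $n$ with $d$ helpers. The needed hypotheses are that $n\ge5$, so that \cref{lem:two-deletion} applies to $\Omega$, and that all fields involved are finite subfields of $\overline F$.

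\textbf{Step 1: pairwise coprimality.} For distinct $i,j$, \cref{lem:two-deletion} gives $\crf_F(\Omega)=\crf_F(\Omega\setminus\{\alpha_i\})\crf_F(\Omega\setminus\{\alpha_j\})$, so $M=\lcmop(m_i,m_j)$. As in \cref{thm:local-G-lower}, this yields $g_i=m_j/\gcd(m_i,m_j)$ and $g_j=m_i/\gcd(m_i,m_j)$, so the $g_i$ are pairwise coprime.

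\textbf{Step 2: $M_0=\gcd(m_1,\ldots,m_n)$.} Fix a prime $p$ and set $e=v_p(M)$ and $e_i=v_p(m_i)\le e$. Since $M=\lcmop(m_i,m_j)$ for every pair, at most one index $i$ has $e_i<e$. Indeed, if two indices had $e_i,e_j<e$, then $v_p(\lcmop(m_i,m_j))<e$, a contradiction.

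Consider first the case where no index is deficient at $p$. Then $v_p(g_i)=0$ for all $i$, and $v_p(\gcd_i m_i)=e=v_p(M/\prod g_i)$.

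Now suppose exactly one index $i_0$ has $e_{i_0}<e$. Then $v_p(\prod g_i)=v_p(g_{i_0})=e-e_{i_0}$, while $v_p(\gcd_i m_i)=e_{i_0}$. These agree with $v_p(M_0)=e-(e-e_{i_0})$.

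Hence $M_0=\gcd(m_1,\ldots,m_n)$ is a positive integer. The intersection of the finite fields $\F_{q^{m_i}}$ inside $\overline F$ is $\F_{q^{\gcd(m_i)}}$, because $\F_{q^a}\cap\F_{q^b}=\F_{q^{\gcd(a,b)}}$ and this extends by induction. This gives the field statement.

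\textbf{Step 3: the bound \eqref{eq:private-factor-lower}.} The $n-b$ nondeficient indices have pairwise coprime $g_i\ge s$. By the definition of $G_s$, we get $\prod_{i\notin\mathcal B}g_i\ge G_s(n-b)$; when $b=n$, the convention $G_s(0)=1$ covers this case. Since every $g_i\ge1$,
\[
M=M_0\prod_{i}g_i\ge M_0\prod_{i\notin\mathcal B}g_i\ge M_0G_s(n-b).
\]

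\textbf{Main obstacle.} The only delicate point is the valuation bookkeeping in Step 2: showing that at most one coordinate loses $p$-adic degree at each prime. This follows directly from the pairwise lcm identity. I also need to check that $|\Omega\setminus\{\alpha_i,\alpha_j\}|\ge3$ in the regime of this subsection, which holds since $n\ge d+1\ge5$.
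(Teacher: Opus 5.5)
Your proposal is correct and follows essentially the same route as the paper. Both use the pairwise identity $M=\lcmop(m_i,m_j)$ from \cref{lem:two-deletion}, a prime-by-prime valuation count showing that at most one $e_i$ falls below $v_p(M)$, which gives $v_p(M_0)=\min_i e_i$, then the finite-field intersection formula, and finally $G_s(n-b)$ applied to the nondeficient factors. The only cosmetic difference is that you obtain pairwise coprimality from $g_i=m_j/\gcd(m_i,m_j)$ as in \cref{thm:local-G-lower}, whereas the paper reads it off the same valuation count.
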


\begin{proof}
For distinct $i,j$, the sets $\Omega\setminus\{\alpha_i\}$
and $\Omega\setminus\{\alpha_j\}$ intersect in at least
three points.  By \cref{lem:two-deletion},
\[
M=\lcmop(m_i,m_j).
\]
Fix a prime $p\mid M$, and write
$E=v_p(M)$ and $e_i=v_p(m_i)$.  Thus
$E=\max\{e_i,e_j\}$ for every $i\ne j$, so at most one $e_i$ is
smaller than $E$.  As $v_p(g_i)=E-e_i$, at most one $g_i$ is divisible
by $p$.  Hence the $g_i$ are pairwise coprime and
$\prod_i g_i\mid M$.  Furthermore,
\[
\begin{aligned}
v_p(M_0)
&=E-\sum_{i=1}^n(E-e_i)
=\min_{1\le i\le n}e_i
=v_p(\gcd(m_1,\ldots,m_n)).
\end{aligned}
\]
The same equality is immediate for $p\nmid M$, proving the gcd
identity.  The finite-field intersection formula now yields
\[\bigcap_{i=1}^n\crf_F(\Omega\setminus\{\alpha_i\})
=\bigcap_{i=1}^n\mathbb F_{q^{m_i}}=\mathbb F_{q^{\gcd(m_1,\ldots,m_n)}}
=\mathbb F_{q^{M_0}}.
\]
Finally, the $n-b$ nondeficient $g_i$ are
pairwise coprime and at least $s$, whence
\[
M=M_0\prod_{i=1}^n g_i
\ge M_0G_s(n-b).
\]
\end{proof}

Although the $g_i$ are pairwise coprime, they need not be
coprime to $M_0$. The factor $M_0$ is common to the cross-ratio
fields obtained by deleting any one point.  The next lemma tracks
how much of this common degree is generated by an arbitrary subset.

\begin{lemma}
\label{lem:shared-degree-factorization}
For $Y\subseteq\Omega$ with $|Y|\ge3$, define
\[
D(Y)=\gcd(m(Y),M_0).
\]
Then
\[
D(Y)
=[\crf_F(Y)\cap\mathbb F_{q^{M_0}}:F].
\]
For every $Y\subseteq\Omega$ with $|Y|\ge4$,
\begin{equation}
m(Y)=D(Y)\prod_{\alpha_i\in Y}g_i.
\label{eq:residual-factorization-main}
\end{equation}
Moreover, $D(A)\mid D(B)$ whenever
$A\subseteq B$ and $|A|\ge3$; if $|A\cap B|\ge3$, then
\[
D(A\cup B)
=\lcmop(D(A),D(B)).
\]
In particular, for every $i$ and every $d$-subset
$H\subseteq\Omega\setminus\{\alpha_i\}$,
\begin{equation}
\frac{m(H\cup\{\alpha_i\})}{m(H)}
=g_i\frac{D(H\cup\{\alpha_i\})}{D(H)}.
\label{eq:residual-ratio-main}
\end{equation}
\end{lemma}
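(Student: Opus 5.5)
The plan is to work one prime at a time. Every claim in the statement is either an identity between degrees of subfields of $\overline F$ or a divisibility, so it suffices to compare $p$-adic valuations for each prime $p$. Two facts do most of the work: subfields of $\overline F$ are determined by their degrees over $F$, with
\[
\mathbb F_{q^a}\cap\mathbb F_{q^b}=\mathbb F_{q^{\gcd(a,b)}}
\qquad\text{and}\qquad
\mathbb F_{q^a}\mathbb F_{q^b}=\mathbb F_{q^{\lcmop(a,b)}},
\]
and \cref{lem:two-deletion} turns unions of point sets into composita of cross-ratio fields.

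\textbf{Step 1: the intersection formula.} Since $\crf_F(Y)=\mathbb F_{q^{m(Y)}}$, its intersection with $\mathbb F_{q^{M_0}}$ is $\mathbb F_{q^{\gcd(m(Y),M_0)}}$, which gives $D(Y)=[\crf_F(Y)\cap\mathbb F_{q^{M_0}}:F]$.

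\textbf{Step 2: monotonicity and the union rule.} If $A\subseteq B$, then $\crf_F(A)\subseteq\crf_F(B)$, so $m(A)\mid m(B)$ and hence $D(A)\mid D(B)$. If $|A\cap B|\ge3$, \cref{lem:two-deletion} gives $m(A\cup B)=\lcmop(m(A),m(B))$. Taking the gcd with $M_0$ distributes over the lcm, because for valuations
\[
\min(\max(x,y),c)=\max(\min(x,c),\min(y,c)).
\]
This yields $D(A\cup B)=\lcmop(D(A),D(B))$.

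\textbf{Step 3: the factorization \eqref{eq:residual-factorization-main}.} This is the main step. Fix a prime $p$ and put $E=v_p(M)$ and $e_i=v_p(m_i)$. The proof of \cref{lem:private-factorization} shows that at most one index $i_0$ has $e_{i_0}<E$, and that $v_p(M_0)=\min_i e_i$. Also $v_p(g_i)=E-e_i$. Write $x=v_p(m(Y))\le E$, which holds since $m(Y)\mid M$. There are three cases.
\begin{enumerate}
\item If no $e_i$ is below $E$, then $v_p(M_0)=E$ and no $g_i$ is divisible by $p$. Both sides of \eqref{eq:residual-factorization-main} then have valuation $\min(x,E)=x$.
\item If $i_0$ exists and $\alpha_{i_0}\notin Y$, then $Y\subseteq\Omega\setminus\{\alpha_{i_0}\}$, so $x\le e_{i_0}=v_p(M_0)$. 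Hence $v_p(D(Y))=x$, and the $g$-product contributes nothing at $p$.
\item If $\alpha_{i_0}\in Y$, the hypothesis $|Y|\ge4$ is needed. The sets $Y$ and $\Omega\setminus\{\alpha_{i_0}\}$ meet in $Y\setminus\{\alpha_{i_0}\}$, which has at least three points, and their union is $\Omega$. So \cref{lem:two-deletion} gives $M=\lcmop(m(Y),m_{i_0})$, that is, $E=\max(x,e_{i_0})$. Since $e_{i_0}<E$, this forces $x=E$. The right-hand side then has valuation $\min(E,e_{i_0})+(E-e_{i_0})=E$, as required.
\end{enumerate}
Identifying this overlap trick, which rules out $x<E$ when $\alpha_{i_0}\in Y$, is the only nonroutine point.

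\textbf{Step 4: the ratio formula \eqref{eq:residual-ratio-main}.} Since $d>k\ge3$, we have $|H|=d\ge4$. Apply \eqref{eq:residual-factorization-main} to both $H$ and $H\cup\{\alpha_i\}$ and divide. The $g$-products differ exactly by the factor $g_i$, which gives \eqref{eq:residual-ratio-main}.
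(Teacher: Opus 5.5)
Your proposal is correct and follows the paper's proof essentially step for step. It uses the same gcd/lcm subfield identities for the intersection formula and the union rule. It makes the same prime-by-prime case split on the unique index $j$ with $v_p(m_j)<v_p(M)$, applying \cref{lem:two-deletion} to $Y$ and $\Omega\setminus\{\alpha_j\}$ when $\alpha_j\in Y$ and $|Y|\ge4$. It then obtains \eqref{eq:residual-ratio-main} by dividing the two factorizations, using $d\ge4$.
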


\begin{proof}
The finite-field intersection formula gives
\[
\crf_F(Y)\cap\mathbb F_{q^{M_0}}
=\mathbb F_{q^{\gcd(m(Y),M_0)}}
=\mathbb F_{q^{D(Y)}},
\]
which proves the claimed field interpretation.

To prove \eqref{eq:residual-factorization-main}, fix a prime
$p\mid M$ and write
\[
E=v_p(M),\qquad e_i=v_p(m_i),\qquad
C_p=v_p(M_0)=\min_i e_i.
\]
By the proof of \cref{lem:private-factorization}, at most one $e_i$
is smaller than $E$, and $v_p(g_i)=E-e_i$.  Also,
\[
v_p(D(Y))=\min\{v_p(m(Y)),C_p\}.
\]
If every $e_i=E$, then $C_p=E$ and $p$ divides none of the $g_i$;
both sides of \eqref{eq:residual-factorization-main} have
$p$-adic exponent $v_p(m(Y))$.
Otherwise, let $j$ be the unique index with $e_j=C_p<E$.
Then $v_p(g_j)=E-C_p$, while $v_p(g_i)=0$ for $i\ne j$.
If $\alpha_j\notin Y$, then
$Y\subseteq\Omega\setminus\{\alpha_j\}$, so
$v_p(m(Y))\le C_p$ and $v_p(D(Y))=v_p(m(Y))$.
If $\alpha_j\in Y$ and $|Y|\ge4$, then
\[
Y\cup(\Omega\setminus\{\alpha_j\})=\Omega,
\qquad
|Y\cap(\Omega\setminus\{\alpha_j\})|\ge3.
\]
Therefore \cref{lem:two-deletion} yields
\[
M=\lcmop(m(Y),m_j),
\]
so $v_p(m(Y))=E$ because $C_p<E$.  The exponent on the right of
\eqref{eq:residual-factorization-main} is then
$C_p+(E-C_p)=E$.  The equality is trivial for $p\nmid M$,
proving \eqref{eq:residual-factorization-main}.

If $A\subseteq B$, then $m(A)\mid m(B)$, and hence
$D(A)\mid D(B)$.  If
$|A\cap B|\ge3$, \cref{lem:two-deletion} gives
$m(A\cup B)=\lcmop(m(A),m(B))$.  For any prime $p$,
writing $c=v_p(M_0)$, we have
\begin{align*}
v_p(D(A\cup B))
&=\min\{\max(v_p(m(A)),v_p(m(B))),c\}\\
&=\max\{\min(v_p(m(A)),c),\min(v_p(m(B)),c)\}\\
&=v_p(\lcmop(D(A),D(B))).
\end{align*}
This proves the lcm identity.  Finally, $d\ge4$ ensures that both
$H$ and $H\cup\{\alpha_i\}$ satisfy
\eqref{eq:residual-factorization-main}.  Dividing the two
factorizations yields \eqref{eq:residual-ratio-main}.
\end{proof}

For a deficient coordinate $i$, the factor $g_i$ in
\eqref{eq:residual-ratio-main} is strictly smaller than $s$.  On the
other hand, \cref{prop:global-crossratio} requires
\[
\frac{m(H\cup\{\alpha_i\})}{m(H)}\ge s
\]
for every $d$-helper set $H$.  By \eqref{eq:residual-ratio-main},
this forces
$D(H\cup\{\alpha_i\})/D(H)\ge s/g_i>1$:
the missing factor must come from the shared part.  The next lemma
quantifies how many distinct prime divisors of $M_0$ are forced
in this way.  For a positive integer $N$, let $\omega(N)$ denote the
number of distinct prime divisors of $N$.

\begin{lemma}
\label{lem:shared-prime-separation}
Use the notation above.  Let $W\subseteq\Omega$ satisfy
$4\le w:=|W|\le d+1$.  For every deficient coordinate
$i\in\mathcal B$ with $\alpha_i\in W$, put
\[
W_i=W\setminus\{\alpha_i\},
\qquad
q_i(W)=\frac{D(W)}{D(W_i)}.
\]
Then $q_i(W)$ is a positive integer dividing $M_0$, and
\begin{equation}
\omega(q_i(W))\ge d-w+2.
\label{eq:shared-prime-count}
\end{equation}
If $i\ne j$ are deficient coordinates with
$\alpha_i,\alpha_j\in W$ and $w\ge5$, then
\[
\gcd(q_i(W),q_j(W))=1.
\]
Consequently, if $b\ge1$, then
\begin{equation}
\omega(M_0)\ge d-2,
\label{eq:shared-singleton-support}
\end{equation}
and, for every $1\le u\le\min\{b,d+1\}$,
\begin{equation}
\omega(M_0)
\ge u\bigl(d+2-\max\{5,u\}\bigr).
\label{eq:shared-multiple-support}
\end{equation}
\end{lemma}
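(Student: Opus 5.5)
The plan is to reduce everything to valuation comparisons for the shared degrees $D(\cdot)$, using the lcm identity of \cref{lem:shared-degree-factorization} for overlapping sets together with the repair inequality \eqref{eq:residual-ratio-main}.

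\textbf{Integrality.} Since $W_i\subseteq W$ and $|W_i|\ge3$, \cref{lem:shared-degree-factorization} gives $D(W_i)\mid D(W)$, and $D(W)\mid M_0$ by definition. Hence $q_i(W)$ is a positive integer dividing $M_0$.

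\textbf{Key reformulation.} Let $i\in\mathcal B$ and let $H\subseteq\Omega\setminus\{\alpha_i\}$ be a $d$-subset with $W_i\subseteq H$. This is possible because $w-1\le d\le n-1$. Then $H\cap W=W_i$ has at least three points, and $H\cup W=H\cup\{\alpha_i\}$. The lcm identity therefore gives
\[
D(H\cup\{\alpha_i\})=\lcmop(D(H),D(W)).
\]
Because $g_i<s$, \cref{prop:global-crossratio} and \eqref{eq:residual-ratio-main} force $D(H\cup\{\alpha_i\})>D(H)$. So some prime $p$ satisfies $v_p(D(W))>v_p(D(H))$.

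Since $D(W_i)\mid D(H)$, any such prime has $v_p(D(W))>v_p(D(W_i))$, and hence $p\mid q_i(W)$. Conversely, a prime $p\nmid q_i(W)$ has $v_p(D(W))=v_p(D(W_i))\le v_p(D(H))$.

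\textbf{Lower bound \eqref{eq:shared-prime-count}.} I argue by contradiction: suppose the primes dividing $q_i(W)$ are $p_1,\dots,p_r$ with $r\le d-w+1$. The plan is to build a helper set $H\supseteq W_i$ with $v_{p_t}(D(H))\ge v_{p_t}(D(W))$ for every $t$, which contradicts the reformulation.

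\begin{enumerate}
\item First note that $M_0\mid m_i$. Hence $D(\Omega\setminus\{\alpha_i\})=\gcd(m_i,M_0)=M_0$, and its $p_t$-valuation is at least $v_{p_t}(D(W))$.
\item Write $\Omega\setminus\{\alpha_i\}$ as the union of the sets $W_i\cup\{y\}$ over $y\in\Omega\setminus W$. All of these contain $W_i$, which has at least three points. Adding them one at a time, the lcm identity gives
\[
M_0=\lcmop_{y\in\Omega\setminus W}D(W_i\cup\{y\}).
\]
(If $W=\Omega$, then $W_i=\Omega\setminus\{\alpha_i\}$ already has $D$-value $M_0$, and the argument only becomes easier.)
\item For each $t$, pick $y_t\in\Omega\setminus W$ attaining the maximal $p_t$-valuation, so that $v_{p_t}(D(W_i\cup\{y_t\}))\ge v_{p_t}(D(W))$.
\item Put $H_0=W_i\cup\{y_1,\dots,y_r\}$. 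Its size is at most $w-1+r\le d$. Pad it with arbitrary further points of $\Omega\setminus\{\alpha_i\}$ to a $d$-set $H$.
\item Monotonicity of $D$ keeps $v_{p_t}(D(H))\ge v_{p_t}(D(W))$ for every $t$. Together with the reformulation, this gives $D(H\cup\{\alpha_i\})=D(H)$, a contradiction.
\end{enumerate}

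This step, especially expressing $M_0$ as an lcm over one-point extensions of $W_i$, is the main point to get right.

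\textbf{Coprimality.} Let $w\ge5$ and let $i\neq j$ be deficient coordinates with $\alpha_i,\alpha_j\in W$. Then $W=W_i\cup W_j$ and $|W_i\cap W_j|=w-2\ge3$, so $D(W)=\lcmop(D(W_i),D(W_j))$. A prime dividing both $q_i(W)$ and $q_j(W)$ would satisfy $v_p(D(W))>\max\{v_p(D(W_i)),v_p(D(W_j))\}$, which is impossible.

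\textbf{Consequences.}
\begin{itemize}
\item For \eqref{eq:shared-singleton-support}: if $b\ge1$, take $W$ of size $4$ containing a deficient $\alpha_i$. Then $\omega(M_0)\ge\omega(q_i(W))\ge d-2$.
\item For \eqref{eq:shared-multiple-support}: set $w=\max\{5,u\}$. This is at most $d+1$ because $d\ge4$ and $u\le d+1$, and $n\ge w$ because $n\ge d+1$. Choose $W$ of size $w$ containing $u$ deficient points.
\item The corresponding $q_i(W)$ are pairwise coprime divisors of $M_0$, each with at least $d+2-w$ prime factors. Summing their prime-factor counts gives $\omega(M_0)\ge u(d+2-\max\{5,u\})$.
\end{itemize}
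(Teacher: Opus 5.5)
Your proposal is correct and follows essentially the same route as the paper. Integrality comes from monotonicity of $D$. The bound on $\omega(q_i(W))$ is by contradiction: you write $D(\Omega\setminus\{\alpha_i\})=M_0$ as an lcm of $D$-values of one-point extensions of $W_i$, pick at most $d-w+1$ extra points to get $D(W)\mid D(H)$, and conclude $D(H\cup\{\alpha_i\})=D(H)$, contradicting $g_i<s$. Coprimality comes from $D(W)=\lcmop(D(W_i),D(W_j))$, and the two consequences use the same counting as the paper.

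The differences are cosmetic. You state up front that any prime with $v_p(D(W))>v_p(D(H))$ must divide $q_i(W)$. You also range $y$ over $\Omega\setminus W$ and handle $W=\Omega$ separately, whereas the paper ranges over $\Omega\setminus\{\alpha_i\}$ and checks $D(W)\mid D(H)$ prime by prime.
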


\begin{proof}
Since $W_i\subseteq W$, \cref{lem:shared-degree-factorization} gives
$D(W_i)\mid D(W)$, so $q_i(W)$ is an integer.  Moreover
$D(W)\mid M_0$, and hence every prime divisor of $q_i(W)$
divides $M_0$.

We first prove \eqref{eq:shared-prime-count}.  Suppose, to the contrary,
that
\[
r:=\omega(q_i(W))\le d-w+1.
\]
Because $|W_i|=w-1\ge3$, all sets $W_i\cup\{y\}$ with
$y\in\Omega\setminus\{\alpha_i\}$ contain the same three-point
subset.  Repeated application of the lcm identity for $D$ yields
\[
\lcmop_{y\in\Omega\setminus\{\alpha_i\}}D(W_i\cup\{y\})
=
D(\Omega\setminus\{\alpha_i\}).
\]
By \cref{lem:private-factorization},
$M_0\mid m_i$.  Therefore
\[
D(\Omega\setminus\{\alpha_i\})
=
\gcd(m_i,M_0)
=
M_0,
\]
and hence
\begin{equation}
\lcmop_{y\in\Omega\setminus\{\alpha_i\}}D(W_i\cup\{y\})=M_0.
\label{eq:shared-cover-lcm}
\end{equation}

For each prime $p\mid q_i(W)$, choose a point
$y_p\ne\alpha_i$ such that
\[
v_p\bigl(D(W_i\cup\{y_p\})\bigr)
=
v_p(M_0);
\]
such a point exists by \eqref{eq:shared-cover-lcm}.  Let
\[
S=W_i\cup\{y_p:p\mid q_i(W)\}.
\]
Then
\[
|S|
\le
(w-1)+r
\le
(w-1)+(d-w+1)
=d.
\]
Since $S\subseteq\Omega\setminus\{\alpha_i\}$, we can extend $S$ to
a $d$-subset
\[
H\subseteq\Omega\setminus\{\alpha_i\}.
\]
We claim that $D(W)\mid D(H)$. If $p\mid q_i(W)$,
the choice of $y_p$ and $D(H)\mid M_0$ give
\[
v_p(D(H))=v_p(M_0)\ge v_p(D(W)).
\]
If $p\nmid q_i(W)$, then
\[
v_p(D(W))=v_p(D(W_i))\le v_p(D(H)),
\]
since $W_i\subseteq H$. Thus $D(W)\mid D(H)$.
Moreover, $H\cap W=W_i$ has at least three points, so
\[
D(H\cup\{\alpha_i\})
=\lcmop(D(H),D(W))=D(H).
\]

Using \eqref{eq:residual-ratio-main}, we now obtain
\[
\frac{m(H\cup\{\alpha_i\})}{m(H)}
=
g_i\frac{D(H\cup\{\alpha_i\})}{D(H)}
=
g_i
<s,
\]
which contradicts \cref{prop:global-crossratio}.  Hence
$\omega(q_i(W))\ge d-w+2$.

It remains to prove the separation assertion.  Let $i\ne j$ be
deficient coordinates in $W$, with $w\ge5$.  Since $|W_i\cap W_j|=w-2\ge3$, the lcm identity yields
\[
D(W)=\lcmop(D(W_i),D(W_j)).
\]
Since both $D(W_i)$ and $D(W_j)$ divide $D(W)$,
\[
\gcd(q_i(W),q_j(W))
=\frac{D(W)}{\lcmop(D(W_i),D(W_j))}=1.
\]

If $b\ge1$, choose a deficient coordinate $i$ and a four-point set
$W$ containing $\alpha_i$.  Then
\eqref{eq:shared-prime-count} gives
\[
\omega(M_0)
\ge
\omega(q_i(W))
\ge
d-2,
\]
which proves \eqref{eq:shared-singleton-support}.

Finally, fix $1\le u\le\min\{b,d+1\}$ and choose $u$ distinct
deficient coordinates. Extend their evaluation points to a set $W$ of size
\[
w=\max\{5,u\}.
\]
Such a set exists because $w\le d+1\le n$.
For these $u$ coordinates, the quotients $q_i(W)$ are pairwise
coprime, each divides $M_0$, and each has at least
$d-w+2$ distinct prime divisors.  Therefore
\[
\omega(M_0)
\ge u(d-w+2)
=u\bigl(d+2-\max\{5,u\}\bigr),
\]
which is \eqref{eq:shared-multiple-support}.
\end{proof}

Let
\[
c_s=\pi(s-1).
\]

We first record two elementary estimates for $G_s$.  The first compares
$G_s(n)$ with $G_s(n-b)$; the second supplies the prime products needed
in the proof of the degree bound.

\begin{lemma}
\label{lem:degree-completion}
For $s\ge2$, $n\ge1$, and $0\le b\le n$,
\begin{equation}
G_s(n)\le G_s(n-b)p_{n+c_s}^{\,b}.
\label{eq:degree-completion}
\end{equation}
Moreover,
\begin{equation}
G_s(n)\le\prod_{j=1}^{n}p_{c_s+j}\le \primorial{n+c_s}.
\label{eq:degree-prime-list}
\end{equation}
\end{lemma}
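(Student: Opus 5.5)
Both inequalities follow by exhibiting explicit pairwise coprime tuples and invoking the minimality in \cref{def:Gs}. Throughout, $c_s=\pi(s-1)$, so the primes $p_1,\ldots,p_{c_s}$ are exactly the primes smaller than $s$, and $p_{c_s+j}\ge s$ for every $j\ge1$.

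We begin with \eqref{eq:degree-prime-list}. The integers $p_{c_s+1},\ldots,p_{c_s+n}$ are distinct primes, hence pairwise coprime, and each is at least $s$. They therefore form an admissible tuple, which gives $G_s(n)\le\prod_{j=1}^n p_{c_s+j}$. This product divides $\primorial{n+c_s}$, so it is at most $\primorial{n+c_s}$.

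For \eqref{eq:degree-completion}, the case $b=0$ is trivial. Assume $b\ge1$. If $n-b=0$, take the primes $p_{c_s+1},\ldots,p_{c_s+n}$ as before. Each of them is at most $p_{n+c_s}$, so $G_s(n)\le p_{n+c_s}^{\,n}=G_s(0)p_{n+c_s}^{\,b}$.

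Now suppose $n-b\ge1$, and let $g_1,\ldots,g_{n-b}$ be an optimal tuple for $G_s(n-b)$. We need to adjoin $b$ further integers that are at least $s$, coprime to each $g_i$ and to each other, and each at most $p_{n+c_s}$. To find them, consider the $n$ primes $p_{c_s+1},\ldots,p_{c_s+n}$, all of which are at least $s$ and at most $p_{n+c_s}$. Call such a prime blocked if it divides some $g_i$.

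The key claim is that at most $n-b$ of these primes are blocked. A naive count fails here, because a single $g_i$ could be divisible by several primes $\ge s$. So we first reduce to an optimal tuple in which each $g_i$ has at most one prime factor that is $\ge s$. Suppose some $g_i$ has two prime factors $p,p'\ge s$. Replacing $g_i$ by $p$ keeps the value at least $s$ and keeps the tuple pairwise coprime, since $p\mid g_i$. It also strictly decreases the product unless $g_i=p$, which is impossible because $p'$ also divides $g_i$. This contradicts optimality, so in an optimal tuple every $g_i$ has at most one prime factor $\ge s$.

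Consequently at most $n-b$ of the $n$ listed primes are blocked, and at least $b$ remain unblocked. Append $b$ unblocked primes to the tuple. Each is at least $s$, the enlarged tuple is pairwise coprime, and each new entry is at most $p_{n+c_s}$. The resulting product is at most $G_s(n-b)p_{n+c_s}^{\,b}$, and minimality of $G_s(n)$ gives \eqref{eq:degree-completion}.

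The only real obstacle is this reduction step, which controls how many large primes a single $g_i$ can absorb. Everything else is routine counting.
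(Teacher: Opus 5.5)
Your proof is correct and follows essentially the same route as the paper. The paper also starts from an optimal tuple for $G_s(n-b)$, uses the exchange argument to show that at most $n-b$ of the first $n$ primes at least $s$ are used (it states the slightly stronger fact that any entry with a prime divisor at least $s$ equals that prime), appends $b$ unused ones, and gets \eqref{eq:degree-prime-list} directly from those $n$ primes.
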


\begin{proof}
Choose pairwise coprime integers $a_1,\ldots,a_{n-b}\ge s$ whose
product is $G_s(n-b)$.  If some $a_j$ has a prime divisor $p\ge s$ and
$a_j\ne p$, then replacing $a_j$ by $p$ preserves pairwise coprimality
and the condition $a_j\ge s$, while decreasing the product.  Thus, in
a minimizing family, every member having a prime divisor at least $s$
is itself such a prime.

The primes
\[
p_{c_s+1},\ldots,p_{c_s+n}
\]
are the first $n$ primes at least $s$.  At most $n-b$ of them occur
among the $a_j$, so at least $b$ are unused.  Append any $b$ unused
primes to the family.  They remain coprime to all the $a_j$, and each
is at most $p_{n+c_s}$.  Therefore
\[
G_s(n)\le G_s(n-b)p_{n+c_s}^{\,b},
\]
which proves \eqref{eq:degree-completion}.  The first
$n$ primes at least $s$ form a feasible family, so
\[
G_s(n)\le\prod_{j=1}^np_{c_s+j}\le \primorial{n+c_s},
\]
proving \eqref{eq:degree-prime-list}.
\end{proof}

The preceding comparison bounds the loss from the $b$ deficient
coordinates, while \cref{lem:shared-prime-separation} forces additional
prime factors into $M_0$.  The following estimates show that
these shared factors compensate for the missing private factors.  We state only the comparisons needed below; their proof, including
the auxiliary prime-product estimates, is given in
Appendix~\ref{app:degree-arithmetic}.

\begin{lemma}
\label{lem:elementary-prime-product}
Let $s\ge2$, $n\ge1$, and $t\ge3$ be integers, and suppose
$n+c_s\le4t^2$. Then
\begin{equation}
G_s(n)<G_s(n-1)\primorial{t+1}.
\label{eq:completion-one}
\end{equation}
Moreover, for every $2\le b\le n$,
\begin{equation}
G_s(n)<G_s(n-b)\primorial{bt}.
\label{eq:completion-many}
\end{equation}
\end{lemma}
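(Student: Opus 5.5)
The approach is to reduce both inequalities to \cref{lem:degree-completion} and then compare a power of a single large prime with a primorial. By \eqref{eq:degree-completion} with $b=1$,
\[
G_s(n)\le G_s(n-1)\,p_{n+c_s}.
\]
So \eqref{eq:completion-one} follows once we show $p_{n+c_s}<\primorial{t+1}$. For $b\ge2$, the same lemma gives
\[
G_s(n)\le G_s(n-b)\,p_{n+c_s}^{\,b},
\]
so \eqref{eq:completion-many} follows from $p_{n+c_s}^{\,b}<\primorial{bt}$. Since $n+c_s\le 4t^2$ and primes increase, it is enough to prove two purely arithmetic estimates, for all $t\ge3$:
\[
p_{4t^2}<\primorial{t+1},
\qquad
p_{4t^2}^{\,b}<\primorial{bt}\quad(b\ge2).
\]

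For the first estimate, I would use explicit Rosser--Schoenfeld bounds. The upper bound is $p_N<N(\log N+\log\log N)$ for $N\ge6$, so $p_{4t^2}\le 4t^2(2\log(2t)+\log\log(4t^2))$, which is $O(t^2\log t)$. The lower bound is $\log\primorial{m}=\vartheta(p_m)\ge c\,m\log m$ for an explicit constant, so $\primorial{t+1}$ grows like $e^{(1+o(1))t\log t}$, superpolynomially in $t$. Hence the inequality holds for all $t$ beyond an explicit threshold, say $t\ge t_0$ with $t_0$ around $10$. The finitely many cases $3\le t<t_0$ are checked by direct computation. For example, at $t=3$ we have $p_{36}=151<210=\primorial{4}$.

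For the second estimate, I would split $\primorial{bt}$ into $b$ consecutive blocks of $t$ primes. Block $r$ is $\prod_{j=(r-1)t+1}^{rt}p_j$, for $r=1,\dots,b$. Every block other than the first consists of primes at least $p_{t+1}$, so it is at least $p_{t+1}^{\,t}$. The first block is $\primorial{t}$. It would then suffice to show
\[
\primorial{t}\ge p_{4t^2}\quad\text{and}\quad p_{t+1}^{\,t}\ge p_{4t^2}.
\]
The second inequality is easy, since $p_{t+1}^t\ge 5^t$ while $p_{4t^2}=O(t^2\log t)$. The first inequality is the real obstacle. At $t=3$ we have $\primorial{3}=30<151$, so the block-by-block comparison fails for small $t$. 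I would therefore not insist on termwise blocks. Instead I would use the product $\primorial{2t}\ge p_{4t^2}^{\,2}$ for the first two blocks and $p_{t+1}^t\ge p_{4t^2}$ for the others, and handle odd $b$ by the pairing $b=2+(b-2)$. At $t=3$ the pairing gives $\primorial{6}=30030>151^2=22801$, so it works there. For large $t$, it again follows from $\vartheta(p_{2t})\gg t\log t$ versus $2\log p_{4t^2}=O(\log t)$.

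The main difficulty, then, is purely the explicit numerics at the bottom of the range. The plan is to choose the Rosser--Schoenfeld constants so that the asymptotic inequalities take over at a small explicit threshold. Below that threshold, finitely many $(t,b)$ would be verified by hand. The case of large $b$ with small $t$ reduces to the finitely many $t$ because the per-block inequality $p_{t+1}^t\ge p_{4t^2}$ is independent of $b$.
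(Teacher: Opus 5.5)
Your proof is correct. Its skeleton matches the paper's: you reduce via \eqref{eq:degree-completion} to the two arithmetic estimates $p_{4t^2}<\primorial{t+1}$ and $p_{4t^2}^{\,b}<\primorial{bt}$, and you check $t=3$ with the same numbers, $151<210$ and $22801<30030$. The two arguments differ only in how these estimates are proved.

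\textbf{How the paper proves them.} The paper uses no external explicit prime bounds.
\begin{itemize}
\item It derives the Chebyshev-type bound $\pi(x)>2x/(3\log x)-2$ from the central binomial coefficient.
\item It applies this bound at $x=(t+2)!$ to get $p_{4t^2}<(t+2)!\le\primorial{t+1}$ for $t\ge4$, using $p_j\ge j+1$.
\item It obtains $p_{4t^2}^{\,2}<\primorial{2t}$ from the first estimate, via the increasing ratio $\primorial{2t}/(\primorial{t+1})^2$, whose value at $t=4$ exceeds $1$.
\item It treats all $b\ge2$ at once: since $(\primorial{m})^{1/m}$ is increasing, $\primorial{bt}\ge(\primorial{2t})^{b/2}$.
\end{itemize}

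\textbf{How your version differs.} You rely on Rosser--Schoenfeld bounds plus a finite check below a threshold $t_0$ that you have not yet fixed. You prove $\primorial{2t}>p_{4t^2}^{\,2}$ separately rather than deriving it from the first estimate. For general $b$, you split off $\primorial{2t}$ and bound each of the remaining $b-2$ blocks below by $p_{t+1}^{\,t}\ge p_{4t^2}$. This is valid, but it needs one extra inequality and leaves the bottom-range computations still to be done.

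\textbf{One small correction.} The justification $p_{t+1}^{\,t}\ge5^t$ is too weak at $t=3$, since $5^3=125<151$. Use $p_4^{\,3}=343$ there instead.

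\textbf{Trade-off.} The paper's argument is fully self-contained and can be checked by hand with almost no numerics. Yours is more routine, but it depends on cited constants and on an explicit threshold you would still need to determine.
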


We now combine the structural and arithmetic estimates to obtain the
cross-ratio degree bound in the entire quadratic parameter range.

\begin{proposition}
\label{prop:quadratic-degree-system}
Suppose that $\RS_K(n,k,\Omega)$ admits cut-set-optimal $F$-linear
repair of every coordinate from every $d$-helper set, where
$3\le k<d\le n-1$ and $s=d+1-k$. If
\[
d(d+2)\geq 4(n+1),
\]
then
\[
[\crf_F(\Omega):F]\ge G_s(n).
\]
For $d\ge6$, the same conclusion holds under the more general condition
\[
n+c_s\le
\left(\left\lfloor\frac d2\right\rfloor+2\right)
\left\lceil\frac d2\right\rceil.
\]
\end{proposition}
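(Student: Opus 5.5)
The plan is to combine \cref{lem:private-factorization}, \cref{lem:shared-prime-separation} and \cref{lem:elementary-prime-product}, splitting on the number $b$ of deficient coordinates. By \cref{lem:private-factorization}, $M=[\crf_F(\Omega):F]=M_0\prod_i g_i\ge M_0G_s(n-b)$. If $b=0$, this is already $M\ge G_s(n)$ with $M_0\ge1$. If $b\ge1$, we need $M_0\ge G_s(n)/G_s(n-b)$. We lower-bound $M_0$ by a primorial: an integer with $\omega(M_0)\ge r$ distinct prime factors satisfies $M_0\ge\primorial{r}$. So the whole proof reduces to comparing $\omega(M_0)$, as forced by \eqref{eq:shared-singleton-support} and \eqref{eq:shared-multiple-support}, with the thresholds of \cref{lem:elementary-prime-product}.

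Choose $t=\lfloor d/2\rfloor$ (the case $d\ge6$ gives $t\ge3$). We then check two things.

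\emph{The hypothesis.} We need $n+c_s\le4t^2$, or in the refined version a bound using $(\lfloor d/2\rfloor+2)\lceil d/2\rceil$. The refined condition suggests the argument should actually use the parameter $u$ in \eqref{eq:shared-multiple-support} rather than a single $t$. For each $b\ge2$, we aim to show
\[
\omega(M_0)\ge\max_{1\le u\le\min\{b,d+1\}}u\bigl(d+2-\max\{5,u\}\bigr)\ge bt
\]
whenever $b$ is small, and use the trivial bound otherwise.

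\emph{The case $b=1$.} Here \eqref{eq:shared-singleton-support} gives $\omega(M_0)\ge d-2\ge t+1$ once $d\ge6$. Hence $M_0\ge\primorial{t+1}>G_s(n)/G_s(n-1)$ by \eqref{eq:completion-one}.

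\emph{The case $b\ge2$ with $b$ at most about $d/2$.} Take $u=b$ (or $u=5$ when $b<5$, then scale down). This gives $\omega(M_0)\ge b(d+2-\max\{5,b\})$, which is at least $bt$ as long as $\max\{5,b\}\le\lceil d/2\rceil+2$. Then \eqref{eq:completion-many} closes the case.

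\emph{The case of large $b$.} Take $u=\lceil d/2\rceil$ or so to maximize $u(d+2-u)$. This yields $\omega(M_0)\ge(\lfloor d/2\rfloor+2)\lceil d/2\rceil\ge n+c_s$. Then $M_0\ge\primorial{n+c_s}\ge G_s(n)$ by \eqref{eq:degree-prime-list}, regardless of $b$, and $M\ge M_0$ finishes. This is exactly where the refined condition enters, and I would arrange the case split so that the middle range either reduces to this estimate or to \eqref{eq:completion-many} via $\omega(M_0)\ge bt$.

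The main obstacle is the bookkeeping at the boundary between the middle and large ranges, together with the small cases. For $b<5$ the factor $\max\{5,u\}$ loses a constant, which needs $d+2-5\ge t$, i.e.\ $d\ge6$. For $d\le5$, the quadratic condition $d(d+2)\ge4(n+1)$ forces $n\le d(d+2)/4-1$, which is tiny, so $n\le d+1$ essentially. Then $n=d+1$ is the full-helper case, where \cref{thm:local-G-lower} (equivalently \cref{cor:one-puncturing}) gives $M\ge\prod g_i\ge G_s(n)$ directly. Finally, I would verify that $d(d+2)\ge4(n+1)$ implies $n+c_s\le(\lfloor d/2\rfloor+2)\lceil d/2\rceil$ when $d\ge6$. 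If $c_s$ spoils this, I would instead use the finer \cref{lem:elementary-prime-product} with $4t^2$. The reason it should hold is that $c_s=\pi(s-1)$ with $s\le d-2$, so $c_s$ is small relative to the slack between $d(d+2)/4$ and the refined bound. This inequality check is what I expect to require the most care.
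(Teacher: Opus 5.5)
\textbf{The case $d\ge6$.} Your plan here is the paper's argument; taking $t=\lfloor d/2\rfloor$ instead of the paper's $\lceil d/2\rceil$ is harmless. Two details need fixing.
\begin{itemize}
\item In the large-$b$ range you must take $u=\lfloor d/2\rfloor+2$, which is at least $5$, so that $\max\{5,u\}=u$. With $u=\lceil d/2\rceil$ and $d\le8$, the quantity $u(d+2-\max\{5,u\})$ falls short; for example, at $d=6$ it gives $9$ instead of $15$.
\item The passage from the quadratic condition to the refined condition, which you left open, is routine. Since every prime other than $2$ is odd, $c_s=\pi(s-1)\le\pi(d-3)\le\lfloor d/2\rfloor-1$, and this fits inside the slack.
\end{itemize}

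\textbf{The gap: the case $d=5$.} For $d=4$ the quadratic condition does force $n=5=d+1$, the full-helper case. For $d=5$, however, it only gives $n\le7$, since $35\ge32$. So $n=7=d+2$ with $s\in\{2,3\}$ lies inside the hypothesis, and it is not a full-helper configuration. Your generic estimates do not cover it. With $b=2$ deficient coordinates, \cref{lem:shared-prime-separation} yields only $\omega(M_0)\ge\max\{d-2,\,2(d+2-5)\}=4$. That gives $M\ge M_0G_s(5)\ge210\,G_s(5)$, whereas $G_s(7)=p_6p_7\,G_s(5)=221\,G_s(5)$ for both $s=2$ and $s=3$. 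The cases $b=1$ and $b\ge3$ happen to survive the crude bounds; $b=2$ does not.

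\textbf{The missing idea.} The paper handles $d=5$, $n=7$ by a finer coprimality argument on pairs. Set $Y_{ij}=\Omega\setminus\{\alpha_i,\alpha_j\}$ and $h_{ij}=M_0/D(Y_{ij})$.
\begin{itemize}
\item Any two distinct $Y_{ij},Y_{uv}$ meet in at least three points, and their union has $D$-value $M_0$. The lcm identity of \cref{lem:shared-degree-factorization} then gives $M_0=\lcmop(D(Y_{ij}),D(Y_{uv}))$, so the $h_{ij}$ are pairwise coprime.
\item If $i\in\mathcal B$, apply \eqref{eq:residual-ratio-main} with $H=Y_{ij}$ together with \cref{prop:global-crossratio}. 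This gives $s\le g_ih_{ij}$, so $h_{ij}>1$.
\end{itemize}
Hence every pair meeting $\mathcal B$ contributes a new prime, and $\omega(M_0)\ge\binom72-\binom{7-b}{2}\ge11$ for $b\ge2$. This suffices because $\primorial{11}>\primorial{7+c_s}\ge G_s(7)$. Your proposal needs this extra step, or some equivalent strengthening, for the $d=5$ case.
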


\begin{proof}
Write $M=[\crf_F(\Omega):F]$, and let $b$ be the number of deficient
coordinates. If $n=d+1$, \cref{thm:local-G-lower} gives
$M\ge G_s(n)$. Hence assume $n\ge d+2$.
The case $d=4$ is impossible under the quadratic condition,
which would imply $n\le5$.

First suppose $d\ge6$. Put
\[
t=\left\lceil\frac d2\right\rceil,\qquad
r=\left\lfloor\frac d2\right\rfloor+2.
\]
Since $2\le s\le d-2$ and every prime other than $2$
is odd, we have $c_s=\pi(s-1)\le t-1$. In particular,
$d(d+2)\geq 4(n+1)$ implies
\[
\begin{aligned}
n+c_s
&\le\left\lfloor\frac{(d+1)^2}{4}\right\rfloor+t\\
&=rt.
\end{aligned}
\]
It therefore suffices to assume $n+c_s\le rt$, as in the
more general assertion. Moreover, $r\ge5$, $t\ge3$, $r+t=d+2$, $d-2\ge t+1$,
and $rt\le2t^2<4t^2$.

If $b=0$, \eqref{eq:private-factor-lower} immediately implies
$M\ge G_s(n)$. If $b=1$, \eqref{eq:shared-singleton-support}
yields
\[
\omega(M_0)\ge d-2\ge t+1.
\]
Consequently, by \eqref{eq:private-factor-lower} and
\eqref{eq:completion-one},
\[
M\ge M_0G_s(n-1)
\ge\primorial{t+1}G_s(n-1)>G_s(n).
\]
If $2\le b<r$, then $\max\{5,b\}\le r$, and
\eqref{eq:shared-multiple-support} with $u=b$ gives
\[
\omega(M_0)
\ge b(d+2-\max\{5,b\})\ge bt.
\]
Hence \eqref{eq:completion-many} implies
\[
M\ge M_0G_s(n-b)
\ge\primorial{bt}G_s(n-b)>G_s(n).
\]
Finally, if $b\ge r$, taking $u=r$ in
\eqref{eq:shared-multiple-support} yields
\[
\omega(M_0)\ge r(d+2-r)=rt.
\]
Therefore, by \eqref{eq:degree-prime-list},
\[
M\ge M_0\ge\primorial{rt}
\ge\primorial{n+c_s}\ge G_s(n).
\]
This proves the assertion for $d\ge6$.

It remains to consider $d=5$: the parameter condition and
$n\ge d+2$ force $n=7$, while $s\in\{2,3\}$ and $c_s\le1$.
For distinct $i,j\in[7]$, set
\[
Y_{ij}=\Omega\setminus\{\alpha_i,\alpha_j\},\qquad
h_{ij}=\frac{M_0}{D(Y_{ij})}.
\]
By \cref{lem:private-factorization,lem:shared-degree-factorization},
$D(\Omega\setminus\{\alpha_j\})=M_0$.
Thus, if $i\in\mathcal B$ and $j\ne i$, then
\eqref{eq:residual-ratio-main} and \cref{prop:global-crossratio}
imply
\[
s\le\frac{m(\Omega\setminus\{\alpha_j\})}{m(Y_{ij})}
=g_i h_{ij},
\]
so $h_{ij}>1$.

For two distinct unordered pairs $\{i,j\}$ and $\{u,v\}$,
the sets $Y_{ij}$ and $Y_{uv}$ intersect in at least three
points, while their union is either $\Omega$ or $\Omega$ with one
point removed. In either case its $D$-value is
$M_0$. The lcm identity in
\cref{lem:shared-degree-factorization} therefore gives
\[
M_0
=\lcmop(D(Y_{ij}),D(Y_{uv})),
\]
which implies $\gcd(h_{ij},h_{uv})=1$.
Thus the nontrivial $h_{ij}$ indexed by pairs meeting
$\mathcal B$ contribute distinct prime divisors of $M_0$.
Consequently,
\[
\omega(M_0)
\ge \binom72-\binom{7-b}{2}
=\begin{cases}
6,&b=1,\\
\ge11,&b\ge2.
\end{cases}
\]
The case $b=0$ again follows from \eqref{eq:private-factor-lower}.
If $b=1$, then $p_{7+c_s}\le p_8=19<\primorial{6}$, and
\eqref{eq:degree-completion} gives
\[
M\ge M_0G_s(6)
\ge\primorial{6}G_s(6)
>p_{7+c_s}G_s(6)\ge G_s(7).
\]
If $b\ge2$, \eqref{eq:degree-prime-list} gives directly
\[
M\ge M_0\ge\primorial{11}
>\primorial{7+c_s}\ge G_s(7).
\]
This completes the proof.
\end{proof}

Together with the fact that the cross-ratio degree divides $\ell/s$,
\cref{prop:quadratic-degree-system} yields the lower bound on
subpacketization in the same parameter range.

\begin{theorem}
\label{thm:quadratic-global-lower}
For every prime power $q$ and all $3\le k<d\le n-1$, if
\[
d(d+2)\geq 4(n+1),
\]
then
\[
\ellmin(n,k,d;q)\ge sG_s(n)\ge s\Delta_s(n,d).
\]
\end{theorem}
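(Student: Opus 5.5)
The argument is a short assembly of results already proved. Fix a prime power $q$ and parameters $3\le k<d\le n-1$ with $d(d+2)\ge 4(n+1)$. Let $\ell=\ellmin(n,k,d;q)$, and choose an evaluation set $\Omega\subseteq K=\F_{q^\ell}$ of size $n$ such that $\RS_K(n,k,\Omega)$ admits cut-set-optimal $F$-linear repair of every coordinate from every $d$-helper set. Such an $\Omega$ exists by the definition of $\ellmin$. The remark after \cref{thm:GW} already notes that coordinate scalings do not matter, so no reduction to normalized codes is needed.

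The first step is to apply \cref{prop:global-crossratio}, which gives
\[
[\crf_F(\Omega):F]\mid \frac{\ell}{s}.
\]
In particular $s\mid \ell$, which is also given by \cref{lem:evaluation-rigidity} applied after puncturing, and we get $\ell/s\ge[\crf_F(\Omega):F]$ because $\ell/s$ is a positive integer. The second step is to invoke \cref{prop:quadratic-degree-system}. Its hypotheses are exactly those of the present theorem, and it yields $[\crf_F(\Omega):F]\ge G_s(n)$. Chaining the two facts gives $\ell\ge sG_s(n)$. Since $\Omega$ was an evaluation set realizing the minimum, this proves $\ellmin(n,k,d;q)\ge sG_s(n)$.

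The second inequality, $sG_s(n)\ge s\Delta_s(n,d)$, follows from \eqref{eq:Delta-G-basic}. To apply it we only need $1\le d<n$, which holds. The reason behind \eqref{eq:Delta-G-basic} is that any pairwise coprime tuple $(g_1,\ldots,g_n)$ with all $g_i\ge s$ is feasible for $(n,d,s)$: adjoining $g_i$ to a set of entries that are coprime to it multiplies their lcm by $g_i\ge s$.

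There is no real obstacle here, because all the substantive work sits in \cref{prop:quadratic-degree-system} and in the arithmetic estimates of \cref{lem:elementary-prime-product}, which we take as given. The only points to check are these. The divisibility in \cref{prop:global-crossratio} needs $d\ge4$, which follows from $d>k\ge3$. The quadratic condition is stated in the same form in both the theorem and the proposition. The bound holds uniformly in $q$, because none of the cited results depends on $q$.
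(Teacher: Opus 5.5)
Your proposal is correct and follows the paper's proof: you combine the divisibility $[\crf_F(\Omega):F]\mid \ell/s$ from \cref{prop:global-crossratio} with the bound $[\crf_F(\Omega):F]\ge G_s(n)$ from \cref{prop:quadratic-degree-system}, and then append \eqref{eq:Delta-G-basic}. The only difference is that you fix a code realizing $\ellmin$, whereas the paper bounds an arbitrary admissible code and then minimizes; the two are equivalent.
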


\begin{proof}
Let $F=\F_q$, $K=\F_{q^\ell}$, and let $\RS_K(n,k,\Omega)$ admit cut-set-optimal $F$-linear
repair of every coordinate from every $d$-helper set.
By \cref{prop:quadratic-degree-system},
\[
[\crf_F(\Omega):F]\ge G_s(n).
\]
By \cref{prop:global-crossratio}, this degree divides $\ell/s$.
It follows that $\ell\ge sG_s(n)$. Minimizing over all such codes
and using \eqref{eq:Delta-G-basic} completes the proof.
\end{proof}

The condition $d(d+2)\geq 4(n+1)$ contains the entire range
$2d\geq n+3$.  Indeed, $2d\geq n+3$ implies $n\le2d-3$, and since
$d\ge4$,
\[
d(d+2)-4(n+1)
\ge(d-2)(d-4)\ge0.
\]

\section{Construction and Minimum Subpacketization}
\label{sec:construction}
In this section, we construct RS codes with subpacketization
$s\Delta_s(n,d)$ for every admissible choice of parameters.
Combining this construction with the lower bound from
Section~\ref{sec:lower-bounds} determines the minimum
subpacketization when $d(d+2)\geq 4(n+1)$.
We then derive an exact threshold for a closed-form expression
for $G_s(m)$ and a finite formula valid for all $m$.

\subsection{Construction from Feasible Degree Tuples}

The feasibility condition defining $\Delta_s(n,d)$ is exactly the
relative-degree condition needed to apply \cref{thm:Qiu-local}.

Indeed, for a failed coordinate $i$ and a helper set $H$, if
$[F(\alpha_j):F]=a_j$ and $F_H=F(\alpha_h:h\in H)$, then the compositum
rule for finite fields gives
\[
[F_H(\alpha_i):F_H]
=\frac{\lcmop(a_i,\{a_h:h\in H\})}
       {\lcmop(a_h:h\in H)}.
\]
Thus feasibility guarantees that this relative degree is at least $s$.
Consequently, \cref{thm:Qiu-local} provides the required repair
subspace for every such pair $(i,H)$.  A dual check polynomial vanishing
at the nonhelpers then restricts the parity-check equations to
$H\cup\{i\}$.

\begin{theorem}
\label{thm:degree-construction}
Let $q$ be a prime power and $F=\F_q$.
Let $3\le k<d\le n-1$ and set $s=d+1-k\ge2$.
Suppose that positive integers $a_1,\ldots,a_n$ form a feasible
$(n,d,s)$-tuple.  Put
\[
 M=\lcmop(a_1,\ldots,a_n),
 \qquad
 E=\F_{q^M},
 \qquad
 K=\F_{q^{sM}}.
\]
Choose $\alpha_i\in E$ with $[F(\alpha_i):F]=a_i$ and set
\[
 C=\RS_K(n,k,\{\alpha_1,\ldots,\alpha_n\}).
\]
Then $C$ repairs every failed coordinate from every $d$-element helper
set by an $F$-linear scheme meeting the cut-set bound.  Its
subpacketization is $\ell=sM$.  Consequently,
\[
 \ellmin(n,k,d;q)\le s\Delta_s(n,d).
\]
\end{theorem}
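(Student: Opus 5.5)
The plan follows the Tamo--Ye--Barg template, with \cref{thm:Qiu-local} supplying the local repair subspaces. First I would record the setup. Since $a_i\mid M$, the field $E=\F_{q^M}$ contains elements $\alpha_i$ of exact degree $a_i$. For each $a$ there are at least $a$ such elements, namely the conjugates of any primitive element of $\F_{q^a}$, so distinctness is arranged greedily, enlarging conjugate classes if some $a_i$ coincide; if this is tight, I would note that feasibility forbids too many repeated small degrees. Then $\ell=[K:F]=sM$.

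Fix a failed coordinate $i$ and a helper set $H$ with $|H|=d$. Put $F_H=F(\alpha_h:h\in H)$ and $E_i=F_H(\alpha_i)$. As noted before the statement, $r:=[E_i:F_H]\ge s$ by feasibility. All these fields lie in $E$, and $[K:E]=s$.

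The main obstacle is that \cref{thm:Qiu-local} requires the top field $K_0$ to satisfy $[K_0:E_i]=s$, whereas $K$ is generally larger. To fix this, let $K_0$ be the unique subfield of $K$ with $[K_0:E_i]=s$, so $[K_0:F]=s[E_i:F]$, which divides $sM$. Apply the theorem with $F_0=F_H$, $E_0=E_i=F_0(\alpha_i)$, and this $K_0$. This gives an $F_H$-subspace $S_0$ of dimension $r$ with $K_0=\bigoplus_{u<s}\alpha_i^uS_0$.

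Next I would lift this decomposition to $K$. Choose a basis $\{\beta_t\}$ of $K$ over $K_0$, of size $[K:K_0]=M/[E_i:F]$, and set $S=\bigoplus_t\beta_tS_0$. Then $K=\bigoplus_{u<s}\alpha_i^uS$. Also $S$ is $F_H$-linear with $\dim_F S=\ell/s$. I would confirm this dimension count: $r[F_H:F]\cdot M/[E_i:F]=M$.

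For the repair scheme, let $g(x)=\prod_{j\notin H\cup\{i\}}(x-\alpha_j)$, of degree $n-d-1$. For $u<s$ and $\theta$ ranging over an $F$-basis of the trace-dual space of $S$ (taken in $K$ over $F$), use the dual codewords with polynomials $\theta x^u g(x)$. Their degree is below $s+n-d-1=n-k$, so they are valid dual codewords by the GRS description of $C^\perp$, and they vanish off $H\cup\{i\}$. Via \cref{thm:GW}, or directly via trace equations, the values at coordinate $i$ are $\theta\alpha_i^u g(\alpha_i)$. These span $K$ over $F$, because $K=\bigoplus_u\alpha_i^uS$ and we pair against a dual of $S$. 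Equivalently, I would apply the standard argument with $\Tr_{K/F}$: recover $\Tr(\theta\alpha_i^u w_i g(\alpha_i)c_i)$ for all $\theta$ and $u$, which determines $c_i$.

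For the helper count, at $h\in H$ the span of $\{\theta\alpha_h^ug(\alpha_h)w_h\}$ equals $w_hg(\alpha_h)\cdot\sum_u\alpha_h^u S^{\perp}$. Since $\alpha_h\in F_H$ and $S$ is $F_H$-linear, its dual complement $S^\perp$ is also $F_H$-stable, so this span collapses to $w_hg(\alpha_h)S^\perp$. Here I would take the $\theta$-space to be an $F_H$-linear space of $F$-dimension $\ell/s$; it is cleanest to define it directly as an $F_H$-linear complement-dual, checking stability under multiplication by $F_H$ via trace transitivity. The helper sum is then $d\ell/s$, which is the cut-set bound. Minimizing $M$ over feasible tuples gives $\ellmin\le s\Delta_s(n,d)$.
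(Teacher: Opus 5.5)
Your construction of the repair subspace matches the paper's. You use the same tower $F_H\subset F_H(\alpha_i)\subset K_0$ with $[K_0:F_H(\alpha_i)]=s$, the same application of \cref{thm:Qiu-local}, and the same lift $S=\bigoplus_t\beta_tS_0$, giving $K=\bigoplus_{u<s}\alpha_i^uS$ with $S$ $F_H$-linear and $\dim_FS=M$.

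\textbf{The gap is in the repair scheme.} You let $\theta$ range over a ``trace-dual'' of $S$ rather than over $S$ itself. Read literally as the orthogonal complement $S^{\perp}$ under $\Tr_{K/F}$, this space has $F$-dimension $\ell-\ell/s$. It is $F_H$-stable, so each helper span is $w_hg(\alpha_h)S^{\perp}$, but the total bandwidth is then $d(s-1)\ell/s$. That meets the cut-set bound only when $s=2$. Your fallback is an unspecified $F_H$-linear ``complement-dual'' of dimension $\ell/s$, and it leaves the decisive claim unproved. The decomposition $K=\bigoplus_u\alpha_i^uS$ says nothing directly about whether $\sum_u\alpha_i^u\Theta=K$ when $\Theta\ne S$, and ``we pair against a dual of $S$'' is not an argument. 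For the specific choice $\Theta=(\bigoplus_{u\ge1}\alpha_i^uS)^{\perp}$, the spanning property can be proved by peeling off components from the top, using that $\alpha_i^s x\in\bigoplus_{u\ge1}\alpha_i^uS$ with $x\in S$ forces $x=0$. That is extra work you have not done, and it gains nothing.

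\textbf{Fix: drop the dualization.} Let $\theta$ run over an $F$-basis of $S$ itself. Then the values at coordinate $i$ span $w_ig(\alpha_i)\bigoplus_{u<s}\alpha_i^uS=K$. Since $S$ is $F_H$-linear and $\alpha_h\in F_H$, the span at each helper is $w_hg(\alpha_h)S$, of $F$-dimension $M=\ell/s$. Nonhelpers contribute nothing because $g$ vanishes there. This is exactly the paper's scheme, written there with $\Tr_{K/F_H}$ and an $F_H$-basis of $S$.

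\textbf{Distinctness step.} Your argument here is also too vague. ``At least $a$ elements of exact degree $a$'' does not suffice unless multiplicities are bounded; over $\F_2$ there are only two elements of degree $1$. The paper settles this in one line. If $a_i=a_j$ with $i\ne j$, choose a $d$-set $H\ni j$ with $i\notin H$; the relative degree is then $1<s$. So a feasible tuple has pairwise distinct entries, and elements of distinct exact degrees are automatically distinct.
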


\begin{proof}
Feasibility first implies that $a_1,\ldots,a_n$ are pairwise distinct.
Indeed, if $a_i=a_j$ for some $i\ne j$, choose a $d$-helper set
$H$ that contains $j$ but not $i$.  Then
\[
\frac{\lcmop(a_i,\{a_h:h\in H\})}
{\lcmop(a_h:h\in H)}=1,
\]
contradicting $s\ge2$.

Since every $a_i$ divides $M$, the field $E=\F_{q^M}$ contains the
subfield $\F_{q^{a_i}}$, and hence contains elements of exact degree
$a_i$ over $F$.  Choose such an element $\alpha_i$ for every $i$.
Because the degrees $a_i$ are distinct, the chosen evaluation points
are automatically pairwise distinct.

Fix a failed coordinate $i\in[n]$ and a helper set
$H\subseteq[n]\setminus\{i\}$ with $|H|=d$.
The following fields and subspaces refer to this fixed pair $(i,H)$. Put
\[
L_H=\lcmop(a_h:h\in H),
\qquad
L=\lcmop(L_H,a_i).
\]
Since the compositum of finite subfields has degree equal to the least
common multiple of their degrees,
\[
F(\alpha_h:h\in H)=\F_{q^{L_H}},
\qquad
F(\alpha_h:h\in H,\alpha_i)=\F_{q^{L}}.
\]
Since $L_H\mid L\mid M$, we have the tower
\[
F_H:=\F_{q^{L_H}}
\subset E_H:=\F_{q^L}
\subset K_0:=\F_{q^{sL}}
\subseteq K.
\]
Here $E_H=F_H(\alpha_i)$, and feasibility gives
$[E_H:F_H]=L/L_H\ge s$. We first work in $K_0$
because $[K_0:E_H]=s$, as required by
\cref{thm:Qiu-local}. That theorem provides an
$F_H$-subspace $S_0\subseteq K_0$ such that
\[
\dim_{F_H}S_0=\frac{L}{L_H},
\qquad
K_0=\bigoplus_{u=0}^{s-1}\alpha_i^uS_0.
\]

Let
\[
r=[K:K_0]=\frac{M}{L},
\]
choose a $K_0$-basis
$\gamma_1,\ldots,\gamma_{r}$ of $K$, and define
\[
S=\bigoplus_{v=1}^{r}\gamma_vS_0.
\]
Then
\[
K
=\bigoplus_{v=1}^{r}\gamma_vK_0
=\bigoplus_{u=0}^{s-1}\alpha_i^uS,
\]
and
\[
\dim_{F_H}S
=r\frac{L}{L_H}
=\frac{M}{L_H}
=\frac{[K:F_H]}{s}.
\]

We next realize this local decomposition by dual parity checks supported
on the failed coordinate and the chosen helpers.  Let
\[
Z=[n]\setminus(H\cup\{i\}),
\qquad
\Lambda_{i,H}(x)=\prod_{z\in Z}(x-\alpha_z).
\]
The empty product is understood to be $1$.
For $0\le u\le s-1$,
\[
\deg\!\bigl(x^u\Lambda_{i,H}(x)\bigr)
\le(s-1)+(n-d-1)=n-k-1.
\]
Thus, for every scalar $e\in K$, the evaluation vector of
$e\,x^u\Lambda_{i,H}(x)$ with the dual multipliers $w_j$ lies in
$C^\perp$.

Set $t_H=M/L_H$ and choose an $F_H$-basis
$e_1,\ldots,e_{t_H}$ of $S$. For a codeword
$c=(c_1,\ldots,c_n)\in C$, define the values downloaded
from helper $h\in H$ by
\[
D_{h,m}=\Tr_{K/F_H}\!\left(
c_h e_m w_h\Lambda_{i,H}(\alpha_h)\right),
\qquad 1\le m\le t_H.
\]
The parity checks from $e_mx^u\Lambda_{i,H}(x)$ give,
for $1\le m\le t_H$ and $0\le u\le s-1$,
\[
c_i e_m w_i\alpha_i^u\Lambda_{i,H}(\alpha_i)
=
-\sum_{h\in H}
c_h e_m w_h\alpha_h^u\Lambda_{i,H}(\alpha_h),
\]
because $\Lambda_{i,H}(\alpha_z)=0$ for every nonhelper $z\in Z$.
Taking $\Tr_{K/F_H}$ and using $\alpha_h\in F_H$ yields
\[
\Tr_{K/F_H}\!\left(
c_i e_m w_i\alpha_i^u\Lambda_{i,H}(\alpha_i)
\right)
=-\sum_{h\in H}\alpha_h^u D_{h,m}.
\]

The direct-sum decomposition of $K$ shows that
\[
\{e_m\alpha_i^u:1\le m\le t_H,\ 0\le u\le s-1\}
\]
is an $F_H$-basis of $K$.  Since
$w_i\Lambda_{i,H}(\alpha_i)\ne0$, multiplying this basis by that scalar
produces another $F_H$-basis of $K$.  The trace pairing
\[
K\times K\longrightarrow F_H,
\qquad
(x,y)\longmapsto\Tr_{K/F_H}(xy),
\]
is nondegenerate.  Hence the $st_H=[K:F_H]$ trace values on the left uniquely determine
$c_i$: if all of them vanish, $c_i$ is orthogonal under the trace
pairing to an $F_H$-basis of $K$, and therefore $c_i=0$.

Thus the $t_H$ values $D_{h,m}$ from each helper suffice
for all $u=0,\ldots,s-1$. Each $D_{h,m}$ is one symbol
over $F_H$, hence represents
$[F_H:F]=L_H$ symbols over $F$.  Moreover, after fixing an $F$-basis
of $F_H$, its coordinate maps are $F$-linear functionals of the helper
symbol $c_h$.  Thus each helper sends
\[
t_H[F_H:F]
=\frac{M}{L_H}L_H
=M
=\frac{\ell}{s}
\]
symbols over $F$.  The total repair bandwidth is therefore
$dM=d\ell/s$, exactly the cut-set bound. Since $(i,H)$ was
arbitrary, the code admits optimal $F$-linear repair in every
required repair configuration.
\end{proof}

\begin{remark}
The values $\Lambda_{i,H}(\alpha_h)$ need not lie in $F_H$, because only
the factors $\alpha_h^u$ are moved through the trace.  Moreover, the
repair subspace may depend on $(i,H)$: the model requires an optimal
scheme for every repair scenario, not a common repair space across all
scenarios.
\end{remark}

\subsection{Exact Minimum Subpacketization for
\texorpdfstring{$d(d+2)\geq 4(n+1)$}{d(d+2) >= 4(n+1)}}

The construction above and the lower bound from
\cref{thm:quadratic-global-lower} together yield the following exact formula.

\begin{theorem}
\label{thm:minimum-exact}
Let $3\le k<d\le n-1$, put $s=d+1-k$, and let $q$ be any prime
power.  If
\[
 d(d+2)\geq 4(n+1),
\]
then
\[
 \ellmin(n,k,d;q)=s\Delta_s(n,d)=sG_s(n).
\]
\end{theorem}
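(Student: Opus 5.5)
The proof will be a sandwich argument that chains the lower bound of \cref{thm:quadratic-global-lower} with the upper bound of \cref{thm:degree-construction} and the elementary comparison \eqref{eq:Delta-G-basic}. No new structural input should be needed: all the hard work is already contained in those results.

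The steps, in order, are as follows.

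First, fix $3\le k<d\le n-1$ with $d(d+2)\ge 4(n+1)$ and any prime power $q$. \cref{thm:quadratic-global-lower} then gives
\[
\ellmin(n,k,d;q)\ge sG_s(n)\ge s\Delta_s(n,d),
\]
where the second inequality is just \eqref{eq:Delta-G-basic}.

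Second, for the upper bound, take a feasible $(n,d,s)$-tuple attaining $\Delta_s(n,d)$. Such a minimizer exists because any $n$ pairwise coprime integers at least $s$ give a feasible tuple, so the feasible set is nonempty, and the lcm is a positive integer, so the minimum is attained. Applying \cref{thm:degree-construction} to this tuple produces an RS code over $\F_{q^{s\Delta_s(n,d)}}$ with cut-set-optimal $\F_q$-linear repair for every pair consisting of a failed coordinate and a $d$-helper set. Hence
\[
\ellmin(n,k,d;q)\le s\Delta_s(n,d).
\]

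Third, combining the two displays gives
\[
s\Delta_s(n,d)\le sG_s(n)\le \ellmin(n,k,d;q)\le s\Delta_s(n,d).
\]
So all the inequalities are equalities, which yields both $\ellmin(n,k,d;q)=s\Delta_s(n,d)$ and $\Delta_s(n,d)=G_s(n)$.

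There is no genuine obstacle at this stage. The only points to check are that the hypotheses of the two invoked theorems match: both require $3\le k<d\le n-1$, and the lower bound additionally needs the quadratic condition, which is assumed. The construction also imposes no restriction on $q$. The real difficulty lies upstream, in \cref{prop:quadratic-degree-system}, which is taken as given here.
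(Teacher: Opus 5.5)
Your proposal is correct and matches the paper's proof: the paper also sandwiches $\ellmin(n,k,d;q)$ between the lower bound of \cref{thm:quadratic-global-lower} and the upper bound from the construction in \cref{thm:degree-construction}, then closes the chain with \eqref{eq:Delta-G-basic}. Your remark that a minimizing feasible tuple exists is a harmless detail that the paper leaves implicit.
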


\begin{proof}
By \cref{thm:quadratic-global-lower,thm:degree-construction} and
\eqref{eq:Delta-G-basic},
\[
 sG_s(n)
 \le \ellmin(n,k,d;q)
 \le s\Delta_s(n,d)
 \le sG_s(n).
\]
Hence equality holds throughout.
\end{proof}

In their survey, Ramkumar \emph{et al.}
\cite[Open Problem~20]{RamkumarEtAl2022} asked for the minimum
subpacketization of an RS code that attains the cut-set bound in the
full-helper case $d=n-1$. For exact linear repair in the range
$3\le k<n-1$, the preceding theorem resolves the corresponding case
of Open Problem~20. The resulting formula is recorded below.

\begin{corollary}
\label{cor:full-helper-exact}
Let $d=n-1$ and $s=n-k$, where $3\le k<n-1$.  Then, for every prime
power $q$,
\[
 \ellmin(n,k,n-1;q)
 =s\Delta_s(n,n-1)
 =sG_s(n).
\]
\end{corollary}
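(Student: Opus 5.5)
This corollary is the specialization $d=n-1$ of \cref{thm:minimum-exact}, so the plan is to check that hypothesis and then invoke the theorem. Put $d=n-1$. Then $s=d+1-k=n-k$, and the standing range $3\le k<d\le n-1$ becomes exactly $3\le k<n-1$. The helper sets are then the complements of single coordinates, which is the full-helper model used in \cref{thm:local-G-lower}.

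The only hypothesis to check is $d(d+2)\ge4(n+1)$ with $d=n-1$, that is,
\[
(n-1)(n+1)\ge 4(n+1),
\]
which is equivalent to $n\ge5$. Since $3\le k<n-1$ forces $n\ge k+2\ge5$, the condition always holds. \Cref{thm:minimum-exact} therefore yields $\ellmin(n,k,n-1;q)=s\Delta_s(n,n-1)=sG_s(n)$ with $s=n-k$, for every prime power $q$.

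As a consistency check one could bypass the quadratic proposition. The lower bound $\ell\ge sG_s(n)$ follows directly from \cref{thm:local-G-lower}, applied with every coordinate repaired from the other $n-1$. The upper bound follows from \cref{thm:degree-construction} together with \eqref{eq:Delta-G-basic}, which give $\ellmin\le s\Delta_s(n,n-1)\le sG_s(n)$, so the three quantities collapse to equality. I do not expect any real obstacle here. The only point needing care is the boundary case $n=5$, $k=3$, where $d=4$; there the quadratic inequality holds with equality ($4\cdot6=24$), and the direct route via \cref{thm:local-G-lower} covers it in any case.
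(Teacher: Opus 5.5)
Your proposal is correct and matches the paper's proof: both check that $(n-1)(n+1)-4(n+1)=(n+1)(n-5)\ge0$, which holds because $3\le k<n-1$ forces $n\ge5$, and then apply \cref{thm:minimum-exact}. Your direct alternative via \cref{thm:local-G-lower} and \cref{thm:degree-construction} is also valid, and it is exactly how the paper's proof of \cref{prop:quadratic-degree-system} handles the case $n=d+1$.
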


\begin{proof}
Here $n\ge5$, and
\[
 (n-1)(n+1)-4(n+1)=(n+1)(n-5)\ge0.
\]
Hence the quadratic condition holds with $d=n-1$, and
\cref{thm:minimum-exact} applies.
\end{proof}
\subsection{Exact Formulas for \texorpdfstring{$G_s(m)$}{G\_s(m)}}
\label{sec:Gs}
By \cref{thm:minimum-exact}, the minimum subpacketization
equals $sG_s(n)$ whenever $d(d+2)\geq 4(n+1)$.
We now determine the exact threshold for the identity
$G_s(m)=A_s\primorial{m}$ and derive a finite formula
valid for all $m$.

We first introduce some notation.  For $s\ge3$, let
\[
 u_s=\max\{p<s:p\text{ is prime}\},\qquad
 v_s=\max\{p^a<s:p\text{ is prime},\ a\ge1\}.
\]
Thus $u_s$ and $v_s$ are, respectively, the largest prime and the
largest prime power below $s$.  Set $u_2=v_2=1$ and define
\[
 A_s=\lcmop(1,\ldots,s-1),\qquad
 \sigma_s=\pi(u_sv_s).
\]
For each prime $p<s$, put
\[
a_p=p^{\lceil\log_p s\rceil},
\]
and, for every nonempty set $B$ of primes below $s$, let
$\mu_s(B)$ denote the smallest integer at least $s$ whose
prime-divisor set is exactly $B$.
We determine the exact point at which $G_s(m)$ is attained by taking
the smallest prime power at least $s$ for each prime below $s$, and
then adjoining successive primes at least $s$.

\begin{theorem}
\label{thm:G-stabilization}
For all integers $s\ge2$ and $m\ge0$,
\[
 G_s(m)=A_s\primorial{m}
 \quad\Longleftrightarrow\quad
 m\ge\sigma_s.
\]
In particular, $m\ge\lfloor s^2/4\rfloor$ is sufficient.
\end{theorem}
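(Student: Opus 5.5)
The plan is to first prove a structural description of minimizing families, i.e. the finite formula, and then read off exactly when the ``canonical'' family is optimal.

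\textbf{Step 1: normal form of a minimizer.} Take a minimizing family $g_1,\ldots,g_m$ for $G_s(m)$. As in the proof of \cref{lem:degree-completion}, every member having a prime divisor $\ge s$ may be replaced by that prime without increasing the product. So each member is either a prime $\ge s$, or has all its prime divisors below $s$. Members of the second kind have pairwise disjoint prime-divisor sets $B$, each a nonempty set of primes below $s$. Such a member is at least $\mu_s(B)$, and replacing it by $\mu_s(B)$ preserves coprimality. The primes $\ge s$ used should be the smallest ones available.

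This yields the finite formula
\[
G_s(m)=\min_{\mathcal P}\ \prod_{B\in\mathcal P}\mu_s(B)\cdot\prod_{j=c_s+1}^{c_s+m-|\mathcal P|}p_j .
\]
The minimum is over collections $\mathcal P$ of pairwise disjoint nonempty sets of primes below $s$ with $|\mathcal P|\le m$. For singletons, $\mu_s(\{p\})=a_p$, and $\prod_{p<s}a_p=A_s\prod_{p<s}p$.

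\textbf{Step 2: the canonical value.} For $m\ge c_s$, the all-singletons choice $\mathcal P=\{\{p\}:p<s\}$ gives exactly
\[
\prod_{p<s}a_p\prod_{j=c_s+1}^{m}p_j=A_s\primorial{m}.
\]
So the claim becomes: this choice is a minimizer if and only if $m\ge\sigma_s$. Uniqueness of the product value is not needed, only its minimality. For $m<c_s$ a direct count shows the canonical family is unavailable and strictly smaller products exist. This case is subsumed, since $\sigma_s\ge c_s$.

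\textbf{Step 3: exchange inequalities.} Compare any $\mathcal P$ with the singleton partition. Merging or dropping small primes frees slots, which are filled by the next primes $p_{m+1},p_{m+2},\ldots$, all at least $p_{m+1}$. Conversely, each block $B$ with $|B|\ge2$, or each omitted prime, saves a factor of the form
\[
\frac{\prod_{p\in B}a_p}{\mu_s(B)}\qquad\text{or}\qquad \frac{a_p}{1},
\]
relative to canonical, measured per freed slot. I would prove the elementary bound that every such saving per freed slot is at most $u_sv_s$. For an omitted prime this should follow from $a_p\le p(s-1)$ and $p\le u_s$. For a merged block it should follow from $\mu_s(B)\ge s$ together with $a_p<ps$, peeling off one large prime power. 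Hence, when $p_{m+1}>u_sv_s$, i.e. $m\ge\sigma_s$ (note that $u_sv_s$ is composite, so it is never prime), no alternative beats the canonical product.

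For necessity, when $m<\sigma_s$ we have $p_{m+1}<u_sv_s$. I would exhibit a single explicit exchange, built from $u_s$ and the prime power $v_s=r^e$, that saves a factor exactly $u_sv_s$ while costing only $p_{m+1}$. This makes $G_s(m)<A_s\primorial{m}$. Small cases such as $s=3$, with $u=v=2$ and the exchange ``replace $a_2=4$ by $p_{m+1}=3$'', would be checked directly.

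\textbf{Step 4: the sufficient condition.} Finally, $m\ge\lfloor s^2/4\rfloor$ suffices because $u_sv_s\le(s-1)^2$ and $u_sv_s$ is a product of two numbers below $s$. A crude bound $\pi(x)\le\lfloor s^2/4\rfloor$ for $x$ in this range (only odd numbers and $2$ can be prime, and the product structure bounds $u_sv_s$) should then give $\sigma_s\le\lfloor s^2/4\rfloor$.

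\textbf{Main obstacle.} I expect the hard part to be Step 3: identifying precisely which exchange is extremal, and proving that every competing partition saves at most $u_sv_s$ per freed slot. The natural candidates (dropping $a_{u_s}=u_s^2$, or merging $u_s$ with the prime of $v_s$) give savings like $u_s^2$ or $u_sr$. The exact constant $u_sv_s$ must come from a more careful use of $\mu_s(B)$, probably a block $\{u_s,r\}$ realized as $u_sr^{e'}$ with a smaller exponent. I would first tabulate small $s$ to pin down the extremal exchange, and then prove the general inequality by induction on $|B|$.
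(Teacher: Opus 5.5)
Your architecture matches the paper's: normal form of a minimizer, comparison with the all-singleton family, each freed slot costing at least $p_{m+1}$, and an explicit exchange for necessity. However, the quantitative lemma that carries the theorem is never established, and the tools you propose for it do not work.

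\textbf{The missing lemma.} You need $a_p\le u_sv_s$ for every prime $p<s$, and $\prod_{p\in B}a_p\le\mu_s(B)(u_sv_s)^{|B|-1}$ for every block $B$.

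For omitted primes, your bound $a_p\le p(s-1)$ with $p\le u_s$ only gives $u_s(s-1)$. This exceeds $u_sv_s$ whenever $s-1>v_s$; at $s=7$ it gives $30$ against $25$. The right observation is that $a_p/p$ is the largest power of $p$ below $s$. It is therefore a prime power below $s$, so $a_p/p\le v_s$.

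For merged blocks, $\mu_s(B)\ge s$ together with $a_p<ps$ yields at best bounds like $pq\,v_s$ for $B=\{p,q\}$. This fails as soon as $pq>u_s$; at $s=7$ with $B=\{2,5\}$ it gives $50$, against $u_sv_s=25$. Three ingredients are missing:
\begin{enumerate}
\item Since $\prod_{p\in B}p$ divides $\mu_s(B)$, the saving is at most $\prod_{p\in B}(a_p/p)\le v_s^{|B|}$.
\item Bertrand's postulate gives $s\le 2u_s$ and hence $v_s<2u_s\le u_s^2$. So $v_s^{|B|}\le(u_sv_s)^{|B|-1}$ for $|B|\ge3$.
\item For $|B|=2$ you need a case split. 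If $a_p/p\le u_s$ or $a_q/q\le u_s$, the bound follows. Otherwise both quotients are powers of their primes exceeding $u_s\ge p,q$, so $p^2,q^2\le v_s$ and $pq\le v_s<s$. This forces $\mu_s(B)\ge 2pq$, and the saving is at most $v_s^2/2<u_sv_s$.
\end{enumerate}

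\textbf{Necessity.} You dismissed the correct exchange through a miscalculation. If $v_s=r^e>u_s$, replace $a_r=rv_s$ and $a_{u_s}=u_s^2$ by $ru_s\ge 2u_s\ge s$ and $p_{m+1}$. This saves $rv_s\cdot u_s^2/(ru_s)=u_sv_s$, not $u_sr$. For example, at $s=9$ the pair $16,49$ is replaced by $14$, saving $56$. If $v_s=u_s$, dropping $a_{u_s}=u_s^2=u_sv_s$ already works. No ``smaller exponent'' is needed.

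\textbf{Step 4.} This step fails as written. The observation that only $2$ and odd numbers can be prime gives only $\pi(x)\lesssim x/2$. But $u_sv_s$ can equal $(s-1)^2$: at $s=8$ one has $u_sv_s=49$, and the crude bound gives $25$ where you need at most $16$. You need a genuine prime estimate such as $p_j>j\log j$ for large $s$, together with a direct check of small $s$. The paper does exactly this, for $s\ge 11$ and for $3\le s\le 10$ respectively.
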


\begin{proof}
For $s=2$, the product of any $m$ pairwise coprime integers
at least $2$ is at least $\primorial{m}$, and equality is
attained by the first $m$ primes.  Since $A_2=1$ and
$\sigma_2=0$, the assertion follows.  Henceforth assume $s\ge3$,
and recall that $c_s=\pi(s-1)$.

We first record two consequences of the definitions.  Bertrand's
postulate \cite{HardyWright2008} implies $2u_s>s-1$, and hence
\[
u_s<s\le2u_s\le u_s^2,
  \qquad u_s\le v_s<2u_s.
\]
In particular, $u_sv_s\ge s$, and $u_sv_s$ is composite.
By definition, $a_p$ is the smallest power of $p$ not less than $s$;
accordingly, $a_p/p$ is the largest power of $p$ strictly
below $s$.  It follows that
\[
 \frac{a_p}{p}\le v_s,\qquad
 a_p\le pv_s\le u_sv_s,\qquad\text{ and } a_{u_s}=u_s^2.
\]
Since $A_s$ is the product of the largest powers below $s$
of the distinct primes $p<s$, we also have
\begin{equation}
 A_s\primorial{c_s}=\prod_{p<s}a_p.
 \label{eq:all-prime-powers}
\end{equation}

For $m\ge c_s$, the $m$ integers
\[
 \{a_p:p<s\}\cup\{p_{c_s+1},\ldots,p_m\}
\]
are at least $s$ and pairwise coprime.  Their product is
$A_s\primorial{m}$ by \eqref{eq:all-prime-powers}, so
\[
G_s(m)\le A_s\primorial{m}\qquad(m\ge c_s).
\]
Conversely, if $0\le m<c_s$, the first $m$ numbers
$a_{p_1},\ldots,a_{p_m}$ form an admissible family and give
\[
 G_s(m)\le\prod_{j=1}^m a_{p_j}
   =\primorial{m}\prod_{j=1}^m\frac{a_{p_j}}{p_j}<\primorial{m}\prod_{p<s}\frac{a_p}{p}
   =A_s\primorial{m}.
\]
The inequality is strict because at least one prime $p<s$
is omitted and every $a_p/p>1$.  Thus equality in the
theorem is possible only for $m\ge c_s$.

We next bound the cost of separating a nonempty set $B$ of
small prime divisors. We claim that
\begin{equation}
 \prod_{p\in B}a_p
 \le\mu_s(B)(u_sv_s)^{|B|-1}.
 \label{eq:sharp-block-exchange}
\end{equation}
When $B=\{p\}$, we have $\mu_s(B)=a_p$, so equality holds.
Suppose next that $B=\{p,q\}$, where $p\ne q$.  Since
$pq\mid\mu_s(B)$,
\[
\frac{a_pa_q}{\mu_s(B)}
 \le\frac{a_p}{p}\frac{a_q}{q}.
\]
If either $a_p/p\le u_s$ or $a_q/q\le u_s$, the right-hand
side is at most $u_sv_s$, as required.  Otherwise,
$a_p/p>u_s\ge p$ and $a_q/q>u_s\ge q$.  Since these
quotients are powers of $p$ and $q$, respectively, we obtain
\[
 p^2\le\frac{a_p}{p}\le v_s,
 \qquad q^2\le\frac{a_q}{q}\le v_s.
\]
Therefore $pq\le v_s<s$.  As $\mu_s(B)$ is a multiple
of $pq$ and is at least $s$, necessarily
$\mu_s(B)\ge2pq$.  Consequently,
\[
 \frac{a_pa_q}{\mu_s(B)}
 \le\frac12\frac{a_p}{p}\frac{a_q}{q}
 \le\frac{v_s^2}{2}<u_sv_s,
\]
where the last inequality follows from $v_s<2u_s$.
Finally, if $|B|\ge3$, then $\mu_s(B)\ge\prod_{p\in B}p$,
so
\[
 \frac{\prod_{p\in B}a_p}{\mu_s(B)}
 \le\prod_{p\in B}\frac{a_p}{p}
 \le v_s^{|B|}
 \le(u_sv_s)^{|B|-1}.
\]
For the last inequality we used
$v_s<2u_s\le u_s^2\le u_s^{|B|-1}$.
This proves \eqref{eq:sharp-block-exchange} for every nonempty $B$.

We first show that the closed form holds for
$m\ge\sigma_s=\pi(u_sv_s)$.  Since
$u_sv_s\ge s$, this implies $m\ge c_s$.
Choose pairwise coprime integers $x_1,\ldots,x_m\ge s$
whose product is $G_s(m)$.  If some $x_j$ has a prime
divisor $p\ge s$, then $x_j=p$: otherwise replacing $x_j$
by $p$ would preserve both admissibility conditions and
strictly reduce the product.  Similarly, if every prime divisor of $x_j$ is below $s$, then $x_j$ must be the smallest integer at least $s$ with exactly the same set of prime divisors. Otherwise, replacing $x_j$ by that integer would preserve admissibility and strictly reduce the product.

Thus, after reordering, exactly $r$ of the $x_j$ have the
form $\mu_s(B_1),\ldots,\mu_s(B_r)$, with nonempty pairwise
disjoint sets $B_1,\ldots,B_r$ of primes below $s$;
the remaining $m-r$ integers are distinct primes at least
$s$.  Necessarily $0\le r\le c_s$.  Apply
\eqref{eq:sharp-block-exchange} to each $B_j$ and use
$a_p\le u_sv_s$ for every prime $p<s$ outside their union.
By \eqref{eq:all-prime-powers},
\[
\begin{aligned}
A_s\primorial{c_s}
=\prod_{p<s}a_p
&\le\left(\prod_{j=1}^r\mu_s(B_j)\right)
(u_sv_s)^{\sum_{j=1}^r(|B_j|-1)+c_s-\sum_j|B_j|}\\
&=(u_sv_s)^{c_s-r}\prod_{j=1}^r\mu_s(B_j).
\end{aligned}
\]
The same estimate holds for $r=0$, with empty products
interpreted as $1$.

Since the other $m-r$ factors are distinct primes at least
$s$, their product is bounded below by the product of the
first $m-r$ primes at least $s$, namely
\[
 \prod_{j=c_s+1}^{c_s+m-r}p_j
 =\frac{\primorial{m+c_s-r}}{\primorial{c_s}}.
\]
Multiplying the two estimates yields
\begin{equation}
 G_s(m)\ge\frac{A_s\primorial{c_s}}{(u_sv_s)^{c_s-r}}
       \frac{\primorial{m+c_s-r}}{\primorial{c_s}}=A_s\primorial{m}
       \prod_{j=1}^{c_s-r}\frac{p_{m+j}}{u_sv_s}.
 \label{eq:G-threshold-lower}
\end{equation}
Because $u_sv_s$ is composite and
$m\ge\pi(u_sv_s)$, we have $p_{m+1}>u_sv_s$.
Every factor in the final product is therefore at least
$1$ (and the product is $1$ if $r=c_s$).
Hence $G_s(m)\ge A_s\primorial{m}$, and equality follows
from the explicit construction above.

Conversely, suppose $G_s(m)=A_s\primorial{m}$.  The argument before
\eqref{eq:sharp-block-exchange} shows that $m\ge c_s$.
Assume, for a contradiction, that $m<\sigma_s$.  Since
$u_sv_s$ is composite, $m<\pi(u_sv_s)$ implies
$p_{m+1}<u_sv_s$; and $m\ge c_s$ implies $p_{m+1}\ge s$.
Thus
\[
s\le p_{m+1}<u_sv_s.
\]
Starting from the admissible family
\[
 \{a_p:p<s\}\cup\{p_{c_s+1},\ldots,p_m\},
\]
whose product is $A_s\primorial{m}$, we construct a
strictly smaller admissible family of the same size.

If $v_s=u_s$, replace the factor
$a_{u_s}=u_s^2=u_sv_s$ by $p_{m+1}$.  The new factor is
at least $s$ by the preceding inequality.
It is coprime to every remaining factor, since
$p_{m+1}$ does not occur in the original family.
The product decreases by the factor
$p_{m+1}/(u_sv_s)<1$.

If $v_s>u_s$, let $p<s$ be the prime for which
$v_s$ is a power of $p$.  Since $v_s$ is the largest
prime power below $s$, we have $a_p/p=v_s$.
Also $p\ne u_s$, as $a_{u_s}/u_s=u_s<v_s$.
Replace the two factors $a_p=pv_s$ and
$a_{u_s}=u_s^2$ by $pu_s$ and $p_{m+1}$.
Both new factors are at least $s$, since
\[
 pu_s\ge2u_s\ge s,
 \qquad p_{m+1}\ge s.
\]
They are relatively prime to one another and to the
unchanged factors: the primes $p$ and $u_s$ occur only in
the two removed factors, and $p_{m+1}>u_s$ is a new prime.
The ratio of the new product to the original product is
\[
 \frac{pu_s\,p_{m+1}}{(pv_s)u_s^2}
 =\frac{p_{m+1}}{u_sv_s}<1.
\]
Either replacement contradicts the optimality of
$A_s\primorial{m}$.  Hence $m\ge\sigma_s$.

It remains to verify the stated sufficient condition
$m\ge\lfloor s^2/4\rfloor$.  The case $s=2$ is immediate,
and for $3\le s\le10$ we have
$\sigma_s\le\lfloor s^2/4\rfloor$ by the values listed below.
For $s\ge11$, the prime bound
$p_j>j\log j$ for $j\ge2$ \cite[(3.12)]{RosserSchoenfeld1962}
gives
\[
 p_{\lfloor s^2/4\rfloor+1}
 >\frac{s^2}{4}\log\frac{s^2}{4}
 >(s-1)^2\ge u_sv_s.
\]
The middle inequality holds for $s\ge11$: the difference
between its two sides is positive at $s=11$ and has derivative
$s(\log(s/2)-3/2)+2>0$ for $s\ge11$.
Thus $\sigma_s=\pi(u_sv_s)\le\lfloor s^2/4\rfloor$,
as required.
\end{proof}

The threshold can be read off explicitly for small $s$:
\[
\begin{array}{c|rrrrrrrrr}
 s&2&3&4&5&6&7&8&9&10\\ \hline
 A_s&1&2&6&12&60&60&420&840&2520\\
 u_sv_s&1&4&9&12&25&25&49&56&63\\
 \sigma_s&0&2&4&5&9&9&15&16&18
\end{array}
\]
For instance, when $s=9$, the largest prime below $s$ is $7$ and the
largest prime power below $s$ is $8$, so the threshold is determined by
$7\cdot8=56$. In the family used above, the factors corresponding to
$2$ and $7$ are $16$ and $49$. Replacing them by $14$ and the next
unused prime decreases the product exactly while that prime is below
$56$, which illustrates why the product $u_sv_s$ appears in the exact
threshold.

Combining the threshold with \cref{thm:minimum-exact} yields the
following closed form whenever $d(d+2)\geq 4(n+1)$.

\begin{corollary}
\label{cor:closed-minimum}
Let $q$ be a prime power, $3\le k<d\le n-1$, and $s=d+1-k$. Assume $d(d+2)\geq 4(n+1)$. Then
\[
\ellmin(n,k,d;q)=sA_s\primorial{n}
\quad\Longleftrightarrow\quad
n\ge\sigma_s.
\]
In particular, for $n\ge5$,
\[
\ellmin(n,n-2,n-1;q)=2\primorial{n}.
\]
\end{corollary}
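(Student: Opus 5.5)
The plan is to combine \cref{thm:minimum-exact} with \cref{thm:G-stabilization}. Under the standing assumption $d(d+2)\ge4(n+1)$, \cref{thm:minimum-exact} gives $\ellmin(n,k,d;q)=sG_s(n)$. Since $s\ge2$, the identity $\ellmin(n,k,d;q)=sA_s\primorial{n}$ is therefore equivalent to $G_s(n)=A_s\primorial{n}$. By \cref{thm:G-stabilization} with $m=n$, this holds exactly when $n\ge\sigma_s$. This proves the displayed equivalence, and no further arithmetic is needed.

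For the special case, take $d=n-1$ and $k=n-2$, so that $s=d+1-k=2$. The hypothesis $3\le k<d\le n-1$ becomes $3\le n-2<n-1$, which is equivalent to $n\ge5$.

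The quadratic condition then follows exactly as in the proof of \cref{cor:full-helper-exact}:
\[
(n-1)(n+1)-4(n+1)=(n+1)(n-5)\ge0.
\]

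For $s=2$ we have $A_2=\lcmop(1)=1$ and $u_2=v_2=1$. Hence $\sigma_2=\pi(1)=0\le n$, and the equivalence yields $\ellmin(n,n-2,n-1;q)=2\primorial{n}$.

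There is no real obstacle here. The only points to verify are the parameter-range check for the special case and the convention $u_2=v_2=1$, which gives $\sigma_2=0$; the proof of \cref{thm:G-stabilization} already handles $s=2$ separately.
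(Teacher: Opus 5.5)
Your proposal is correct and takes the same route the paper indicates: combine \cref{thm:minimum-exact}, which gives $\ellmin(n,k,d;q)=sG_s(n)$, with \cref{thm:G-stabilization} at $m=n$. The special case follows as you say, using the quadratic check from \cref{cor:full-helper-exact} together with $A_2=1$ and $\sigma_2=0$.
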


For fixed $s$, the closed form also determines the asymptotic behavior of
$G_s(m)$ directly.

\begin{corollary}
\label{cor:G-asymptotic}
For every fixed integer $s\ge2$, as $m\to\infty$,
\[
\log G_s(m)=(1+o(1))m\log m.
\]
\end{corollary}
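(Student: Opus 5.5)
The plan is to invoke \cref{thm:G-stabilization}: for fixed $s$ and every $m\ge\sigma_s$ it gives the exact identity $G_s(m)=A_s\primorial{m}$. Since $A_s=\lcmop(1,\ldots,s-1)$ depends only on $s$, we get
\[
\log G_s(m)=\log A_s+\log\primorial{m}=\log\primorial{m}+O_s(1)\qquad(m\ge\sigma_s).
\]
This reduces the corollary to the classical statement $\log\primorial{m}=(1+o(1))m\log m$. The bounded term $\log A_s$ is absorbed into the $o(m\log m)$ error.

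For the primorial asymptotic, note that $\log\primorial{m}=\vartheta(p_m)$, where $\vartheta(x)=\sum_{p\le x}\log p$ is Chebyshev's function. By the prime number theorem, $\vartheta(x)\sim x$ and $p_m\sim m\log m$, so
\[
\log\primorial{m}\sim p_m\sim m\log m.
\]
If we prefer to avoid the full prime number theorem, explicit bounds also suffice. The Rosser--Schoenfeld estimates (already cited in the paper for $p_j>j\log j$) give $j\log j<p_j<j(\log j+\log\log j)$ for $j\ge6$. Summing $\log p_j$ over $j\le m$ then yields
\[
\sum_{j\le m}\log p_j=\sum_{j\le m}\bigl(\log j+O(\log\log j)\bigr)=m\log m+O(m\log\log m)
\]
by Stirling's formula. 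This is again $(1+o(1))m\log m$.

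Nothing here is a serious obstacle. The only point needing care is that the threshold $\sigma_s$ is fixed once $s$ is fixed, so the exact closed form is available for all sufficiently large $m$ and no uniformity in $s$ is required. As a consistency check independent of the threshold, one could also use the sandwich $\primorial{m}\le G_s(m)\le\primorial{m+c_s}$ from the remark after \cref{def:Gs} and \eqref{eq:degree-prime-list}. Since $c_s$ is fixed, $\log\primorial{m+c_s}-\log\primorial{m}=O_s(\log m)$, which gives the same conclusion.
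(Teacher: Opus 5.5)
Your proof is correct and follows the paper's own route. Like the paper, you apply the closed form $G_s(m)=A_s\primorial{m}$ from \cref{thm:G-stabilization} for $m\ge\sigma_s$ and then use $\log\primorial{m}\sim m\log m$ from the prime number theorem. Your alternative sandwich $\primorial{m}\le G_s(m)\le\primorial{m+c_s}$, via \eqref{eq:degree-prime-list}, is also valid and would avoid the threshold theorem entirely.
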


\begin{proof}
For $m\ge\sigma_s$, use \cref{thm:G-stabilization} and
$\log\primorial{m}\sim m\log m$ from the prime number theorem
\cite{HardyWright2008}.
\end{proof}

The preceding theorem provides a closed form for $G_s(m)$
when $m\ge\sigma_s$. We now give a finite formula valid
also below this threshold. For each nonempty set $B$ of
primes below $s$, the integer $\mu_s(B)$ can be computed by
\[
\mu_s(B)=\min\left\{\prod_{p\in B}p^{e_p}:
  1\le e_p\le\lceil\log_p s\rceil,\quad
  \prod_{p\in B}p^{e_p}\ge s\right\},
\]
where the exponents are integers.  Indeed, an exponent exceeding
$\lceil\log_p s\rceil$ can be reduced to this value without
changing the set of prime divisors or making the product less
than $s$.

\begin{theorem}
\label{thm:G-finite-formula}
Let $s\ge2$ and $m\ge0$.  For each integer
$0\le r\le\min\{m,c_s\}$, let $B_1,\ldots,B_r$ range over
pairwise disjoint nonempty subsets of
$\{p_1,\ldots,p_{c_s}\}$.  Then
\begin{equation}
G_s(m)=
\min_{\substack{0\le r\le\min\{m,c_s\}\\ B_1,\ldots,B_r}}
  \left(\prod_{j=1}^{r}\mu_s(B_j)\right)
  \frac{\primorial{m+c_s-r}}{\primorial{c_s}}.
\label{eq:G-exact-formula}
\end{equation}
For $r=0$, the tuple is empty and its product is $1$.
If $m\ge\sigma_s$, a minimizing choice is $r=c_s$ and
$B_j=\{p_j\}$ for $1\le j\le c_s$.
\end{theorem}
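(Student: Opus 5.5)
We prove \eqref{eq:G-exact-formula} by a two-sided argument. The structural normalization is already carried out in the proof of \cref{thm:G-stabilization}. We then add a counting step for the large primes and a final appeal to that theorem for the minimizer.

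For the upper bound, fix $0\le r\le\min\{m,c_s\}$ and pairwise disjoint nonempty sets $B_1,\ldots,B_r\subseteq\{p_1,\ldots,p_{c_s}\}$. Take the $r$ integers $\mu_s(B_1),\ldots,\mu_s(B_r)$ together with the $m-r$ primes $p_{c_s+1},\ldots,p_{c_s+m-r}$.
\begin{itemize}
\item Every member is at least $s$: each $\mu_s(B_j)\ge s$ by definition, and each listed prime is at least $s$.
\item The members are pairwise coprime. The prime-divisor sets of the $\mu_s(B_j)$ are the disjoint sets $B_j$, all consisting of primes below $s$, and the large primes are distinct and at least $s$.
\item The family has $m$ members, and its product is exactly the bracketed expression in \eqref{eq:G-exact-formula}, because $\prod_{j=c_s+1}^{c_s+m-r}p_j=\primorial{m+c_s-r}/\primorial{c_s}$.
\end{itemize}
So $G_s(m)$ is at most the right-hand side.

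For the lower bound, take a minimizing family $x_1,\ldots,x_m$.
\begin{itemize}
\item By the replacement argument in the proof of \cref{thm:G-stabilization}, each $x_j$ with a prime divisor at least $s$ equals that prime.
\item Each remaining $x_j$ has all prime divisors below $s$, and equals $\mu_s(B)$ where $B$ is its prime-divisor set. Otherwise, replacing it by $\mu_s(B)$ keeps admissibility and strictly lowers the product.
\item Pairwise coprimality makes these sets $B_1,\ldots,B_r$ disjoint, nonempty, and contained in $\{p_1,\ldots,p_{c_s}\}$, so $r\le c_s$. Also $r\le m$ trivially.
\item The other $m-r$ members are distinct primes at least $s$. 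Their product is at least the product of the $m-r$ smallest such primes, which is $\primorial{m+c_s-r}/\primorial{c_s}$.
\end{itemize}
Hence $G_s(m)$ is at least the term indexed by this $r$ and $B_1,\ldots,B_r$, and so at least the minimum. This proves \eqref{eq:G-exact-formula}.

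For the last claim, assume $m\ge\sigma_s$; then $m\ge c_s$, as noted in the proof of \cref{thm:G-stabilization}. Take $r=c_s$ and $B_j=\{p_j\}$, so that $\mu_s(\{p\})=a_p$. By \eqref{eq:all-prime-powers}, the corresponding term is
\[
A_s\primorial{c_s}\cdot\frac{\primorial{m}}{\primorial{c_s}}=A_s\primorial{m},
\]
which equals $G_s(m)$ by \cref{thm:G-stabilization}. For $s=2$ we have $c_s=0$, so only $r=0$ occurs and the formula reduces to $G_2(m)=\primorial{m}$.

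The only delicate point is the normalization step. Replacing $x_j$ by a prime divisor $p\ge s$, or by $\mu_s(B)$, must preserve pairwise coprimality; it does, because the new member's prime divisors are a subset of those of $x_j$. This is what lets the minimum range over all admissible $r$ and $B_j$ without loss.
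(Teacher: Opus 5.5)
Your proposal is correct and follows the paper's proof essentially step for step. It uses the same normalization of a minimizing family into primes at least $s$ and blocks $\mu_s(B_j)$ for the lower bound, the same explicit admissible family for the upper bound, and the same appeal to \cref{thm:G-stabilization} via \eqref{eq:all-prime-powers} for the minimizer, with your check that $m\ge\sigma_s$ forces $m\ge c_s$ (so that $r=c_s$ is an admissible index) being a detail the paper leaves implicit.
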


\begin{proof}
Choose pairwise coprime integers $x_1,\ldots,x_m\ge s$
whose product is $G_s(m)$.  As in the proof of
\cref{thm:G-stabilization}, any $x_j$ with a prime divisor
at least $s$ must equal that prime: otherwise, replacing
$x_j$ by the prime would decrease the product and preserve
admissibility.  Likewise, if all prime divisors of $x_j$
are below $s$, minimality forces $x_j=\mu_s(B)$, where
$B$ is its set of prime divisors.

After reordering, suppose the latter type occurs $r$ times,
with prime-divisor sets $B_1,\ldots,B_r$.  These sets are
nonempty, pairwise disjoint, and contained in
$\{p_1,\ldots,p_{c_s}\}$; in particular,
$0\le r\le\min\{m,c_s\}$.  The remaining $m-r$ factors
are distinct primes at least $s$.  Their product is at least
the product of the first $m-r$ primes at least $s$, namely
\[
\prod_{j=c_s+1}^{c_s+m-r}p_j
=\frac{\primorial{m+c_s-r}}{\primorial{c_s}}.
\]
Consequently,
\[
G_s(m)\ge
\left(\prod_{j=1}^{r}\mu_s(B_j)\right)
\frac{\primorial{m+c_s-r}}{\primorial{c_s}},
\]
which is at least the minimum on the right-hand side of
\eqref{eq:G-exact-formula}.

Conversely, fix any $r$ and $B_1,\ldots,B_r$ allowed
in the theorem.  The $m$ integers
\[
\mu_s(B_1),\ldots,\mu_s(B_r),
\quad p_{c_s+1},\ldots,p_{c_s+m-r}
\]
are all at least $s$ and are pairwise coprime: the
$B_j$ are disjoint and contain only primes below $s$,
while the remaining factors are distinct primes at least
$s$.  Their product equals the corresponding expression
in \eqref{eq:G-exact-formula}, proving the reverse
inequality.

Finally, let $m\ge\sigma_s$ and choose $r=c_s$ and
$B_j=\{p_j\}$.  Since $\mu_s(\{p_j\})=a_{p_j}$,
\eqref{eq:all-prime-powers} gives
\[
\left(\prod_{j=1}^{c_s}\mu_s(\{p_j\})\right)
\frac{\primorial{m}}{\primorial{c_s}}
=A_s\primorial{m}=G_s(m),
\]
where the last equality follows from
\cref{thm:G-stabilization}.  Hence this choice attains
the minimum.
\end{proof}

We illustrate the formula for small $s$, including the exact
threshold values.
For $s=3$, $\mu_3(\{2\})=4$, and hence
\[
G_3(1)=3,\qquad G_3(2)=12=2\primorial{2}.
\]
For $s=4$, the possible nonempty sets of small primes
have
\[
\mu_4(\{2\})=4,\qquad
\mu_4(\{3\})=9,\qquad
\mu_4(\{2,3\})=6,
\]
which yields
\[
G_4(1)=4,\quad G_4(2)=20,\quad
G_4(3)=140,\quad G_4(4)=1260.
\]
For $s=5$, the corresponding values are
\[
\mu_5(\{2\})=8,\qquad
\mu_5(\{3\})=9,\qquad
\mu_5(\{2,3\})=6,
\]
and thus
\[
\begin{aligned}
&G_5(1)=5,\quad G_5(2)=30,\quad G_5(3)=210,\\
&G_5(4)=2310,\quad G_5(5)=27720.
\end{aligned}
\]
In particular, $G_5(2)=6\cdot5<5\cdot7$: combining the
small primes $2$ and $3$ in the factor $6$ is better than
using only primes at least $5$.  The values at
$m=\sigma_3=2$, $m=\sigma_4=4$, and $m=\sigma_5=5$
agree with the closed form in \cref{thm:G-stabilization}.

\section{Asymptotics for Fixed \texorpdfstring{$k$ and $d$}{k and d}}
\label{sec:fixed-parameters}

Fix a prime power $q$ and integers $3\le k<d$, put $s=d+1-k$,
and let $n\to\infty$ with $k$ and $d$ fixed.  We determine the order
of $\log\ellmin(n,k,d;q)$.  The lower bound comes from counting
distinct subset least common multiples, while the upper bound uses
cover-free families.  All logarithms in this section are natural.

Let $\tau(M)$ denote the number of positive divisors of
a positive integer $M$. For $x\ge1$, define
\begin{equation}
\mathfrak D(x)=\min\{M\in\mathbb Z: M\ge1,\ \tau(M)\ge x\}.
\label{eq:inverse-divisor}
\end{equation}

\begin{lemma}
\label{lem:subset-lcm-count}
Let $(a_1,\ldots,a_N)$ be a feasible $(N,t,s)$-tuple, where
$s\ge2$ and $1\le t<N$, and put $M=\lcmop(a_1,\ldots,a_N)$.
Then
\[
\tau(M)\ge\sum_{r=0}^{t}\binom Nr.
\]
Consequently,
\[
\Delta_s(N,t)\ge
\mathfrak D\!\left(\sum_{r=0}^{t}\binom Nr\right).
\]
\end{lemma}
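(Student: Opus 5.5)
The plan is to show that the map $J\mapsto \lcmop(a_j:j\in J)$, restricted to subsets $J\subseteq[N]$ with $|J|\le t$, is injective. Here $J=\varnothing$ is sent to $1$. Every such value is a divisor of $M$, so injectivity gives at least $\sum_{r=0}^t\binom Nr$ distinct divisors of $M$. That is exactly $\tau(M)\ge\sum_{r\le t}\binom Nr$. The second assertion then follows at once: for every feasible tuple, $M=\lcmop(a_1,\ldots,a_N)$ satisfies $\tau(M)\ge x$ with $x=\sum_{r\le t}\binom Nr$. Hence $M\ge\mathfrak D(x)$ by the definition \eqref{eq:inverse-divisor}, and minimizing over feasible tuples gives the bound on $\Delta_s(N,t)$.

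The first step is to extend the feasibility inequality from $t$-subsets to all subsets of size at most $t$. Let $|J|\le t$ and $i\notin J$. Since $t<N$, we can enlarge $J$ to a $t$-subset $J'\supseteq J$ with $i\notin J'$. Write $L(X)=\lcmop(a_j:j\in X)$. Feasibility says $a_i\nmid L(J')$; indeed $L(J'\cup\{i\})/L(J')\ge s\ge2$ is equivalent to $a_i\nmid L(J')$. Since $L(J)\mid L(J')$, it follows that $a_i\nmid L(J)$. This yields the key property:
\[
a_i\nmid \lcmop(a_j:j\in J)\qquad\text{whenever } |J|\le t,\ i\notin J.
\]

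The second step proves injectivity. Suppose $J\ne J'$ have size at most $t$ and $L(J)=L(J')$. Without loss of generality choose $i\in J\setminus J'$ (swap the roles if needed). Then $a_i\mid L(J)=L(J')$, which contradicts the key property applied to $J'$ and $i\notin J'$. So distinct subsets give distinct lcm values, all dividing $M$.

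I expect no step to be a real obstacle. The only point needing care is the reduction from "ratio $\ge s$" to the non-divisibility $a_i\nmid L(J)$ for smaller $J$, which uses the monotonicity of the lcm under inclusion together with $t<N$ to allow the padding.
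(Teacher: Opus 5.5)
Your proof is correct and takes essentially the same route as the paper. Padding a subset of size at most $t$ to a $t$-subset that avoids $i$ shows that $a_i$ never divides the lcm over a set missing $i$, so $J\mapsto\lcmop(a_j:j\in J)$ is injective on subsets of size at most $t$; the paper phrases this as $L_A\mid L_B\Rightarrow A\subseteq B$, and your only slip is calling ``ratio $\ge s$'' \emph{equivalent} to $a_i\nmid L(J')$ when it merely implies it (for $s>2$), which is the only direction you use.
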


\begin{proof}
For $A\subseteq[N]$, write $L_A=\lcmop(a_i:i\in A)$, with
$L_\varnothing=1$.  We claim that for $|A|,|B|\le t$,
\[
L_A\mid L_B\quad\Longrightarrow\quad A\subseteq B.
\]
Otherwise, take $i\in A\setminus B$.  Since $|B|\le t<N$,
we may extend $B$ to a $t$-subset
$J\subseteq[N]\setminus\{i\}$.  Then
$a_i\mid L_A\mid L_B\mid L_J$, and hence
\[
\frac{L_{J\cup\{i\}}}{L_J}=1<s,
\]
contrary to feasibility.  In particular, $L_A=L_B$ implies
$A=B$ whenever $|A|,|B|\le t$.  These
$\sum_{r=0}^{t}\binom Nr$ distinct values are all divisors of $M$,
which proves the first assertion.  The second follows from
\eqref{eq:inverse-divisor} and the definition of $\Delta_s(N,t)$.
\end{proof}

To use this counting bound, we need the asymptotic growth of
$\mathfrak D(x)$.  The following lemma is a direct consequence of
Wigert's theorem on the maximal order of the divisor function.

\begin{lemma}
\label{lem:inverse-divisor-estimate}
As $x\to\infty$,
\begin{equation}
\log\mathfrak D(x)
=\left(\frac{1}{\log2}+o(1)\right)\log x\log\log x.
\label{eq:inverse-divisor-asymptotic}
\end{equation}
\end{lemma}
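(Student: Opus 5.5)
Both directions follow from Wigert's theorem, $\log\tau(M)\le(\log 2+o(1))\log M/\log\log M$, together with an explicit choice of $M$ with many divisors.

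\emph{Lower bound.} Put $y=\log\mathfrak D(x)$. Since $\tau(\mathfrak D(x))\ge x$, Wigert gives
\[
\log x\le(\log2+o(1))\,\frac{y}{\log y}.
\]
Here $y\to\infty$ because $\mathfrak D(x)\ge\tau(\mathfrak D(x))\ge x$. Write $X=\log x$. Then $y\log^{-1}y\ge(1/\log2-o(1))X$, so $y\ge X$ for large $x$, and hence $\log y\ge\log X$. Substituting back gives
\[
y\ge\left(\frac{1}{\log2}-o(1)\right)X\log y\ge\left(\frac{1}{\log2}-o(1)\right)\log x\log\log x.
\]

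\emph{Upper bound.} We construct $M$ explicitly. Take $M=\primorial{r}$ with $r=\lceil\log x/\log2\rceil$. Then $\tau(M)=2^r\ge x$, so $\mathfrak D(x)\le\primorial{r}$. By the prime number theorem, $\log\primorial{r}=(1+o(1))r\log r$. Moreover $r\log r=(1/\log2+o(1))\log x\log\log x$, since
\[
\log r=\log\log x-\log\log2+o(1)=(1+o(1))\log\log x.
\]
This matches the lower bound, which proves \eqref{eq:inverse-divisor-asymptotic}.

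The only delicate point is the bootstrapping step in the lower bound: inverting $y/\log y$ requires confirming that $\log y\sim\log\log x$. This holds because the same inequality forces $X\ll y\ll X\log X$, so $\log y=\log X+O(\log\log X)$.
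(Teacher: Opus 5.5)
Your argument is correct, but it is organized differently from the paper's. The paper uses Wigert's theorem as a two-sided asymptotic for $H(Y)=\max_{1\le M\le Y}\log\tau(M)$ and inverts it in one step. Via the monotone equivalence $H(Y)\ge\log x\iff\mathfrak D(x)\le Y$, it evaluates $H$ at the test points $Y_\pm=\exp\bigl((1\pm\varepsilon)\log x\log\log x/\log2\bigr)$ and obtains the sandwich $Y_-<\mathfrak D(x)\le Y_+$.

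You split the two directions instead. For the lower bound you apply only the upper half of Wigert's theorem, to the single integer $\mathfrak D(x)$, and invert $y/\log y$ by hand, using $\mathfrak D(x)\ge x$ to get $\log y\ge\log\log x$. For the upper bound you exhibit the explicit witness $\primorial{r}$ with $r=\lceil\log x/\log2\rceil$; this is essentially the construction that proves the lower half of Wigert's theorem.

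Your route is more self-contained. The upper bound needs only $\log\primorial{r}\le r\log p_r$ and $\log p_r=(1+o(1))\log r$, which already follows from Chebyshev-type bounds, and it uses the same primorials that appear elsewhere in the paper. The paper's route is shorter given the citation and handles both directions symmetrically, with no bootstrapping.

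One small correction to your closing remark. The inequality $\log x\le(\log2+o(1))\,y/\log y$ by itself only bounds $y$ from below, so it cannot force $y\ll X\log X$; that upper estimate comes from your primorial construction. It is not needed anyway, because your lower-bound argument uses only $\log y\ge\log X$, which you correctly derived from $y\ge X$.
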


\begin{proof}
For $Y\ge1$, let
\[
H(Y)=\max_{\substack{M\in\mathbb Z\\1\le M\le Y}}\log\tau(M).
\]
Wigert's theorem \cite{Wigert1907}, in the formulation given in
\cite{Elsholtz2019}, states that
\[
H(Y)=(\log2+o(1))\frac{\log Y}{\log\log Y}
\qquad(Y\to\infty).
\]
For $x\ge1$ and $Y\ge1$, the definitions yield the equivalence
\[
H(Y)\ge\log x
\quad\Longleftrightarrow\quad
\mathfrak D(x)\le Y.
\]
Fix $\varepsilon\in(0,1)$ and, for $x$ sufficiently large, set
\[
Y_\pm=
\exp\!\left(\frac{1\pm\varepsilon}{\log2}
\log x\log\log x\right).
\]
Since
\[
\log\log Y_\pm
=\log\log x+\log\log\log x+O_\varepsilon(1)
=(1+o(1))\log\log x,
\]
Wigert's theorem yields
\[
H(Y_\pm)=(1\pm\varepsilon+o(1))\log x.
\]
Thus, for all sufficiently large $x$,
\[
H(Y_-)<\log x<H(Y_+),
\]
and the preceding equivalence gives
\[
Y_-<\mathfrak D(x)\le Y_+.
\]
Taking logarithms and then letting $\varepsilon\to0$
proves \eqref{eq:inverse-divisor-asymptotic}.
\end{proof}

For the upper bound, recall that a family
$\mathcal A=\{A_1,\ldots,A_N\}\subseteq2^{[m]}$ is
\emph{$t$-cover-free} if
\[
A_i\not\subseteq\bigcup_{j\in J}A_j
\]
for every $i\in[N]$ and every $t$-subset
$J\subseteq[N]\setminus\{i\}$.
Let $\mcff(N,t)$ be the minimum possible size $m$ of the
ground set.

\begin{lemma}
\label{lem:CFF-upper}
For integers $s\ge2$ and $1\le t<N$,
\begin{equation}
\Delta_s(N,t)\le G_s(\mcff(N,t)).
\label{eq:Delta-CFF-upper}
\end{equation}
Consequently, for every prime power $q$ and
$3\le k<d\le n-1$, with $s=d+1-k$,
\begin{equation}
\ellmin(n,k,d;q)\le sG_s(\mcff(n,d)).
\label{eq:ell-CFF-upper}
\end{equation}
\end{lemma}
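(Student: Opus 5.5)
Take a $t$-cover-free family $\mathcal A=\{A_1,\ldots,A_N\}\subseteq2^{[m]}$ with $m=\mcff(N,t)$, together with pairwise coprime integers $g_1,\ldots,g_m\ge s$ satisfying $\prod_{u=1}^m g_u=G_s(m)$. Assign to coordinate $i$ the integer
\[
a_i=\prod_{u\in A_i}g_u .
\]
We will show that $(a_1,\ldots,a_N)$ is feasible for $(N,t,s)$. Since the $g_u$ are pairwise coprime, $\lcmop(a_1,\ldots,a_N)=\prod_{u\in\bigcup_i A_i}g_u\le G_s(m)$, so feasibility gives \eqref{eq:Delta-CFF-upper}.

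Because the $g_u$ are pairwise coprime, every subset $J\subseteq[N]$ satisfies
\[
\lcmop(a_j:j\in J)=\prod_{u\in\bigcup_{j\in J}A_j}g_u .
\]
Fix $i$ and a $t$-subset $J\subseteq[N]\setminus\{i\}$. The ratio in Definition~\ref{def:Delta} is then
\[
\frac{\lcmop(a_j:j\in J\cup\{i\})}{\lcmop(a_j:j\in J)}
=\prod_{u\in A_i\setminus\bigcup_{j\in J}A_j}g_u .
\]
Cover-freeness makes the index set $A_i\setminus\bigcup_{j\in J}A_j$ nonempty. The product therefore contains at least one factor $g_u\ge s$, so the ratio is at least $s$.

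The one degenerate point is an empty $A_i$, which would give $a_i=1$. This cannot occur, since cover-freeness with $t\ge1$ forbids $A_i=\varnothing$. If $\bigcup_i A_i\neq[m]$, the lcm is only smaller, so the inequality still holds.

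For \eqref{eq:ell-CFF-upper}, apply Theorem~\ref{thm:degree-construction} with $(N,t)=(n,d)$. It gives $\ellmin(n,k,d;q)\le s\Delta_s(n,d)$, and combining this with \eqref{eq:Delta-CFF-upper} yields the claim.

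No step presents a real obstacle. The only care needed is the coprimality bookkeeping that turns lcms into products over unions of index sets.
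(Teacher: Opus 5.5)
Your proposal is correct and follows the paper's proof essentially step for step. You use the same weighting $a_i=\prod_{u\in A_i}g_u$ by pairwise coprime $g_u\ge s$ attaining $G_s(m)$, the same lcm-to-product-over-union computation showing that cover-freeness makes each ratio at least $s$, and the same appeal to Theorem~\ref{thm:degree-construction} for \eqref{eq:ell-CFF-upper}. Your extra remarks on empty $A_i$ and on $\bigcup_i A_i\ne[m]$ are harmless but not needed, since the ratio argument already covers both cases.
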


\begin{proof}
Choose a $t$-cover-free family
$\mathcal A=\{A_1,\ldots,A_N\}$ on $[m]$ and pairwise
coprime integers $g_1,\ldots,g_m\ge s$ such that
$\prod_{r=1}^{m}g_r=G_s(m)$.  Set
\[
a_i=\prod_{r\in A_i}g_r,
\qquad i\in[N].
\]
For any $i\notin J$ with $|J|=t$, pairwise coprimality implies
\[
\frac{\lcmop(a_j:j\in J\cup\{i\})}
     {\lcmop(a_j:j\in J)}
=
\prod_{r\in A_i\setminus\bigcup_{j\in J}A_j}g_r\ge s,
\]
since the set in the last product is nonempty by the
cover-free property.  Thus $(a_1,\ldots,a_N)$ is feasible.
Moreover,
\[
\lcmop(a_1,\ldots,a_N)
=\prod_{r\in\bigcup_{i=1}^{N}A_i}g_r
\le G_s(m).
\]
Taking $m=\mcff(N,t)$ proves \eqref{eq:Delta-CFF-upper}.
The bound \eqref{eq:ell-CFF-upper} follows from
\cref{thm:degree-construction}.
\end{proof}

Cover-free families are the set-system counterpart of disjunct
matrices in nonadaptive group testing and are closely related to
superimposed codes \cite{KautzSingleton1964}.
Porat and Rothschild \cite{PoratRothschild2011} gave explicit
$t$-disjunct constructions with $O(t^2\log N)$ rows.  In
particular, for fixed $t$,
\begin{equation}
\mcff(N,t)=O_t(\log N).
\label{eq:CFF-fixed-t}
\end{equation}

\begin{theorem}
\label{thm:fixed-kd-asymptotic}
Fix a prime power $q$ and integers $3\le k<d$.  Put
$s=d+1-k$ and let $n\to\infty$.
There is a constant $C_d>0$ depending only on $d$ such that
\begin{equation}
\left(\frac{d-3}{\log2}+o(1)\right)\log n\log\log n
\le\log\ellmin(n,k,d;q)\le(C_d+o(1))\log n\log\log n.
\label{eq:fixed-kd-two-sided}
\end{equation}
In particular,
\[
\log\ellmin(n,k,d;q)
=\Theta_{k,d}(\log n\log\log n).
\]
\end{theorem}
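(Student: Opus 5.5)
The proof has two independent halves. The lower bound comes from a feasible tuple built from a single fixed projective normalization, fed into \cref{lem:subset-lcm-count,lem:inverse-divisor-estimate}. The upper bound comes from \cref{lem:CFF-upper} together with \eqref{eq:CFF-fixed-t} and \eqref{eq:degree-prime-list}.

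\emph{Lower bound.} Since $k\ge3$ and $d>k$, we have $d\ge4$. Fix three evaluation points $a,b,c\in\Omega$ and put $\phi=\phi_{a,b,c}$. For each of the remaining $N=n-3$ points $y$, set $a_y=[F(\phi(y)):F]$. Let $Y\subseteq\Omega$ be any subset containing $\{a,b,c\}$. By \cref{lem:three-anchor},
\[
\crf_F(Y)=F\bigl(\phi(y):y\in Y\setminus\{a,b,c\}\bigr),
\]
so by the compositum formula $m(Y)=\lcmop(a_y:y\in Y\setminus\{a,b,c\})$. This bypasses the overlap condition $2d\ge n+3$ that \cref{thm:two-d-lower} needed: every set in play shares the same three anchors.

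Now let $t=d-3\ge1$, let $J$ be a $t$-subset of the non-anchor points, and let $x\notin J$ be another non-anchor point. Then $H=\{a,b,c\}\cup J$ is a $d$-helper set not containing $x$. \cref{prop:global-crossratio} gives $[\crf_F(H\cup\{x\}):\crf_F(H)]\ge s$, which in the lcm form above says precisely that $(a_y)$ is feasible for $(n-3,d-3,s)$. Moreover $\lcmop(a_y)=m(\Omega)$, and $m(\Omega)\mid\ell/s$ by \cref{prop:global-crossratio}. Hence
\[
\ell\ge s\Delta_s(n-3,d-3)\ge s\,\mathfrak D\Bigl(\sum_{r\le d-3}\tbinom{n-3}{r}\Bigr).
\]
The argument $x$ of $\mathfrak D$ satisfies $\log x=(d-3+o(1))\log n$ and $\log\log x=(1+o(1))\log\log n$. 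Applying \cref{lem:inverse-divisor-estimate} then yields the coefficient $(d-3)/\log2$.

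\emph{Upper bound.} By \eqref{eq:ell-CFF-upper}, $\ellmin\le sG_s(m)$ with $m=\mcff(n,d)$, and by \eqref{eq:CFF-fixed-t}, $m\le C'_d\log n$ for large $n$. \eqref{eq:degree-prime-list} gives $G_s(m)\le\primorial{m+c_s}$. Since $c_s\le d$ and $G_s$ is increasing in $m$, the prime number theorem gives
\[
\log\ellmin\le\log s+(1+o(1))(C'_d\log n)\log\log n .
\]
One may therefore take $C_d=C'_d$, which depends only on $d$. The two bounds together give the $\Theta_{k,d}(\log n\log\log n)$ statement.

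The main obstacle is the lower-bound step showing that all relevant cross-ratio degrees form one lcm system without the $2d\ge n+3$ overlap hypothesis. I expect the anchoring trick via \cref{lem:three-anchor} to resolve it, at the cost of reducing $t$ from $d$ to $d-3$, which is exactly the leading constant claimed. The remaining asymptotic bookkeeping is routine.
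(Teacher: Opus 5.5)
Your proposal is correct and follows the paper's proof essentially step for step. For the lower bound, the paper uses the same three-anchor normalization from \cref{lem:three-anchor} to obtain a feasible $(n-3,d-3,s)$-tuple with lcm $[\crf_F(\Omega):F]$, then applies \cref{lem:subset-lcm-count,lem:inverse-divisor-estimate}. For the upper bound, it likewise combines \cref{lem:CFF-upper} with $\mcff(n,d)=O_d(\log n)$. The only cosmetic difference is that you bound $G_s(m)$ through \eqref{eq:degree-prime-list} and the prime number theorem rather than citing \cref{cor:G-asymptotic}; this works equally well because $c_s\le d$ is fixed.
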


\begin{proof}
For the lower bound, consider any RS code satisfying the
repair requirement.  Fix three distinct anchors $a,b,c\in\Omega$,
write the remaining evaluation points as
$\omega_1,\ldots,\omega_{n-3}$, and let $\phi$ be the
projective normalization from \cref{lem:three-anchor}.  Put
\[
x_i=\phi(\omega_i),\qquad e_i=[F(x_i):F],
\qquad t=d-3.
\]
For any $i\in[n-3]$ and any $t$-subset
$J\subseteq[n-3]\setminus\{i\}$, choose the helper set
\[
H=\{a,b,c\}\cup\{\omega_j:j\in J\}.
\]
By \cref{lem:three-anchor} and the finite-field degree formula,
\[
\crf_F(H)=F(x_j:j\in J),\qquad
\crf_F(H\cup\{\omega_i\})=F(x_i,x_j:j\in J).
\]
Hence \cref{prop:global-crossratio} gives
\[
\frac{\lcmop(e_j:j\in J\cup\{i\})}
     {\lcmop(e_j:j\in J)}
=[\crf_F(H\cup\{\omega_i\}):\crf_F(H)]\ge s.
\]
Thus $(e_1,\ldots,e_{n-3})$ is a feasible
$(n-3,d-3,s)$-tuple.  The three-anchor identity also gives
\[
M:=\lcmop(e_1,\ldots,e_{n-3})
=[\crf_F(\Omega):F]\mid\frac{\ell}{s},
\]
where the divisibility is from \cref{prop:global-crossratio}.
Applying \cref{lem:subset-lcm-count} yields
\[
\ell\ge sM\ge s\mathfrak D(X_n),
\qquad
X_n:=\sum_{r=0}^{d-3}\binom{n-3}{r}.
\]
Since $d$ is fixed,
\[
X_n=\left(\frac{1}{(d-3)!}+o(1)\right)n^{d-3},
\]
so that
\[
\log X_n=(d-3)\log n+O(1),\qquad
\log\log X_n=(1+o(1))\log\log n.
\]
By \cref{lem:inverse-divisor-estimate},
\[
\log\mathfrak D(X_n)
=\left(\frac{d-3}{\log2}+o(1)\right)
\log n\log\log n.
\]
This proves the lower bound in \eqref{eq:fixed-kd-two-sided}.

For the upper bound, \eqref{eq:CFF-fixed-t} ensures the existence of a constant
$C_d>0$ for which $\mcff(n,d)\le C_d\log n$ for all
sufficiently large $n$.  Set $m_n=\lceil C_d\log n\rceil$.
The function $G_s(m)$ is increasing: deleting one factor from
a minimizing $(m+1)$-tuple shows that
$G_s(m+1)\ge sG_s(m)$.  Therefore, by
\cref{lem:CFF-upper},
\[
\ellmin(n,k,d;q)
\le sG_s(\mcff(n,d))\le sG_s(m_n).
\]
Since $s$ is fixed and $m_n=(C_d+o(1))\log n$,
\cref{cor:G-asymptotic} yields
\[
\log\ellmin(n,k,d;q)
\le(C_d+o(1))\log n\log\log n.
\]
This proves the upper bound.
\end{proof}

The order of $\log\ellmin(n,k,d;q)$ is thus determined for
fixed $k$ and $d$; the optimal leading constant remains open.

\section{Conclusion}
\label{sec:conclusion}

In this paper, we studied the minimum subpacketization of Reed-Solomon
codes admitting cut-set-optimal exact $F$-linear repair of every
coordinate from any $d$ helper nodes.  By translating the repair
requirements into degree constraints on cross-ratio fields and
combining the resulting lower bounds with a basis-transformation
construction, we established
\[
\ellmin(n,k,d;q)=s\Delta_s(n,d)=sG_s(n)
\]
whenever $d(d+2)\geq 4(n+1)$.  In particular, this determines the
minimum subpacketization in the full-helper case $d=n-1$ with
$3\le k<n-1$, resolving the corresponding exact linear-repair case of
Open Problem~20 of Ramkumar \emph{et al.}
\cite[Open Problem~20]{RamkumarEtAl2022}.  We also obtained explicit
formulas for $G_s(n)$ and determined the asymptotic order of
$\log\ellmin(n,k,d;q)$ for fixed $k$ and $d$.

The expression $sG_s(n)$, however, is not valid for all parameters.
The affine-plane construction in Appendix~\ref{app:affine-obstruction}
provides admissible parameters for which
\[
\ellmin(n,k,d;q)\le s\Delta_s(n,d)<sG_s(n).
\]
Thus, although $sG_s(n)$ is exact in the range above,
$s\Delta_s(n,d)$ is the natural candidate for a general formula.
This motivates the following conjecture.

\begin{conjecture}
For every prime power $q$ and all integers $3\le k<d\le n-1$,
with $s=d+1-k$,
\[
\ellmin(n,k,d;q)=s\Delta_s(n,d).
\]
\end{conjecture}

Our main result proves this conjecture whenever
$d(d+2)\geq 4(n+1)$.  Since the matching upper bound holds for all
admissible parameters, it remains to establish
$\ellmin(n,k,d;q)\ge s\Delta_s(n,d)$ outside this range.

\section*{Acknowledgment}
We used OpenAI's ChatGPT (GPT-6) to assist with polishing the exposition throughout the manuscript and checking some of the main proofs. The authors take full responsibility for the correctness, originality, and proper attribution of all content in this manuscript.

\appendices
\section{Proof of Lemma~\ref{lem:elementary-prime-product}}
\label{app:degree-arithmetic}

\begin{proof}
We first establish
\begin{equation}
\label{eq:auxiliary-prime-products}
p_{4t^2}<\primorial{t+1},
\qquad
p_{4t^2}^{\,2}<\primorial{2t}
\qquad(t\ge3).
\end{equation}
For $t=3$, these inequalities follow from
$p_{36}=151<210=\primorial{4}$ and
$151^2=22801<30030=\primorial{6}$.
It remains to consider $t\ge4$. All logarithms in this proof are natural.

For an even integer $x=2m\ge4$, put $B_x=\binom{2m}{m}$.
The central binomial coefficient exceeds the average of all
$x+1$ binomial coefficients, so $B_x>2^x/(x+1)$.
For each prime $p$, the factorial valuation formula yields
\[
v_p(B_x)=\sum_{j\ge1}
\left(\left\lfloor\frac{x}{p^j}\right\rfloor
      -2\left\lfloor\frac{x/2}{p^j}\right\rfloor\right).
\]
Each summand is $0$ or $1$ and vanishes when $p^j>x$.
Thus $p^{v_p(B_x)}\le x$ for every prime $p$, and
$B_x\le x^{\pi(x)}$. Taking logarithms and using
$\log2>2/3$ and $x+1<x^2$, we obtain
\begin{equation}
\label{eq:elementary-pi-bound}
\pi(x)>
\frac{x\log2-\log(x+1)}{\log x}
>\frac{2x}{3\log x}-2.
\end{equation}

Set $x_t=(t+2)!$ and $F_t=2x_t/(3\log x_t)$.
Since $\log720<7$, we have $F_4>480/7>66$.
For $t\ge4$, the inequality $x_t\ge t+3$ implies
\[
\frac{F_{t+1}}{F_t}
=\frac{t+3}{1+\log(t+3)/\log x_t}
\ge\frac{t+3}{2}\ge\frac72.
\]
Moreover,
\[
\frac72(4t^2+2)-\bigl(4(t+1)^2+2\bigr)
=10t^2-8t+1>0.
\]
Induction therefore proves $F_t>4t^2+2$ for every $t\ge4$.
Applying \eqref{eq:elementary-pi-bound} to $x_t$ yields
$\pi(x_t)>4t^2$, and hence
\[
p_{4t^2}<x_t=(t+2)!\le\primorial{t+1},
\]
where the last inequality follows from $p_j\ge j+1$.

To prove the second inequality in
\eqref{eq:auxiliary-prime-products}, put
\[
R_t=\frac{\primorial{2t}}{(\primorial{t+1})^2}.
\]
We have $R_4=4199/2310>1$, and, for $t\ge4$,
\[
\frac{R_{t+1}}{R_t}
=\frac{p_{2t+1}p_{2t+2}}{p_{t+2}^2}>1.
\]
Thus $R_t\ge R_4>1$, so
\[
p_{4t^2}^{\,2}<(\primorial{t+1})^2<\primorial{2t}.
\]
Together with the case $t=3$, this proves
\eqref{eq:auxiliary-prime-products}.

Now let $s,n,t$ satisfy the hypotheses of the lemma and set
$x=p_{n+c_s}$. Since $n+c_s\le4t^2$, we have $x\le p_{4t^2}$.
By \eqref{eq:degree-completion},
\[
G_s(n)\le G_s(n-1)x
<G_s(n-1)\primorial{t+1},
\]
which proves \eqref{eq:completion-one}.
The sequence $(\primorial{m})^{1/m}$ is increasing, since
$p_{m+1}>(\primorial{m})^{1/m}$. Therefore, for $2\le b\le n$,
\[
\primorial{bt}\ge(\primorial{2t})^{b/2}
>p_{4t^2}^{\,b}\ge x^b.
\]
Combining this with \eqref{eq:degree-completion} proves
\eqref{eq:completion-many}.
\end{proof}

\section{Strict Inequality Between
\texorpdfstring{$\Delta_s(n,d)$}{Delta\_s(n,d)} and
\texorpdfstring{$G_s(n)$}{G\_s(n)}}
\label{app:affine-obstruction}

An affine-plane construction yields feasible degree tuples with
$\Delta_s(n,d)<G_s(n)$ for a family of parameters.
Together with \cref{thm:degree-construction}, this shows that
$\ellmin(n,k,d;q)=sG_s(n)$ is not valid for all admissible parameters.

\begin{proposition}
\label{prop:affine-plane-obstruction}
Let $T\ge2$ be a prime power, and let $n,d,s$ be integers with
\[
n\ge T^2+1,\qquad 1\le d\le T-1,\qquad s\ge2.
\]
Put $v=\max\{T^2,n-T\}$. Then
\begin{equation}
\label{eq:affine-obstruction-arithmetic}
\Delta_s(n,d)\le G_s(v)
\le s^{-(n-v)}G_s(n)<G_s(n).
\end{equation}
If additionally $4\le d\le T-1$ and $2\le s\le d-2$,
then, with $k=d+1-s$, for every prime power $q$,
\begin{equation}
\label{eq:affine-obstruction-coding}
\ellmin(n,k,d;q)\le s\Delta_s(n,d)<sG_s(n).
\end{equation}
Here $T$ need not be related to the repair base-field size $q$.
\end{proposition}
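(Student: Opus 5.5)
The plan is to exhibit an explicit $d$-cover-free family on a ground set of size $v$ with $n$ members, built from the affine plane $\mathrm{AG}(2,T)$, and then feed it into the argument of \cref{lem:CFF-upper}.

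\textbf{Cover-free property of lines.} The affine plane over $\F_T$ has $T^2$ points and $T^2+T$ lines. Each line has exactly $T$ points, and two distinct lines meet in at most one point. Hence for any line $A_i$ and any $d\le T-1$ other lines $A_j$, $j\in J$, we get
\[
\Bigl|A_i\cap\bigcup_{j\in J}A_j\Bigr|\le d<T=|A_i|,
\]
so $A_i\not\subseteq\bigcup_{j\in J}A_j$. Thus the lines form a $d$-cover-free family.

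\textbf{The family of size $n$.} I split into two cases.
\begin{itemize}
\item If $T^2+1\le n\le T^2+T$, take any $n$ lines. The ground set is the $T^2$ points, and $v=T^2$ because $n-T\le T^2$.
\item If $n>T^2+T$, take all $T^2+T$ lines together with $n-T^2-T$ singletons $\{e_r\}$ on new ground elements $e_r$. The ground set then has size $T^2+(n-T^2-T)=n-T=v$.
\end{itemize}
The cover-free property survives the singletons. A singleton $\{e_r\}$ is never covered by other members, since no other member contains $e_r$. A line is never covered, since the singletons contain no plane points and at most $d$ lines are used.

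\textbf{First inequality.} I would not pass through $\mcff$. Instead I repeat the proof of \cref{lem:CFF-upper} verbatim with this family and a minimizing coprime tuple $g_1,\ldots,g_v\ge s$ with product $G_s(v)$. This gives a feasible $(n,d,s)$-tuple with lcm at most $G_s(v)$, hence $\Delta_s(n,d)\le G_s(v)$.

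\textbf{Remaining inequalities.} As noted in the proof of \cref{thm:fixed-kd-asymptotic}, deleting one factor of a minimizing $(m+1)$-tuple gives $G_s(m+1)\ge sG_s(m)$. Iterating yields $G_s(n)\ge s^{\,n-v}G_s(v)$. Finally $n-v\ge1$, because $n\ge T^2+1>T^2$ and $n-T<n$. Since $s\ge2$, we get $s^{-(n-v)}G_s(n)<G_s(n)$, which completes \eqref{eq:affine-obstruction-arithmetic}.

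\textbf{Coding statement.} Set $k=d+1-s$. Then $s\le d-2$ gives $k\ge3$, and $s\ge2$ gives $k<d$. Also $d\le T-1<n-1$. So \cref{thm:degree-construction} applies and gives $\ellmin(n,k,d;q)\le s\Delta_s(n,d)$ for every prime power $q$. Multiplying \eqref{eq:affine-obstruction-arithmetic} by $s$ then yields \eqref{eq:affine-obstruction-coding}.

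The only delicate step is choosing the family so that the ground-set size is exactly $v$ in both regimes, i.e.\ the singleton padding when $n>T^2+T$. Everything else is bookkeeping with earlier lemmas.
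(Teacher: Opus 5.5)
Your proposal is correct and follows essentially the same route as the paper. Both arguments build a $d$-cover-free family from affine-plane lines padded with singletons on fresh points, then weight the ground set by a minimizing coprime tuple for $G_s(v)$, use $G_s(n)\ge s^{n-v}G_s(v)$, and finish with \cref{thm:degree-construction}. The only difference is cosmetic: instead of your two-case split, the paper uses one uniform family (all $T^2$ nonvertical lines, $\min\{T,n-T^2\}$ vertical lines, and singletons for the rest), and your bound $\lcmop\le G_s(v)$ holds either way.
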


\begin{proof}
In the affine plane $\mathbb F_T^2$, consider the $T^2$
nonvertical lines
\[
L_{a,b}=\{(x,ax+b):x\in\mathbb F_T\},
\qquad a,b\in\mathbb F_T,
\]
and the $T$ vertical lines
\[
V_c=\{(c,y):y\in\mathbb F_T\},\qquad c\in\mathbb F_T.
\]
Each line has $T$ points, and two distinct lines meet in at
most one point.

Put $h=\min\{T,n-T^2\}$ and $z=n-T^2-h$.
Take all $T^2$ nonvertical lines, any $h$ vertical lines, and
$z$ singleton sets on distinct new points outside $\mathbb F_T^2$.
These sets form a family $\mathcal A$ of exactly
$T^2+h+z=n$ members. Since the nonvertical lines cover $\mathbb F_T^2$,
the ground set $\mathcal U=\bigcup_{A\in\mathcal A}A$ satisfies
\[
|\mathcal U|=T^2+z=n-h=\max\{T^2,n-T\}=v.
\]

For each line $A\in\mathcal A$, any $d$ other members cover
at most $d<T$ of its points. Each singleton contains a point
belonging to no other member. Hence, for every $A\in\mathcal A$
and every $\mathcal H\subseteq\mathcal A\setminus\{A\}$ with
$|\mathcal H|=d$,
\[
A\setminus\bigcup_{B\in\mathcal H}B\ne\varnothing.
\]

Choose pairwise coprime integers $\gamma_u\ge s$
($u\in\mathcal U$) with
$\prod_{u\in\mathcal U}\gamma_u=G_s(v)$, and define
\[
a_A=\prod_{u\in A}\gamma_u\qquad(A\in\mathcal A).
\]
Since the weights are pairwise coprime, the least common
multiple over any subfamily equals the product of the weights
on its union. Thus
\[
\frac{\lcmop(a_B:B\in\mathcal H\cup\{A\})}
     {\lcmop(a_B:B\in\mathcal H)}
=\prod_{u\in A\setminus\bigcup_{B\in\mathcal H}B}\gamma_u
\ge s.
\]
Therefore $(a_A)_{A\in\mathcal A}$ is feasible for $(n,d,s)$.
Because $\mathcal A$ covers $\mathcal U$,
\[
\Delta_s(n,d)\le\lcmop(a_A:A\in\mathcal A)
=\prod_{u\in\mathcal U}\gamma_u=G_s(v).
\]

In a minimizing tuple for $G_s(n)$, any $v$ entries have
product at least $G_s(v)$, while each remaining entry is
at least $s$. Consequently,
\[
G_s(n)\ge s^{n-v}G_s(v).
\]
Since $n-v=h\ge1$ and $s\ge2$, we have $s^{-(n-v)}<1$.
This proves \eqref{eq:affine-obstruction-arithmetic}.

Under the additional assumptions,
$3\le k=d+1-s\le d-1$ and $d\le T-1<n$.
The coding bound then follows from
\cref{thm:degree-construction}.
\end{proof}


\begin{thebibliography}{99}
\bibitem{Dimakis2010}
A.~G. Dimakis, P.~B. Godfrey, Y.~Wu, M.~J. Wainwright, and K.~Ramchandran,
``Network coding for distributed storage systems,''
\emph{IEEE Trans. Inf. Theory}, vol.~56, no.~9, pp.~4539--4551, Sep.~2010.

\bibitem{RashmiShahKumar2011}
K.~V. Rashmi, N.~B. Shah, and P.~V. Kumar,
``Optimal exact-regenerating codes for distributed storage at the MSR and MBR points via a product-matrix construction,''
\emph{IEEE Trans. Inf. Theory}, vol.~57, no.~8, pp.~5227--5239, Aug.~2011.

\bibitem{YeBarg2017}
M.~Ye and A.~Barg,
``Explicit constructions of high-rate MDS array codes with optimal repair bandwidth,''
\emph{IEEE Trans. Inf. Theory}, vol.~63, no.~4, pp.~2001--2014, Apr.~2017.

\bibitem{GoparajuTamoCalderbank2014}
S.~Goparaju, I.~Tamo, and R.~Calderbank,
``An improved sub-packetization bound for minimum storage regenerating codes,''
\emph{IEEE Trans. Inf. Theory}, vol.~60, no.~5, pp.~2770--2779, May~2014.

\bibitem{AlrabiahGuruswami2021}
O.~Alrabiah and V.~Guruswami,
``An exponential lower bound on the sub-packetization of minimum storage regenerating codes,''
\emph{IEEE Trans. Inf. Theory}, vol.~67, no.~12, pp.~8086--8093, Dec.~2021.

\bibitem{BalajiVajhaKumar2022}
S.~B. Balaji, M.~Vajha, and P.~Vijay Kumar,
``Lower bounds on the sub-packetization level of MSR codes and characterizing optimal-access MSR codes achieving the bound,''
\emph{IEEE Trans. Inf. Theory}, vol.~68, no.~10, pp.~6452--6471, Oct.~2022.

\bibitem{LiuLiTang2018}
Y.~Liu, J.~Li, and X.~Tang,
``Explicit constructions of high-rate MSR codes with optimal access property over small finite fields,''
\emph{IEEE Trans. Commun.}, vol.~66, no.~10, pp.~4405--4413, Oct.~2018.

\bibitem{LiTangTian2018}
J.~Li, X.~Tang, and C.~Tian,
``A generic transformation to enable optimal repair in MDS codes for distributed storage systems,''
\emph{IEEE Trans. Inf. Theory}, vol.~64, no.~9, pp.~6257--6267, Sep.~2018.

\bibitem{LiuLiTang2023}
Y.~Liu, J.~Li, and X.~Tang,
``A generic transformation to enable optimal repair/access MDS array codes with multiple repair degrees,''
\emph{IEEE Trans. Inf. Theory}, vol.~69, no.~7, pp.~4407--4428, Jul.~2023.

\bibitem{WangLiHuYe2023}
N.~Wang, G.~Li, S.~Hu, and M.~Ye,
``Constructing MSR codes with subpacketization $2^{n/3}$ for $k+1$ helper nodes,''
\emph{IEEE Trans. Inf. Theory}, vol.~69, no.~6, pp.~3775--3792, Jun.~2023.

\bibitem{LiWangHuYe2024}
G.~Li, N.~Wang, S.~Hu, and M.~Ye,
``MSR codes with linear field size and smallest sub-packetization for any number of helper nodes,''
\emph{IEEE Trans. Inf. Theory}, vol.~70, no.~11, pp.~7790--7806, Nov.~2024.

\bibitem{ReedSolomon1960}
I.~S. Reed and G.~Solomon,
``Polynomial codes over certain finite fields,''
\emph{J. Soc. Ind. Appl. Math.}, vol.~8, no.~2, pp.~300--304, Jun.~1960.

\bibitem{ShanmugamEtAl2014}
K.~Shanmugam, D.~S. Papailiopoulos, A.~G. Dimakis, and G.~Caire,
``A repair framework for scalar MDS codes,''
\emph{IEEE J. Sel. Areas Commun.}, vol.~32, no.~5, pp.~998--1007, May~2014.

\bibitem{GW2017}
V.~Guruswami and M.~Wootters,
``Repairing Reed-Solomon codes,''
\emph{IEEE Trans. Inf. Theory}, vol.~63, no.~9, pp.~5684--5698, Sep.~2017.

\bibitem{DauMilenkovic2017}
H.~Dau and O.~Milenkovic,
``Optimal repair schemes for some families of full-length Reed-Solomon codes,''
in \emph{Proc. IEEE Int. Symp. Inf. Theory (ISIT)}, Aachen, Germany,
Jun.~2017, pp.~346--350.

\bibitem{DauEtAl2021}
S.~H.~Dau, T.~X. Dinh, H.~M. Kiah, T.~T. Luong, and O.~Milenkovic,
``Repairing Reed-Solomon codes via subspace polynomials,''
\emph{IEEE Trans. Inf. Theory}, vol.~67, no.~10, pp.~6395--6407, Oct.~2021.

\bibitem{LiWangJafarkhani2019}
W.~Li, Z.~Wang, and H.~Jafarkhani,
``On the sub-packetization size and the repair bandwidth of Reed-Solomon codes,''
\emph{IEEE Trans. Inf. Theory}, vol.~65, no.~9, pp.~5484--5502, Sep.~2019.

\bibitem{TYB2017}
I.~Tamo, M.~Ye, and A.~Barg,
``Optimal repair of Reed-Solomon codes: Achieving the cut-set bound,''
in \emph{Proc. 58th Annu. IEEE Symp. Found. Comput. Sci. (FOCS)},
Berkeley, CA, USA, Oct.~2017, pp.~216--227.

\bibitem{TYB2019}
I.~Tamo, M.~Ye, and A.~Barg,
``The repair problem for Reed-Solomon codes: Optimal repair of single and multiple erasures with almost optimal node size,''
\emph{IEEE Trans. Inf. Theory}, vol.~65, no.~5, pp.~2673--2695, May~2019.

\bibitem{ChenYeBarg2020}
Z.~Chen, M.~Ye, and A.~Barg,
``Enabling optimal access and error correction for the repair of Reed-Solomon codes,''
\emph{IEEE Trans. Inf. Theory}, vol.~66, no.~12, pp.~7439--7456, Dec.~2020.

\bibitem{GuruswamiJiang2019}
V.~Guruswami and H.~Jiang,
``Near-optimal repair of Reed-Solomon codes with low subpacketization,''
in \emph{Proc. IEEE Int. Symp. Inf. Theory (ISIT)}, Paris, France, Jul.~2019, pp.~1077--1081.

\bibitem{Qiu2026}
J.~Qiu, W.~Fang, S.-T.~Xia, and F.-W.~Fu,
``Reed-Solomon codes with optimal repair bandwidth: A basis-transformation approach,''
\emph{IEEE Trans. Inf. Theory}, early access, 2026, DOI: 10.1109/TIT.2026.3745385.

\bibitem{RamkumarEtAl2022}
V.~Ramkumar, S.~B. Balaji, B.~Sasidharan, M.~Vajha,
M.~N. Krishnan, and P.~V. Kumar,
``Codes for distributed storage,''
\emph{Found. Trends Commun. Inf. Theory}, vol.~19, no.~4,
pp.~547--813, 2022.

\bibitem{LidlNiederreiter1997}
R.~Lidl and H.~Niederreiter,
\emph{Finite Fields}, 2nd~ed.
Cambridge, U.K.: Cambridge Univ. Press, 1997.

\bibitem{HardyWright2008}
G.~H. Hardy and E.~M. Wright,
\emph{An Introduction to the Theory of Numbers}, 6th~ed., revised by D.~R. Heath-Brown and J.~H. Silverman.
Oxford, U.K.: Oxford Univ. Press, 2008.

\bibitem{Wigert1907}
S.~Wigert,
``Sur l'ordre de grandeur du nombre des diviseurs d'un entier,''
\emph{Ark. Mat. Astron. Fys.}, vol.~3, no.~18, pp.~1--9, 1907.

\bibitem{Elsholtz2019}
C.~Elsholtz, M.~Technau, and N.~Technau,
``The maximal order of iterated multiplicative functions,''
\emph{Mathematika}, vol.~65, no.~4, pp.~990--1009, 2019.

\bibitem{KautzSingleton1964}
W.~H. Kautz and R.~C. Singleton,
``Nonrandom binary superimposed codes,''
\emph{IEEE Trans. Inf. Theory}, vol.~10, no.~4, pp.~363--377, Oct.~1964.

\bibitem{PoratRothschild2011}
E.~Porat and A.~Rothschild,
``Explicit nonadaptive combinatorial group testing schemes,''
\emph{IEEE Trans. Inf. Theory}, vol.~57, no.~12, pp.~7982--7989, Dec.~2011.

\bibitem{RosserSchoenfeld1962}
J.~B. Rosser and L.~Schoenfeld,
``Approximate formulas for some functions of prime numbers,''
\emph{Illinois J. Math.}, vol.~6, no.~1, pp.~64--94, Mar.~1962.
\end{thebibliography}
\end{document}